\let\proof\relax\let\endproof\relax
\tikzstyle{point}=[circle,  inner sep=2pt, fill]
\newenvironment{proofs}{%
  \proof}{\endproof}
\newif\ifdraft\draftfalse
\newif\ifarxiv\arxivtrue
\newcommand{\bools}{\mathbb{B}}
\newcommand{\setn}[1]{\mathbf{#1}}
\newcommand{\defeq}{\coloneqq}
\newcommand{\supp}{\mathrm{supp}}
\newcommand{\nat}{\mathbb{N}}
\newcommand{\real}{\mathbb{R}}
\newcommand{\nonnegreal}{\mathbb{R}_{\geq 0}}
\newcommand{\nonnegrat}{\mathbb{Q}_{\geq 0}}
\newcommand{\pfunc}{{\mathcal{P}}}
\newcommand{\powfunc}{{\mathcal{P}_\mathrm{f}}}
\newcommand{\mcfunc}[1]{\mathcal{D}(#1 +\{\checkmark\})\times A}
\newcommand{\mcfuncproduct}[1]{\subdist(#1 +\{\checkmark\})}
\newcommand{\dfafunc}[1]{(#1\times \bools)^{A}}
\newcommand{\subdist}{\mathcal{D}_{\leq 1}}
\newcommand{\subdistcount}{\mathcal{D}_{\leq 1,\mathrm{c}}}
\newcommand{\erealfunc}{\mathcal{R}^+_{\geq 0}}
\newcommand{\erealfunctimes}{\mathcal{R}^\times_{\geq 0}}
\newcommand{\traceMC}{\subdistcount(A^{+})}
\newcommand{\traceDFA}{\pfunc(A^{+})}
\newcommand{\sets}{\mathbf{Sets}}
\newcommand{\mcdfafunc}[1]{\subdist(#1 +\{\checkmark\})}
\newcommand{\mfunc}{\mathcal{M}}
\newcommand{\mfunccount}{\mathcal{M}_c}
\newcommand{\traceMFA}{\mfunccount(A^{+})}
\newcommand{\traceNFA}{\pfunc(A^{+})}
\newcommand{\mfafunc}[1]{\mfunc(#1 +\{\checkmark\})^A}
\newcommand{\nfafunc}[1]{\powfunc({#1} + \{\checkmark\})^A}
\newcommand{\exreal}{\mathbb{R}^{\infty}_{\geq 0}}
\newcommand{\mcmfafunc}[1]{\erealfunc(#1 +\{\checkmark\})}
\newcommand{\mcnfafunc}[1]{\erealfunc(#1 +\{\checkmark\})}
\newcommand{\nnegreal}{\mathbb{R}_{\geq 0}}
\newcommand{\id}{\mathrm{id}}
\newcommand{\Nat}[2]{\mathrm{Nat}(#1, #2)}
\crefname{section}{\S\!}{Sects.}
\Crefname{section}{\S\!}{Sects.}
\crefname{proposition}{Prop.}{Props.}
\Crefname{proposition}{Prop.}{Props.}
\crefname{definition}{Def.}{Defs.}
\Crefname{definition}{Def.}{Defs.}
\crefname{appendix}{App.}{Appx.}
\Crefname{appendix}{App.}{Appx.}
\crefname{theorem}{Thm.}{Thms.}
\Crefname{theorem}{Thm.}{Thms.}
\crefname{lemma}{Lem.}{Lems.}
\Crefname{lemma}{Lem.}{Lems.}
\crefname{corollary}{Cor.}{Cors.}
\Crefname{corollary}{Cor.}{Cors.}
\crefname{algorithm}{Alg.}{Algs.}
\Crefname{algorithm}{Alg.}{Algs.}
\crefname{figure}{Fig.}{Figs.}
\Crefname{figure}{Fig.}{Figs.}
\crefname{example}{Example}{Examples}
\Crefname{example}{Example}{Examples}
\crefname{table}{Tbl.}{Tbls.}
\Crefname{table}{Tbl.}{Tbls.}
\title{A No-go Theorem\\ for Coalgebraic Product Construction\thanks{We would like to thank the anonymous reviewers for their valuable comments and suggestions, which significantly improve this article. 
The authors were supported by the ASPIRE grant No. JPMJAP2301, JST. 
M.~K.~is supported by the JST grant No.~JPMJAX25CD,
and
K.~W.~is supported by the JST grants No.~JPMJAX23CU and JPMJPR25KD. }} 
\author{Mayuko Kori\inst{1} \and
Kazuki Watanabe\inst{2,3}}
\institute{Research Institute for Mathematical Sciences, Kyoto University, Japan\and
National Institute of Informatics, Japan \and
The Graduate University for Advanced Studies (SOKENDAI), Japan }
\titlerunning{A No-go Theorem for Coalgebraic Product Construction} 
\authorrunning{M.~Kori and K.~Watanabe} 
\begin{document}

\maketitle

\begin{abstract}
Verifying traces of systems is a central topic in formal verification. 
We study model checking of Markov chains (MCs) against temporal properties represented as (finite) automata. 
For instance, given an MC and a deterministic finite automaton (DFA), a simple but practically useful model checking problem  asks for the probability of (terminating) traces accepted by the DFA, which can be computed via a product MC of the given MC and DFA and reduced to a simple reachability problem.

Recently, Watanabe, Junges, Rot, and Hasuo proposed 
\emph{coalgebraic product constructions}, a categorical framework that uniformly explains such coalgebraic constructions using distributive laws.
This framework covers a range of instances, including the model checking of MCs against DFAs. 

In this paper, on top of their framework we first present a no-go theorem for product constructions, showing a case when we \emph{cannot} do product constructions for model checking. 
Specifically, we show that there are \emph{no} coalgebraic product MCs of MCs and nondeterministic finite automata for 
computing the probability of the accepting traces. 
The proof relies on a characterisation of natural transformations between certain functors that determine the type of branching, including nondeterministic or probabilistic branching. 

Second, we present a coalgebraic product construction of MCs and multiset finite automata (MFAs) as a new instance within our framework. 
This construction addresses a model checking problem that asks for the expected number of accepting runs on MFAs over traces of MCs. 
We show that this problem is solvable in polynomial time. 

\end{abstract}

\section{Introduction}
\label{sec:intro}
For decades, model checking has been extensively studied as a verification technique for ensuring the correctness of systems or programs adhere to a given specification. 
A standard model for representing systems with uncertainties are \emph{Markov chains} (MCs)~\cite{baierKatoen}.
The behaviour of an MC is observed through a trace $w\in A^{+}$ until it reaches a target state,
at which point it terminates; throughout the paper, we consider only terminating traces.
These traces are then checked against a specification represented by a finite automaton,
determining acceptance of each trace.
Formally, 
given an MC and a finite automaton, model checking asks for the probability that traces of the MC are accepted by the automaton.
This probability is mathematically expressed as $\sigma(L) = \sum_{w\in L} \sigma(w) \in [0, 1]$, where $\sigma$  is the subdistribution of traces on the MC, and $L\in\traceDFA$ is the recognised language of the automaton (excluding the empty string). 
Model checking of MCs against such linear-time properties has been actively explored in the literature, including in~\cite{baierKatoen,Vardi85,BaierAFK18,BustanRV04}.

A well-established approach to efficient model checking algorithms is the so-called \emph{product construction}~\cite{Pnueli77,VardiW86}. 
In this approach, given an MC $\mathbb{M}$ and a finite automaton $\mathcal{A}$, 
one first constructs an equivalent finite deterministic automaton (DFA) $\mathcal{A}_d$ by determinisation. 
We then constructs a product MC $\mathbb{M}\otimes \mathcal{A}_d$, where the reachability probability to a designated target state exactly coincides with the original probability $\sigma(L)$. 
While the determinisation construction explodes the number of states exponentially in general, it has been shown that the resulting product MC with the DFA is efficiently solvable in poly-logarithmic parallel time (NC) w.r.t. the size of its underlying graph (see e.g.~\cite{BaierK00023}), which belongs to the complexity class P.

To pursue an efficient model checking algorithm, it is natural to identify conditions under which product constructions can be applied to the target model checking problem \emph{without requiring determinisation}.  
Indeed, such conditions for MCs and non-deterministic ($\omega$-regular) automata have been explored for decades.
Notable examples include products for MCs with separated unambiguous automata~\cite{CouvreurSS03},
unambiguous automata~\cite{BaierK00023}, and limit-deterministic automata~\cite{Vardi85,CourcoubetisY95,SickertEJK16}.
While these automata are not fully deterministic, they impose restrictions on their non-deterministic behaviour, placing them between deterministic and non-deterministic automata.
To the best of our knowledge, no product construction is currently known for MCs and nondeterministic finite automata (NFAs) without determinisation.
It is fair to say that this is unlikely to be possible, 
although no formal proof of impossibility has been established previously (see also~\cite[Remark 4.4]{WatanabeJRH25}).

Recently, Watanabe et al.~proposed a unified approach to product constructions for model checking~\cite{WatanabeJRH25}, establishing a structural theory for it. 
In their framework, both systems and specifications are modelled as \emph{coalgebras}~\cite{Rutten00,J2016}, and 
a generic product construction, called the \emph{coalgebraic product construction}, provides a unified method for constructing products
 via \emph{distributive laws}, natural transformations that distribute the Cartesian product over the category of sets.
 Within this abstract formulation, they introduced a \emph{correctness criterion}, which is
a simple yet powerful sufficient condition ensuring the correctness of the coalgebraic product construction---the solution of the product coincides with that of the original model checking problem.
This framework covers
a wide range of instances, including the aforementioned problem (e.g.~\cite{baierKatoen, BaierAFK18}) and the cost-bounded reachability probability~\cite{SteinmetzHB16,HahnH16}.

On top of their coalgebraic framework,
 we further explore product constructions of MCs and automata.
Our main result in this paper is a \emph{no-go theorem} showing
that {\bf \emph{no}} coalgebraic product MCs for MCs and NFAs satisfy the correctness criterion for the model checking problem that asks for the probability of accepting traces (\cref{thm:no-go-mc-nfa}).
To prove this, we characterise natural transformations between certain functors that govern branching types, including non-deterministic and probabilistic branching (\cref{thm:monoid_lambda}). 
\begin{table}[t]
  \caption{Sets that are isomorphic to $\Nat{F_A}{F_B}$ via a reasonably simple isomorphism. 
  These functors, including the covariant finite powerset functor $\powfunc$ and the multiset functor $\mfunc$, are formally defined in~\cref{eg:cmon_functor}.  
  For instance, the set $\Nat{\mathcal{M}}{\mathcal{P}_f}$
  of natural transformations $\lambda\colon \mathcal{M}\Rightarrow \mathcal{P}_f$ is isomorphic to $\mathbf{2}^{\mathbb{N}_{\geq 1}}$ with a simple isomorphism  described explicitly in~\cref{eg:mfunc_natural_trans}.
  The characterisations of $\lambda\colon \mathcal{P}_f\Rightarrow \mathcal{P}_f$ and $\lambda\colon \mathcal{M}\Rightarrow \mathcal{M}$ have been found by Dahlqvist and Neves in~\cite{DahlqvistNeves}.}
  \centering
  \begin{tabular}{c|cccc}
    $F_A \backslash F_B$ &$\mathcal{P}_f$ & $\mathcal{M}$  &$\erealfunc$ &$\erealfunctimes$\\
    \hline
    $\mathcal{P}_f$ & $\mathbf{2}$ & $\mathbf{1}$ & $\mathbf{1}$ &$\mathbf{2}$\\
    $\mathcal{M}$ &$\mathbf{2}^{\mathbb{N}_{\geq 1}}$ & $\mathbb{N}^{\mathbb{N}_{\geq 1}}$ & $\nnegreal^{\mathbb{N}_{\geq 1}}$ &$\nnegreal^{\mathbb{N}_{\geq 1}}$ \\
    $\erealfunc$ &$\mathbf{2}^{\mathbb{R}_{> 0}}$ &$\mathbf{1}$ &$\nnegreal^{\mathbb{R}_{> 0}}$ &$?$ \\
\end{tabular}
  \label{tab:table_of_isomorphisms}
\end{table}

Beyond its application to product constructions,
our characterisation of natural transformations is of independent interest. 
A summary of our characterisation is provided in~\cref{tab:table_of_isomorphisms}. 
For instance, the following isomorphism for $\Nat{\mfunc}{\subdist}$ 
exemplifies our characterisation (\cref{prop:subfunc_pm}): 
\begin{corollary}
  There is an isomorphism $\lambda^{(\_)}\colon [0, 1]^{\mathbb{N}_{\geq 1}} \to \Nat{\mfunc}{\subdist}$ given by
    \begin{displaymath}
      \lambda^b_X(f)(x) = 
      \begin{cases}
        \frac{f(x)}{\sum_{x \in X}f(x)} \cdot b\big(\sum_{x \in X}f(x)\big) &\text{if }\sum_{x \in X}f(x) > 0, \\
        0 &\text{if }\sum_{x \in X}f(x) = 0.
      \end{cases}
    \end{displaymath}
  
\end{corollary}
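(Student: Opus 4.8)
The plan is to exhibit an explicit two-sided inverse to $\lambda^{(\_)}$, namely the map sending a natural transformation $\lambda$ to the family $b\colon \mathbb{N}_{\geq 1}\to[0,1]$ defined by $b(n)\defeq \lambda_{\{\ast\}}(n\cdot\ast)$, where $n\cdot\ast$ is the multiset on the singleton $\{\ast\}$ with multiplicity $n$ and we identify $\subdist(\{\ast\})$ with $[0,1]$. Equivalently, since $\subdist$ is a subfunctor of $\erealfunc$ via the evident natural inclusion, one may deduce the statement from the $\erealfunc$-entry of \cref{tab:table_of_isomorphisms}: a natural transformation $\mfunc\Rightarrow\subdist$ is exactly a natural transformation $\mfunc\Rightarrow\erealfunc$ all of whose components land in $\subdist$, and the proportional formula above is the one classified there, the extra constraint $b(n)\in[0,1]$ being precisely the requirement that the total mass $b(n)$ of $\lambda^b_X(f)$ never exceed $1$. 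I would nonetheless give the direct argument, as it is short and isolates exactly why the proportional shape is forced.

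First I would check that each $\lambda^b$ is well defined and natural. Well-definedness is immediate: $\lambda^b_X(f)$ is finitely supported (its support is contained in that of $f$), non-negative, and has total mass $b(n)\le 1$ when $n\defeq\sum_{x\in X}f(x)\ge 1$ and $0$ otherwise, so it is a genuine subdistribution. Naturality is a routine calculation: for $h\colon X\to Y$ both $\subdist(h)$ and $\mfunc(h)$ act by summing over fibres, the total size is preserved by $\mfunc(h)$, and the normalising factor $1/n$ is common to both sides, so the two fibre-sums agree. Injectivity of $b\mapsto\lambda^b$ is clear, since $b(n)$ is recovered as $\lambda^b_{\{\ast\}}(n\cdot\ast)$.

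The heart of the argument is surjectivity: every $\lambda\in\Nat{\mfunc}{\subdist}$ equals $\lambda^b$ for $b(n)=\lambda_{\{\ast\}}(n\cdot\ast)$. Fix a multiset $f$ on $X$ of total size $n$. If $n=0$ then $f$ factors through $\mfunc(\emptyset)$ and, since $\subdist(\emptyset)$ is a singleton, naturality forces $\lambda_X(f)=0$, matching the formula (and explaining why $b$ is indexed only over $\mathbb{N}_{\geq 1}$). If $n\ge 1$, write $f=\mfunc(h)(g)$, where $g$ is the flat multiset of multiplicity $1$ on $\{1,\dots,n\}$ and $h\colon\{1,\dots,n\}\to X$ enumerates $f$ with multiplicity; naturality gives $\lambda_X(f)=\subdist(h)\big(\lambda_{\{1,\dots,n\}}(g)\big)$. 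Now I claim $\lambda_{\{1,\dots,n\}}(g)$ is the uniform subdistribution of total mass $b(n)$. Indeed every permutation $\pi$ of $\{1,\dots,n\}$ fixes $g$, so naturality yields $\subdist(\pi)(\lambda(g))=\lambda(g)$; hence $\lambda(g)$ is permutation-invariant, so constant on $\{1,\dots,n\}$, and collapsing along the unique map to $\{\ast\}$ shows its total mass equals $\lambda_{\{\ast\}}(n\cdot\ast)=b(n)$. Pushing forward along $h$ then accumulates, for each $x$, a mass $(b(n)/n)\cdot|h^{-1}(x)|=(f(x)/n)\,b(n)$, which is exactly $\lambda^b_X(f)(x)$.

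The main obstacle is the uniformity claim for the flat multiset: this is where the full symmetric-group symmetry (rather than a single collapse map) must be exploited, and it is what pins the distribution to the proportional shape rather than to some other mass-preserving redistribution. Everything else — well-definedness, naturality, the fibre-sum bookkeeping in the pushforward, and injectivity — is routine, and the restriction of the parameter from $\nnegreal$ to $[0,1]$ is dictated solely by the subdistribution constraint on total mass, exactly as recorded in \cref{prop:subfunc_pm}.
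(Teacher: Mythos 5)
Your proof is correct, but it takes a different route from the paper's. The paper proves this statement (item 4 of \cref{prop:subfunc_pm}) in two lines by quoting general machinery: since $\subdist$ is a subfunctor of $\erealfunc$, \cref{lem:subfunctor} extends any $\lambda\colon \mfunc \Rightarrow \subdist$ to a natural transformation $\mfunc \Rightarrow \erealfunc$, which by \cref{thm:monoid_lambda}.\ref{item:case_n} must have the form $\lambda_X(f)(x) = f(x)\cdot b'\big(\sum_{x}f(x)\big)$ for a unique $b'\colon\nat\to\nnegreal$ with $b'(0)=0$; the subdistribution constraint then forces $b(n)\defeq n\cdot b'(n)\in[0,1]$, giving the normalised formula. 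This is exactly the alternative you sketch in your opening paragraph before opting for the direct argument. Your direct argument instead re-derives the special case from scratch: the permutation-invariance step is the paper's \cref{lem:monoid_lambda}.\ref{item:f_equal} (proved there with a single transposition rather than the full symmetric group, which suffices), the vanishing at total size $0$ is \cref{lem:monoid_lambda}.\ref{item:idempotent} with $n=0$, and your decomposition $f=\mfunc(h)(g)$ of an arbitrary multiset as a pushforward of the flat multiset on $\setn{n}$ plays the role of \cref{lem:lambda_m_n}. What the paper's route buys is brevity and uniformity, since the same two lemmas dispatch all four items of \cref{prop:subfunc_pm}; what yours buys is a self-contained proof that does not depend on \cref{thm:monoid_lambda} and that makes the mechanism (symmetry forces uniformity on the flat multiset, functoriality then forces proportionality everywhere) explicit. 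All the individual steps in your argument check out, including the $n=0$ case via $\mfunc(\emptyset)$, the recovery of the total mass $b(n)$ by collapsing to a point, and the bound $b(n)\le 1$ coming from $\lambda_{\setn{1}}$ landing in $\subdist(\setn{1})\cong[0,1]$.
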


Second, we introduce a new model checking problem for MCs and multiset finite automata (MFAs), which asks for the expected number of accepting runs on MFAs over traces on MCs.  
We show that a unique coalgebraic product construction exists for this setting (\cref{prop:correctmcmfas,prop:lambda_uniqueness}). 
As an immediate consequence, this result provides an alternative proof of the correctness of the known product construction~\cite{BaierK00023} for MCs and unambiguous finite automata, which can be seen as a special case of MFAs. 
Lastly, we show that model checking of products of MCs and MFAs are solvable in P (\cref{prop:complexity}). 
Our proof is indeed very simple: we reduce the problem to the computation of expected \emph{multiplicative rewards} on MCs (without non-trivial bottom strongly connected components), which is solvable in P due to  the very recent result by Baier et al.~\cite{BaierCMP25}.

In summary, our contributions are as follows: 
\begin{itemize}
  \item We characterise natural transformations between functors for branching, which plays an important role in the proof of our no-go theorem (\cref{thm:monoid_lambda}). 
  \item We present a no-go theorem for coalgebraic \emph{product MCs} of MCs and NFAs (\cref{thm:no-go-mc-nfa}).
  \item We propose a new model checking problem of MCs and MFAs, and show that the problem is solvable in polynomial time (\cref{prop:complexity}).
\end{itemize}

{\bf{Structure}}. In~\cref{sec:preliminaries}, we recall preliminaries related to coalgebraic semantics of transition systems, including MCs and DFAs. 
In~\cref{sec:coprodconst}, we review the coalgebraic product constructions and the correctness criterion, which is a sufficient condition that ensures the correctness of coalgebraic products with respect to a given model checking problem~\cite{WatanabeJRH25}. 
In~\cref{sec:naturaltrans}, we show a characterisation of natural transformations between certain functors determining the type of branching. 
In~\cref{sec:mcnfas}, we show a no-go theorem for coalgebraic product MCs of MCs and NFAs. 
In~\cref{sec:mcmfas}, we study a new model checking problem for MCs and MFAs. 
In~\cref{sec:related}, we discuss related work, and in~\cref{sec:conclusion} we conclude this paper.

{\bf{Notation}}. We write $\mathcal{D}$ for the distribution functor (with finite supports), and $\subdist$ and $\subdistcount$ for the subdistribution functor with finite supports and countable supports, respectively. 
We also write $\mfunc$ and $\mfunccount$ for the multiset functors with finite supports and countable supports, respectively. 
The constant function mapping every element to a fixed value $a$ is written as $\Delta_a$.
  We write the unit interval as $[0, 1]$,
and the set of booleans $\{\bot, \top\}$ as $\bools$.
We use $\pi_1$ and $\pi_2$ for the first and second projections, respectively.
\section{Preliminaries}
\label{sec:preliminaries}
We recall coalgebraic semantics for transition systems, which has been widely used in the literature (e.g.~\cite{HinoKH016,UrabeHH17,Kori0RK24,WatanabeJRH25}).
The semantics used in this paper is based on least fixed-point semantics, analogous to weakest precondition semantics~\cite{Hasuo15,AguirreKK22}, actively employed in studies of formal verification~\cite{HinoKH016,UrabeHH17,KoriUKSH22}.
Given a coalgebra $c\colon X\rightarrow FX$ of an endofunctor $F$ modelling a system, 
the semantics of $c$ is given by a least fixed point of a specific \emph{predicate transformer} of $c$ that is induced by a \emph{modality}. 

Throughout the paper, we consider coalgebras $c\colon X\rightarrow FX$ on the category of sets ($\sets$), thus $F$ is an endofunctor on the category of sets. A \emph{semantic structure} for $F$ is given by a pair $(\mathbf{\Omega}, \tau)$ such that (i) $\mathbf{\Omega}$ is an $\omega$-complete partially ordered set $(\Omega, \preceq)$ with the least element $\bot$; 
and (ii) $\tau$ is a function $\tau\colon F\Omega\rightarrow \Omega$. We call $\mathbf{\Omega}$ and $\tau$ a \emph{semantic domain} and a \emph{modality} (for $F$), respectively. 
\begin{definition}[predicate transformer]
    Given a coalgebra $c\colon X\rightarrow FX$ and a semantic structure $(\mathbf{\Omega}, \tau)$ for $F$, the \emph{predicate transformer} $\Phi_{c}$ of $c$ is a function $\Phi_c\colon \Omega^X \rightarrow \Omega^X$ given by $\Phi_c(u) \defeq \tau \circ F(u)\circ c$ for each $u\in \Omega^X$. 
    Additionally, we assume that $\Phi_{c}$ is $\omega$-continuous with respect to the pointwise order in $\Omega^X$. 
\end{definition}
The predicate transformer $\Phi_c$ is the composition of the coalgebra $c$ with  the predicate lifting $\tau \circ F(\_)$.
The latter corresponds bijectively  to the modality $\tau\colon F\Omega \to \Omega$, which is commonly used in coalgebraic modal logic (cf.~\cite{DBLP:journals/tcs/Schroder08}).

By the Kleene-fixed point theorem, the predicate transformer $\Phi_c$ has the least fixed point $\mu \Phi_c = \bigvee_{n\in \nat} \Phi^n(\bot)$. 
\begin{definition}[semantics]
    Given a coalgebra $c$ and a semantic structure $(\mathbf{\Omega}, \tau)$, 
    the \emph{semantics} is the least fixed point $\mu \Phi_c \in \Omega^X$ of the predicate transformer $\Phi_c$. 
\end{definition}

We recall the coalgebraic semantics of (labelled) Markov chains (MCs) and \emph{deterministic finite automaton (DFA)}.   
\begin{example}[semantics of MC] \label{eg:mc}
    We define a \emph{(labelled) MC} as a coalgebra $c\colon X\rightarrow\mcfunc{X}$. 
    The semantic structure $(\mathbf{\Omega}, \tau)$ is given by (i) $\mathbf{\Omega} \defeq (\traceMC, \preceq)$, where $\preceq$ is the pointwise order; and (ii) $\tau \colon \mcfunc{\traceMC} \rightarrow \traceMC$ is given by 
    \begin{align*}
        \tau(\sigma, a)(w) \defeq \begin{cases*}
            \sigma(\checkmark) &if  $w = a$,\\
            \sum_{\mu \in \subdistcount(A^+)}\sigma(\mu) \cdot \mu(w') &if   $w = a\cdot w'$ for some $w' \in A^+$,\\
            0 &otherwise.
        \end{cases*}
    \end{align*}
    The least fixed point of $\Phi_c$ gives 
    the reachability probability $\mu \Phi_c(x)(w)\in [0, 1]$ from each state $x\in X$ to the target $\checkmark$ with the trace $w\in A^{+}$. 
\end{example}

\begin{example}[semantics of DFA]
    We define a DFA\footnote{Following~\cite{WatanabeJRH25}, the acceptance condition is given by the output, as in Mealy machines.} as a coalgebra $d\colon Y\rightarrow \dfafunc{Y}$. 
    The semantic structure $(\mathbf{\Omega}, \tau)$ is given by (i) 
    $\mathbf{\Omega} \defeq(\traceDFA, \subseteq)$, where $\subseteq$ is the inclusion order; and (ii) $\tau \colon \dfafunc{\traceDFA} \rightarrow \traceDFA$ is given by 
    \begin{align*}
        \tau(\delta)\defeq \big\{a\in A \mid \pi_2 \big(\delta(a)\big) = \top\big\} \cup \big\{a\cdot w\in A^{+} \mid w \in \pi_1 \big(\delta(a)\big) \big\}.
    \end{align*}
    The least fixed point of the predicate transformer $\Phi_d$ yields
    the recognized language excluding the empty string, as $\mu \Phi_d(y)\in \traceDFA$ 
    for each state $y\in Y$. 
\end{example}

\section{Coalgebraic Product Construction}
\label{sec:coprodconst}
We introduce the coalgebraic product construction~\cite{WatanabeJRH25}, which is the foundation for our study of product constructions.
The coalgebraic product construction employs 
a natural transformation $\lambda$ called a \emph{distributive law}
to merge behaviours of two coalgebras; 
see, e.g.,~\cite{MacLane} for preliminaries on category theory.

\begin{definition}[distributive law, coalgebraic product]
    Let $F_S$, $F_R$, and $F_{S\otimes R}$ be endofunctors.
    A \emph{distributive law} $\lambda$ from $F_S$ and $F_R$ to $F_{S\otimes R}$ is a natural transformation 
    $\lambda\colon \times \circ (F_S \times F_R) \Rightarrow F_{S\otimes R}\circ \times$. 
    Given two coalgebras $c\colon X\rightarrow F_S X$ and $d\colon Y\rightarrow F_R Y$,
     the \emph{coalgebraic product} $c\otimes_{\lambda} d$ induced by the distributive law $\lambda$ is defined as the coalgebra: 
     \begin{displaymath}
     c\otimes_{\lambda} d\colon X\times Y\rightarrow F_{S\otimes R} (X\times Y),  \quad c\otimes_{\lambda} d \defeq \lambda_{X, Y}\circ (c\times d). 
     \end{displaymath}
\end{definition}

\begin{example}[product of MC and DFA]
    \label{ex:prodMCDFA}
    Define a distributive law $\lambda$ from $\mcfunc{(\_)}$ and $\dfafunc{(\_)}$ to $\mcfuncproduct{(\_)}$ by
    \begin{align*}
        \lambda_{X, Y}(z)(x, y)&\defeq \begin{cases}
            \sigma(x) &\text{ if } y = \pi_1 \big(\delta(a)\big),\\
            0 &\text{ otherwise,}
        \end{cases}\\ 
         \lambda_{X, Y}(z)(\checkmark) &\defeq \begin{cases}
            \sigma(\checkmark) &\text{ if } \top = \pi_2 \big(\delta(a)\big),\\
            0 &\text{ otherwise,}
        \end{cases}
    \end{align*}
    where $z = (\sigma, a, \delta)\in \mcfunc{X}\times\dfafunc{Y}$.
    Then the coalgebraic product $c \otimes_\lambda d$ of an MC $c\colon X\rightarrow\mcfuncproduct{X}$ and a DFA $d\colon Y\rightarrow \dfafunc{Y}$ is the standard product of the MC and the DFA, e.g.~\cite{baierKatoen, WatanabeJRH25}.
\end{example}

It is worth emphasizing that the coalgebraic product $c\otimes_{\lambda} d$ is itself a coalgebra.
This allows its semantics to be defined in the same manner as for its components, using a semantic structure $(\mathbf{\Omega}_{S\otimes R}, \tau_{S\otimes R})$.
\begin{example}[semantics of the product]
  \label{ex:semanticsProduct}
  Consider~\cref{ex:prodMCDFA}.
  We define the semantic structure $(\mathbf{\Omega}_{S\otimes  R}, \tau_{S\otimes R})$ as follows: (i) 
  $\mathbf{\Omega}_{S\otimes R} \defeq ([0, 1], \leq)$, where $\leq$ is the standard order; and (ii) $\tau_{S\otimes R}\colon \mcdfafunc{[0, 1]} \rightarrow [0, 1]$ is given by $\tau_{S\otimes R}(\sigma) \defeq \sigma(\checkmark) + \sum_{r\in [0, 1]}r \cdot \sigma(r)$. 
  Under this semantic structure, the semantics $\mu \Phi_{c \otimes_\lambda d}$ of the product gives, for each $(x, y)$, the reachability probability from $(x, y)$ to the target state $\checkmark$. 
\end{example}

\begin{remark}[final coalgebra semantics] 
  In \cref{ex:semanticsProduct}, one might wonder whether the semantics arises as a (unique) coalgebra morphism into a final coalgebra in the Kleisli category of the monad 
$\mcdfafunc{\_}$, thereby instantiating the generic trace semantics of Hasuo et al.~\cite{HasuoJS07}.
This, however, is not the case: the semantics is defined as the least fixed point of a predicate transformer, whereas the transformer may in general admit multiple fixed points.
Uniqueness of the coalgebra morphism can be ensured only under additional assumptions such as finite state spaces and almost-sure termination of coalgebras~\cite{baierKatoen}.
\end{remark}

We then move on to the \emph{correctness} of coalgebraic products.
This notion is defined with respect to an \emph{inference map} $q\colon \Omega_S\times \Omega_R \rightarrow \Omega_{S\otimes R}$, where $\Omega_S$ and $\Omega_R$ are the underlying sets of semantic domains $\mathbf{\Omega_S}$ and $\mathbf{\Omega_R}$  of $c$ and $d$, respectively.  

\begin{definition}[inference map, correctness]
    Let $c\colon X\rightarrow F_S X$ and $d\colon Y\rightarrow F_R Y$ be two coalgebras, $\lambda$ be a distributive law $\lambda_{X, Y}\colon (F_S X) \times (F_R Y)\rightarrow F_{S\otimes R} (X\times Y)$, and $(\mathbf{\Omega}_S, \tau_S)$,  $(\mathbf{\Omega}_R, \tau_R)$, and  $(\mathbf{\Omega}_{S\otimes R}, \tau_{S\otimes R})$ be semantic structures for $F_S, F_R$ and $F_{S\otimes R}$, respectively. 
    An \emph{inference map} is a function $q\colon \Omega_S\times\Omega_R \rightarrow \Omega_{S\otimes R} $ such that 
    \begin{enumerate}
      \item  $q(\bot, \bot) = \bot$; and
      \item $q$ is $\omega$-continuous, that is, $\bigvee_{l\in \nat} q(u_l,  v_l) = q\big(\bigvee_{m\in \nat}u_m,  \bigvee_{n\in \nat}v_n \big)$ for each $\omega$-chains $(u_m)_{m\in \nat}, (v_n)_{n\in \nat}$.    
    \end{enumerate}
    The coalgebraic product $c\otimes_{\lambda} d$ is said to be \emph{correct w.r.t.~$q$} if 
        $q \circ (\mu \Phi_c \times \mu \Phi_d) = \mu \Phi_{c\otimes_{\lambda} d}$ holds. 
\end{definition}

\begin{example}
    \label{ex:correctnessMCDFA}
    Consider~\cref{ex:prodMCDFA,ex:semanticsProduct}.
    The coalgebraic product $c \otimes_\lambda d$ is correct w.r.t.~$q\colon\traceMC\times \traceDFA\rightarrow [0, 1]$ defined by $q(\sigma, L) \defeq \sum_{w\in L}\sigma(w)$.
    This means that the semantics of the products---the reachability probabilities---coincides with the probability that traces of MCs are accepted by DFAs.  
\end{example}

A simple \emph{correctness criterion} that ensures the correctness of the coalgebraic product is proposed in~\cite{WatanabeJRH25}. 
This criterion is indeed general enough to capture a wide range of known product constructions in the literature~\cite{baierKatoen,AndovaHK03,BaierDDKK14,BaierAFK18}, including the one illustrated in~\cref{ex:correctnessMCDFA}.  
\begin{proposition}[correctness criterion~\cite{WatanabeJRH25}]
    \label{prop:correctCriterion}
    Assume the following data: 
    \begin{itemize}
        \item a distributive law $\lambda$ from $F_S$ and $F_R$ to $F_{S\otimes R}$, 
        \item semantic structures $(\mathbf{\Omega}_S, \tau_S)$,  $(\mathbf{\Omega}_R, \tau_R)$, and  $(\mathbf{\Omega}_{S\otimes R}, \tau_{S\otimes R})$ for $F_S$, $F_R$, and $F_{S\otimes R}$, respectively,
        \item an inference map $q\colon \Omega_S \times \Omega_R \to \Omega_{S \otimes R}$.
    \end{itemize}
    Then for any coalgebras $c\colon X\rightarrow F_SX$ and  $d\colon Y\rightarrow F_RY$,
    the coalgebraic product $c\otimes_{\lambda} d$ is correct w.r.t.~$q$ if the following equation holds: 
      \begin{align*}
        q\circ (\tau_S\times \tau_R) = \tau_{S\otimes R}\circ F_{S\otimes R}(q)\circ \lambda_{\Omega_S, \Omega_R}.
      \end{align*}
\end{proposition}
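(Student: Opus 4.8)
The plan is to exploit the Kleene characterisation of the three least fixed points and to verify the correctness equation level by level on the approximants, then pass to the supremum. Writing $u_n \defeq \Phi_c^n(\bot)$, $v_n \defeq \Phi_d^n(\bot)$, and $w_n \defeq \Phi_{c\otimes_\lambda d}^n(\bot)$ for the $n$-th Kleene iterates, so that $\mu\Phi_c = \bigvee_{n\in\nat} u_n$, $\mu\Phi_d = \bigvee_{n\in\nat} v_n$, and $\mu\Phi_{c\otimes_\lambda d} = \bigvee_{n\in\nat} w_n$, I would first establish the identity $q\circ(u_n\times v_n) = w_n$ for every $n\in\nat$ by induction, and only afterwards take suprema. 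Here $u_n\times v_n\colon X\times Y\to \Omega_S\times\Omega_R$ denotes the map $(x,y)\mapsto(u_n(x), v_n(y))$.

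For the base case, $u_0$, $v_0$, and $w_0$ are the constant maps at $\bot$, so $q\circ(u_0\times v_0)$ is constant with value $q(\bot,\bot) = \bot$ by the first defining property of the inference map, hence equals $w_0$. For the inductive step I would unfold the predicate transformers and use functoriality of the binary product $\times$ to rewrite
\[
u_{n+1}\times v_{n+1} = (\tau_S\times\tau_R)\circ\big(F_S(u_n)\times F_R(v_n)\big)\circ(c\times d).
\]
Precomposing with $q$ and applying the hypothesised criterion $q\circ(\tau_S\times\tau_R) = \tau_{S\otimes R}\circ F_{S\otimes R}(q)\circ\lambda_{\Omega_S,\Omega_R}$ rewrites the leading factor, after which naturality of $\lambda$ instantiated at the morphisms $u_n\colon X\to\Omega_S$ and $v_n\colon Y\to\Omega_R$ slides $\lambda$ past the lifted maps, i.e.
\[
\lambda_{\Omega_S,\Omega_R}\circ\big(F_S(u_n)\times F_R(v_n)\big) = F_{S\otimes R}(u_n\times v_n)\circ\lambda_{X,Y}.
\]
Functoriality of $F_{S\otimes R}$ then merges $F_{S\otimes R}(q)\circ F_{S\otimes R}(u_n\times v_n)$ into $F_{S\otimes R}(q\circ(u_n\times v_n))$, and the induction hypothesis $q\circ(u_n\times v_n) = w_n$ together with $c\otimes_\lambda d = \lambda_{X,Y}\circ(c\times d)$ collapses the expression to $\tau_{S\otimes R}\circ F_{S\otimes R}(w_n)\circ(c\otimes_\lambda d) = w_{n+1}$, as required.

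Finally, to transfer the identity to the fixed points I would evaluate at an arbitrary $(x,y)\in X\times Y$. Since each predicate transformer is $\omega$-continuous and $\bot$ is least, the sequences $(u_n(x))_{n}$ and $(v_n(y))_{n}$ are $\omega$-chains in $\Omega_S$ and $\Omega_R$, so the second defining property ($\omega$-continuity) of $q$ gives $q\big(\bigvee_n u_n(x), \bigvee_n v_n(y)\big) = \bigvee_n q(u_n(x), v_n(y))$. The left-hand side is $q\circ(\mu\Phi_c\times\mu\Phi_d)$ at $(x,y)$, while the right-hand side is $\bigvee_n w_n(x,y) = \mu\Phi_{c\otimes_\lambda d}(x,y)$, yielding the correctness equation. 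The step demanding care is the inductive one: the bookkeeping of how the product of functions interacts with $\lambda$ must be tracked precisely, and in particular the correct instantiation of the naturality square of $\lambda$ at $(u_n,v_n)$ is the crux, with the remaining manipulations being routine diagram chasing.
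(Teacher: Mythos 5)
Your proof is correct and is the standard argument: induction on the Kleene iterates using the base property $q(\bot,\bot)=\bot$, the correctness equation, naturality of $\lambda$ at $(u_n,v_n)$, and functoriality of $F_{S\otimes R}$, followed by the $\omega$-continuity of $q$ to pass to the suprema. The paper itself cites this proposition from Watanabe et al.\ without reproducing a proof, but your argument is exactly the intended one and all the delicate steps (in particular the instantiation of the naturality square and the chain conditions needed to apply the continuity of $q$) are handled correctly.
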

The correctness criterion requires that a semantic structure $(\mathbf{\Omega}_{S\otimes R}, \tau_{S\otimes R})$ for products be explicitly specified.
This requirement arises from the practical need for efficient computation of the semantics of products.
Model checking problems should ideally be solvable by mature techniques.
For instance, the semantics described in~\cref{ex:semanticsProduct} can be efficiently computed by solving linear equation systems, or applying value iterations (see~\cite{baierKatoen}). 
\begin{remark}[the choice of data]
  The coalgebraic product construction and its correctness criterion require several pieces of data to specify both the problem and its correctness.
These data are not necessarily 
canonical.
Ideally, one would like to determine 
these data
via universal properties---for instance, through final coalgebras.
We leave this direction for future work. 
\end{remark}

\section{Natural Transformations for Coalgebraic Product Constructions}
\label{sec:naturaltrans}
As demonstrated in~\cref{prop:correctCriterion}, 
the coalgebraic product construction requires a distributive law $\lambda$, semantic structures $(\Omega_{i}, \tau_i)$ for $i \in \{S, R, S \otimes R\}$, and an inference map $q$.
Before exploring these structures in concrete settings,
we begin by studying natural transformations between endofunctors on $\sets$ that arise from commutative monoids (see Def.~\ref{def:func_cmon}).
The results established in this section enable us to prove a no-go theorem and uniqueness of distributive laws for coalgebraic products in later sections.

\newcommand{\CMon}{\mathbf{CMon}}
We write $\CMon$ for the category of commutative monoids.
For a commutative monoid $A$,
the binary operator is denoted by $+_A$ and  the unit element by $0_A$;
when the context is clear, 
we omit these subscripts for simplicity.
For $n \in \mathbb{N}$ and $a \in A$, we define $n \cdot a$ as the sum of $n$ copies of $a$, with $0 \cdot a = 0$.
\begin{definition} \label{def:func_cmon}
  We define a \emph{functor} $F_{(\_)}\colon \CMon \to [\sets, \sets]$ as follows. 
  For each $A \in \CMon$, the \emph{functor} $F_A$ is defined by 
  \begin{align*}
    &F_A(X) \coloneqq \{h\colon X \to A \mid \supp(h) \text{ is finite} \}, 
    &&F_A(g)(f) \coloneqq \sum_{x \in g^{-1}(\_)} f(x), 
  \end{align*}
  for each $X \in \sets$, $g\colon X \to Y$, and $f\in F_A(X)$,   where $\supp(h) = \{x \in X \mid h(x) \neq 0\}$.
  For each $i\colon A \to B$ in $\CMon$, the \emph{natural transformation} $F_i$ is defined by $(F_i)_X(f) \coloneqq i \circ f$ for each $X\in \sets$ and $f\in F_A(X)$. 
\end{definition}
For a fixed monoid $A$, the functor $F_A$ has appeared in the literature on coalgebraic transition systems labelled by monoids~\cite{DBLP:journals/entcs/GummS01}.

\begin{example} \label{eg:cmon_functor}
  \begin{enumerate}
    \item 
    The functor $F_{(\mathbb{N}, +, 0)}$ is the multiset functor $\mfunc$.
    \item 
    The functor $F_{(\bools, \lor, \bot)}$ is
    the covariant finite powerset functor $\powfunc$
    where  $\lor$ is the logical OR.
    \item 
    The functors $F_{(\nnegreal, +, 0)}$ and $F_{(\nnegreal, \times, 1)}$ 
    both map a set $X$ to a set of non-negative real-valued functions on $X$ with finite support, while they map functions in a different way.
    We write $\erealfunc$ and $\erealfunctimes$ for these functors $F_{(\nnegreal, +, 0)}$ and $F_{(\nnegreal, \times, 1)}$, respectively.
  \end{enumerate}
\end{example}

For $n \in \mathbb{N}$,
we use $\setn{n}$ to represent the set $\{1, \cdots, n\}$,
with the convention that $\setn{0}$ refers to the empty set.
\begin{proposition} \label{prop:natural_trans_2}
  Any natural transformation $\lambda\colon F_A \Rightarrow F_B$
  is uniquely determined by its component at $\setn{2}$,
  that is,
  for each $\lambda, \lambda'\colon F_A \Rightarrow F_B$,
  $\lambda_\setn{2} = \lambda'_\setn{2}$ if and only if $\lambda = \lambda'$.
  \qed
\end{proposition}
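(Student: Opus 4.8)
The plan is to exploit naturality twice: first to reduce from an arbitrary carrier set $X$ to the finite support of an element, and then to reduce from a finite set to its individual points by means of the collapsing maps into $\setn{2}$. One direction of the biconditional is immediate, since $\lambda = \lambda'$ trivially gives $\lambda_\setn{2} = \lambda'_\setn{2}$. So I would concentrate on the substantive direction, proving that $\lambda_\setn{2} = \lambda'_\setn{2}$ forces $\lambda_X = \lambda'_X$ for every set $X$.

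First I would reduce to finite sets. Given $f \in F_A(X)$, its support $S \coloneqq \supp(f)$ is finite by the definition of $F_A$. Writing $\iota\colon S \hookrightarrow X$ for the inclusion and $f|_S \in F_A(S)$ for the restriction of $f$, a direct computation from the action of $F_A$ on morphisms shows $F_A(\iota)(f|_S) = f$, because $\iota^{-1}(x) = \{x\}$ for $x \in S$ and $\iota^{-1}(x) = \emptyset$ (so the sum is $0$) for $x \notin S = \supp(f)$. Naturality of $\lambda$ at $\iota$ then yields $\lambda_X(f) = F_B(\iota)\big(\lambda_S(f|_S)\big)$, and the same identity holds for $\lambda'$. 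Hence it suffices to prove $\lambda_S = \lambda'_S$ for every finite set $S$.

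Second, I would pin down $\lambda_S$ pointwise. The element $\lambda_S(h) \in F_B(S)$ is a function $S \to B$, so it is determined by its values $\lambda_S(h)(s)$ at each $s \in S$. To isolate the value at a fixed $s$, I would use the map $g_s \colon S \to \setn{2}$ sending $s \mapsto 1$ and every other point to $2$. Since $g_s^{-1}(1) = \{s\}$, applying $F_B(g_s)$ and evaluating at $1$ reads off exactly $\lambda_S(h)(s)$, so naturality at $g_s$ gives
\begin{align*}
  \lambda_S(h)(s) = F_B(g_s)\big(\lambda_S(h)\big)(1) = \lambda_\setn{2}\big(F_A(g_s)(h)\big)(1).
\end{align*}
The right-hand side depends only on the component $\lambda_\setn{2}$, so the hypothesis $\lambda_\setn{2} = \lambda'_\setn{2}$ forces $\lambda_S(h)(s) = \lambda'_S(h)(s)$ for all $s$, hence $\lambda_S = \lambda'_S$. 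The degenerate cases are checked directly: for $S = \emptyset$ both $F_A(\emptyset)$ and $F_B(\emptyset)$ are singletons, so $\lambda_\emptyset$ is the unique map and is trivially fixed; for a singleton $S$ the displayed formula still applies, with $g_s$ the constant map to $1$.

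The argument is largely bookkeeping once the two reductions are set up, and I expect the only genuinely delicate point to be the second step: recognising that collapsing all points other than $s$ onto a single dummy point of $\setn{2}$ is precisely the natural device that isolates the $s$-coordinate of $\lambda_S(h)$. The key checks are that $g_s^{-1}(1)$ is the singleton $\{s\}$, so that $F_B(g_s)(-)(1)$ extracts one coordinate, and that the edge cases $\lvert S\rvert \le 1$ never require a component beyond $\setn{2}$.
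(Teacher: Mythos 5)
Your proof is correct and uses the same key device as the paper: for each $x \in X$, naturality along the map $X \to \setn{2}$ sending $x \mapsto 1$ and everything else to $2$ reads off $\lambda_X(f)(x)$ as $\lambda_{\setn{2}}(F_A(g)(f))(1)$. Your preliminary reduction to the finite support is harmless but unnecessary, since this collapsing map applies directly to an arbitrary carrier $X$, which is exactly what the paper does.
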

See \ifarxiv
\cref{subsec:proof_prop_natural_trans_2}
\else
\cite[Appendix A.1]{KWfull}     
\fi
for the proof.
Note that 
  each collection
  $\Nat{F_A}{F_B}$ is a set.

In this section, we aim to precisely characterise this set 
$\Nat{F_A}{F_B}$,
providing explicit constructions for such natural transformations.

\begin{remark}
    By analogy with Yoneda's lemma, 
    it makes sense to speculate that $\Nat{F_A}{F_B}$ can be characterised by monoid morphisms from $A$ to $B$.
    However, this is not the case in general: while the functor $F_{(\_)}$ is faithful
    \ifarxiv
    (as shown in~\cref{ap:faithful}),
    \else
    (as shown in~\cite[Appendix A.2]{KWfull}),
    \fi 
    it fails to be full.
    To see this,
    consider $A=(\nat, +, 0)$ and $B=(\nnegreal, +, 0)$.
    It will be shown in \Cref{eg:mfunc_natural_trans} that
    each natural transformation $\lambda \in \Nat{F_{(\nat, +, 0)}}{F_{(\nnegreal, +, 0)}}$ is of the form
    $\lambda_X(f)(x) = f(x) \cdot b\big(\sum_{x \in X}f(x)\big)$ for some $b\colon \nat \to \nnegreal$.
    Assume, for contradiction, that there exists $i\colon \nat \to \nnegreal$ such that $\lambda = F_i$.
    Since $i$ is uniquely determined by the component $\lambda_1$, we have $\lambda_1(\Delta_n)(1) = i(n) = n \cdot b(n)$.
    Since $F_i(f)(x) = i(f(x)) = f(x) \cdot b(f(x))$, $F_i(f)(x)$ is in general different from $\lambda_X(f)(x)$.
\end{remark}
After examining some fundamental properties of these natural transformations,
we proceed to explore their explicit forms in certain cases.

\begin{lemma} \label{lem:monoid_lambda}
  Let $A$ and $B$ be commutative monoids,
  and let $\lambda\colon F_A \Rightarrow F_B$ be a natural transformation.
  Then, 
  for each $f \in F_A(X)$ and $x, x' \in X$, the following statements hold.
  \begin{enumerate}
    \item \label{item:f_equal} $f(x) = f(x')$ implies $\lambda_X(f)(x) = \lambda_X(f)(x')$.
    \item \label{item:idempotent} 
    For each $n \in \mathbb{N}$,
    $n \cdot f(x) = f(x)$ implies $n \cdot \lambda_X(f)(x) = \lambda_X(f)(x)$.
  \end{enumerate}
\end{lemma}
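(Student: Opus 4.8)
The plan is to derive both statements purely from naturality of $\lambda$, applied to well-chosen maps between finite sets and exploiting that $F_A(g)$ pushes a function forward by \emph{summing} over fibres. For the first claim, I would use the transposition: let $s\colon X \to X$ be the bijection swapping $x$ and $x'$ and fixing everything else. Since $s$ is an involution, $F_A(s)(f) = f \circ s$, and $f(x) = f(x')$ gives $f \circ s = f$, hence $F_A(s)(f) = f$. Naturality of $\lambda$ at $s$ then yields $\lambda_X(f) = \lambda_X(F_A(s)(f)) = F_B(s)(\lambda_X(f)) = \lambda_X(f) \circ s$, and evaluating at $x$ gives $\lambda_X(f)(x) = \lambda_X(f)(s(x)) = \lambda_X(f)(x')$.

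For the second claim, write $a \defeq f(x)$ and assume $n \cdot a = a$ with $n \geq 2$ (the cases $n = 0,1$ are treated separately below). The idea is to realise the relation $n \cdot a = a$ geometrically by splitting $x$ into $n$ equal copies and merging them back. First extend $X$ to $X^{+} \defeq X \sqcup \{p_2, \dots, p_n\}$, set $p_1 \defeq x$, and let $f^{+} \in F_A(X^{+})$ agree with $f$ on $X$ and take value $a$ at each $p_j$. The collapse $c\colon X^{+} \to X$ sending every $p_j$ to $x$ and fixing the rest satisfies $F_A(c)(f^{+}) = f$, precisely because the value it produces at $x$ is $n \cdot a = a$. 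Naturality at $c$, together with \ref{item:f_equal} (which forces $\lambda_{X^{+}}(f^{+})$ to take one common value $v$ on all copies $p_1, \dots, p_n$), gives $\lambda_X(f)(x) = n \cdot v$.

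Next I would show $v$ is itself $n$-idempotent. Extend once more to $X^{++} \defeq X^{+} \sqcup \{r_1, \dots, r_{n-1}\}$, with $f^{++}$ equal to $f^{+}$ and taking value $a$ at each $r_i$, and consider $Q\colon X^{++} \to X^{+}$ that merges $\{p_1, r_1, \dots, r_{n-1}\}$ onto $p_1$ and is the identity elsewhere. Using $n \cdot a = a$ one checks $F_A(Q)(f^{++}) = f^{+}$. Writing $w$ for the common value of $\lambda_{X^{++}}(f^{++})$ on the $2n-1$ copies (equal by \ref{item:f_equal}), naturality at $Q$ evaluated at $p_1$ gives $v = n \cdot w$, whereas evaluation at $p_2$ (whose $Q$-fibre is a singleton) gives $v = w$; hence $v = n \cdot w = n \cdot v$. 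Combining the two steps, $\lambda_X(f)(x) = n \cdot v = v$, so $n \cdot \lambda_X(f)(x) = n \cdot v = v = \lambda_X(f)(x)$.

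The step I expect to be the main obstacle is exactly this second one. A single collapse only ever produces $\lambda_X(f)(x) = n \cdot v$, and re-collapsing larger fans merely reproduces relations of the same shape, so it does not by itself force the $n$-idempotency needed to close the argument. The decisive move is the asymmetric merge $Q$ out of the enlarged set $X^{++}$, which simultaneously sees one copy as an $n$-fold group and another as a singleton, thereby extracting $v = n \cdot v$ from $n \cdot a = a$. Finally the boundary cases: $n = 1$ is trivial, and $n = 0$ asserts that $f(x) = 0$ implies $\lambda_X(f)(x) = 0$, which follows by naturality at the endomap of $X$ sending $x$ to any other point and fixing the rest (this map leaves $f$ unchanged since $f(x) = 0$, and its fibre over $x$ is empty), with the one-element case covered by $\lambda_X(0) = 0$.
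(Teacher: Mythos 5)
Your proof is correct and follows essentially the same route as the paper's: part~(1) by naturality along the transposition, the $n=0$ case by a map whose fibre over $x$ is empty, and the $n\geq 2$ case by adjoining fresh copies of $a=f(x)$ and performing an asymmetric merge in which one point has an $n$-fold fibre while another has a singleton fibre, so that part~(1) forces $v=n\cdot v$. The only differences are cosmetic---the paper adds all $2n-2$ auxiliary points at once and uses one merge plus a final collapse $[\id,\Delta_x]$, whereas you proceed in two stages---and the one assertion you leave unjustified, $\lambda_X(0)=0$ for singleton $X$, follows in one line by naturality along the inclusion of $X$ into a two-element set (which reduces it to the case your fibre argument already covers).
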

\begin{proofs}
  \eqref{item:f_equal} 
  This is easy to prove by taking a function $g\colon X\rightarrow X$ that swaps $x$ and $x'$ and by the naturality of $\lambda$. 

  \eqref{item:idempotent}
  The cases $n = 0$ and $n=1$ are easy to prove. 
  We sketch the proof for the case $n \geq 2$. 
  Let $\kappa_2$ be the second coprojection of a binary coproduct. 
  We define $f'\in F_A(X+(\setn{2n-2}))$ as $[f, \Delta_{f(x)}]$ 
  and show that 
  \begin{align*}
    \lambda_X(f)(x) &= (2n-1)\cdot \lambda_{X + (\setn{2n-2})}(f')(\kappa_2(n)),\text{ and }\\
     \lambda_{X + (\setn{2n-2})}(f')(\kappa_2(n)) &= n\cdot \lambda_{X + (\setn{2n-2})}(f')(\kappa_2(n)).
  \end{align*}
  Then, we can see that $\lambda_X(f)(x) = \lambda_{X + (\setn{2n-2})}(f')(\kappa_2(n))$
  because 
  \begin{align*}
    \lambda_X(f)(x) &= (2n-1)\cdot \lambda_{X + (\setn{2n-2})}(f')(\kappa_2(n))\\
    &= n\cdot \lambda_{X + (\setn{2n-2})}(f')(\kappa_2(n)) + (n-1)\cdot \lambda_{X + (\setn{2n-2})}(f')(\kappa_2(n)) \\
    &= \lambda_{X + (\setn{2n-2})}(f')(\kappa_2(n)) + (n-1)\cdot \lambda_{X + (\setn{2n-2})}(f')(\kappa_2(n)) \\
    &= \lambda_{X + (\setn{2n-2})}(f')(\kappa_2(n)).
  \end{align*}
This concludes that $\lambda_X(f)(x) = n\cdot \lambda_X(f)(x)$. 
\ifarxiv 
See~\cref{subsec:proof_lem_monoid_lambda} for the full proof. 
\else   
See~\cite[Appendix A.3]{KWfull} for the full proof. 
\fi
\end{proofs}

The second statement places a restriction on the possible values of $\lambda_X(f)(x)$ when $n \cdot f(x) = f(x)$.
In particular, setting $n=0$ 
implies that 
$\lambda_X(f)(x) = 0$ if $f(x) = 0$.

The following result tells how 
$\lambda$ respects the $n$-fold operation.
To formalize this,
we introduce a preorder on a commutative monoid $A$ defined by
$a \leq a'$ if and only if there exists $a'' \in A$ such that $a + a'' = a'$.
\begin{lemma} \label{lem:lambda_m_n}
  Let $A$ and $B$ be commutative monoids,
  and $\lambda\colon F_A \Rightarrow F_B$ be a natural transformation.
  For each $f \in F_A(X)$, $x \in X$, $a \in A$, $n, m \in \mathbb{N}$, 
  and
  $\triangleright_1, \triangleright_2 \in \{=, \geq\}$,
  we assume that 
  \begin{center}
  $f(x) \triangleright_1 n \cdot a$,\quad and\quad  $\sum_{x' \in X \setminus \{x\}}f(x') \triangleright_2 m \cdot a$.
  \end{center}
  Then there are $a_1, a_2 \in A$ such that $f(x) = n \cdot a + a_1$ and $\sum_{x' \in X \setminus \{x\}}f(x') = m \cdot a + a_2$ by definition of $\triangleright_1$ and $\triangleright_2$.
  For these $a_1$ and $a_2$, the following relations hold:
  \begin{center}
  $\lambda_X(f)(x) \triangleright_1 n \cdot \lambda_{\setn{n+m+2}}(f')(1)$, \quad $\sum_{x' \in X \setminus \{x\}} \lambda_X(f)(x') \triangleright_2 m \cdot \lambda_{\setn{n+m+2}}(f')(1)$,
  \end{center}
  where $f'\in F_A(\setn{n+m+2})$ is defined by $f'(n+m+1) \defeq a_1$, $f'(n+m+2) \defeq a_2$, and $f'(i) \defeq a$ for each $i \in \setn{n+m}$. 
\end{lemma}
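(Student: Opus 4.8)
The plan is to derive both relations from two applications of the naturality of $\lambda$: first collapsing $X$ onto $\setn{2}$ to isolate $x$, and then realising the resulting two-point datum as a pushforward of the fixed witness $f'$ on $\setn{n+m+2}$. Throughout I would use \cref{lem:monoid_lambda} to handle the $a$-valued coordinates of $f'$ uniformly and the definition of the monoid preorder $\leq$ to read off the two cases $\triangleright_i\in\{=,\geq\}$.

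First I would push $f$ forward along $h\colon X\to\setn{2}$ sending $x$ to $1$ and every other element to $2$. Then $F_A(h)(f)\eqqcolon g\in F_A(\setn{2})$ satisfies $g(1)=f(x)=n\cdot a+a_1$ and $g(2)=\sum_{x'\in X\setminus\{x\}}f(x')=m\cdot a+a_2$, and naturality of $\lambda$ applied to $h$ gives $\lambda_{\setn{2}}(g)=F_B(h)(\lambda_X(f))$, i.e.
\begin{align*}
\lambda_{\setn{2}}(g)(1)=\lambda_X(f)(x),\qquad \lambda_{\setn{2}}(g)(2)=\sum_{x'\in X\setminus\{x\}}\lambda_X(f)(x').
\end{align*}
This reduces both claims to comparing $\lambda_{\setn{2}}(g)$ with $\lambda_{\setn{n+m+2}}(f')$.

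Second I would exhibit $g$ as a pushforward of $f'$. Define $k\colon\setn{n+m+2}\to\setn{2}$ by $k(i)=1$ for $i\in\setn{n}\cup\{n+m+1\}$ and $k(i)=2$ for the remaining indices. A direct check shows $F_A(k)(f')=g$, since the $a$-valued indices split as $n$ copies landing in $1$ and $m$ in $2$, while $a_1$ (resp.\ $a_2$) lands in $1$ (resp.\ $2$). Naturality of $\lambda$ applied to $k$ then gives $\lambda_{\setn{2}}(g)=F_B(k)(\lambda_{\setn{n+m+2}}(f'))$, so, writing $L\defeq\lambda_{\setn{n+m+2}}(f')(1)$,
\begin{align*}
\lambda_{\setn{2}}(g)(1)&=\textstyle\sum_{i\in\setn{n}}\lambda_{\setn{n+m+2}}(f')(i)+\lambda_{\setn{n+m+2}}(f')(n+m+1),\\
\lambda_{\setn{2}}(g)(2)&=\textstyle\sum_{i=n+1}^{n+m}\lambda_{\setn{n+m+2}}(f')(i)+\lambda_{\setn{n+m+2}}(f')(n+m+2).
\end{align*}
Since $f'(i)=a$ for all $i\in\setn{n+m}$, \cref{lem:monoid_lambda}\eqref{item:f_equal} equates each of these $\lambda$-values with $L$, whence $\lambda_{\setn{2}}(g)(1)=n\cdot L+\lambda_{\setn{n+m+2}}(f')(n+m+1)$ and $\lambda_{\setn{2}}(g)(2)=m\cdot L+\lambda_{\setn{n+m+2}}(f')(n+m+2)$.

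Finally I would read off both relations. In the $\geq$ case the displayed identities already provide the required witnesses $\lambda_{\setn{n+m+2}}(f')(n+m+1)$ and $\lambda_{\setn{n+m+2}}(f')(n+m+2)$, so $\lambda_X(f)(x)\geq n\cdot L$ and $\sum_{x'\neq x}\lambda_X(f)(x')\geq m\cdot L$ by definition of $\leq$. In the $=$ case we have $a_1=0$ (resp.\ $a_2=0$), hence $f'(n+m+1)=0$ (resp.\ $f'(n+m+2)=0$), and the $n=0$ consequence of \cref{lem:monoid_lambda}\eqref{item:idempotent} (namely $f'(j)=0$ forces $\lambda_{\setn{n+m+2}}(f')(j)=0$) makes the extra witness vanish, giving the equalities. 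The main obstacle is setting up the span $\setn{n+m+2}\xrightarrow{k}\setn{2}\xleftarrow{h}X$ with exactly the right shape of $f'$ so that $F_A(k)(f')=g$ holds on the nose; once this is arranged, everything follows from \cref{lem:monoid_lambda} and the definition of the preorder.
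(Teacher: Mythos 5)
Your proposal is correct and follows essentially the same route as the paper's proof: your maps $h$ and $k$ are exactly the paper's $g_2$ and $g_1$, the identity $F_A(k)(f')=F_A(h)(f)$ is the same pivotal observation, and the case analysis via \cref{lem:monoid_lambda}\eqref{item:f_equal} for the $a$-valued coordinates and \cref{lem:monoid_lambda}\eqref{item:idempotent} with $n=0$ for the equality case matches the paper exactly.
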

\begin{proof}
  Define $g_1\colon \setn{n+m+2} \to \setn{2}$ by $g_1(i) = 1$ if $i \in \setn{n} \cup \{n+m+1\}$ and $2$ otherwise,
  and $g_2\colon X \to \setn{2}$ by $g_2(x') = 1$ if $x' = x$ and $2$ otherwise.
  Then the following equalities hold.
  \begin{align*}
    &\lambda_X(f)(x)
    = \lambda_\setn{2}(F_A(g_2)(f))(1) &&\text{by naturality for }g_2, \\
    &= \lambda_\setn{2}(F_A(g_1)(f'))(1) &&\text{since }F_A(g_2)(f) = F_A(g_1)(f'), \\
    &= n \cdot \lambda_{\setn{n+m+2}}(f')(1) + \lambda_{\setn{n+m+2}}(f')(n+m+1) &&\text{by naturality and \Cref{lem:monoid_lambda}.\ref{item:f_equal}}.
  \end{align*}
  If $\triangleright_1$ is equal to $=$, then 
  we can choose $a_1 = 0$, and then
  $\lambda_X(f)(x) = n \cdot \lambda_{\setn{n+m+2}}(f')(1)$ since
  $\lambda_{\setn{n+m+2}}(f')(n+m+1) = 0$, which follows from \cref{lem:monoid_lambda}.\ref{item:idempotent} with $n=0$.
  If $\triangleright_1$ is equal to $\geq$, then $\lambda_X(f)(x) \geq n \cdot \lambda_{\setn{n+m+2}}(f')(1)$ clearly holds.
  One can also prove 
    $\sum_{x' \in X \setminus \{x\}} \lambda_X(f)(x') \triangleright_2 m \cdot \lambda_{\setn{n+m+2}}(f')(1)$
  by considering $\lambda_\setn{2}(F_A(g_2)(f))(2)$ instead of $\lambda_\setn{2}(F_A(g_2)(f))(1)$.
\end{proof}

\subsection{For Natural Transformations from $F_A$ Where $A$ is Singly Generated} \label{sec:nt_single}

Let us focus on a commutative monoid $A$ generated by a single element $a \in A$.
In this setting, 
each $a' \in A$ can be written by the form $l \cdot a$ for some $l \in \mathbb{N}$.
We define $N\colon A \to \mathbb{N}$ that assigns to each $a' \in A$ the least natural number $l$ such that $l \cdot a = a'$.

For a singly generated commutative monoid $A$,
\cref{lem:lambda_m_n} provides an insight into 
the explicit form of $\lambda$. 
Informally, given a natural transformation $\lambda\colon F_A \Rightarrow F_B$ and $f\in F_A(X)$, the value $\lambda_X(f)(x)$ can be expressed  as 
$\lambda_X(f)(x) = N(f(x))\cdot O(f)$ for any $x\in X$, 
where $O\colon F_A(X) \to B$ is a certain function. 
Moreover, the function $O$ depends only on the sum of values of $f$.
This follows from the fact that
the naturality of $\lambda$ imposes conditions only on functions whose total sums are equal
because for any $g\in X\rightarrow Y$,
$F_A(g)$ preserves the sum of function values, i.e.~$\sum_{x \in X} f(x) = \sum_{y \in Y}(F_A(g)(f))(y)$.

\begin{theorem} \label{thm:monoid_lambda}
  Let $A$ be a commutative monoid 
  generated by a single element $a \in A$,
  and let $B$ be a commutative monoid.
 In the following two cases, the set $\Nat{F_A}{F_B}$ of natural transformations 
 can be explicitly characterised:
  \begin{enumerate}
    \item \label{item:case_n} Case 1: for each $n, m \in \mathbb{N}$, $n \neq m$ implies $n \cdot a \neq m \cdot a$.

    There exists an isomorphism 
    $\lambda^{(\_)}\colon \{b \in B^{\mathbb{N}} \mid b(0) = 0 \} \to \Nat{F_A}{F_B}$:
  \begin{displaymath}
    \lambda^b_X(f)(x) \coloneqq
    N(f(x)) \cdot b\big(\sum_{x \in X}N(f(x))\big).
  \end{displaymath}

    \item \label{item:case_cycle} Case 2: 
    $a \neq 0$ and there is $n \in \mathbb{N}_{> 1}$ such that $n \cdot a = a$ and
    $(n-1) \cdot a \neq 0$.

    Let $n$ be the least natural number such that $n > 1$ and $n \cdot a = a$.
   There exists an isomorphism $\lambda^{(\_)}\colon \{c \in B \mid n \cdot c = c\}^{\{0, \cdots, n-2\}} \to \Nat{F_A}{F_B}$ given by
  \begin{displaymath}
    \lambda^b_X(f)(x) \coloneqq
    N(f(x)) \cdot b\Big(\big[\sum_{x \in X}N(f(x))\big]\Big),
  \end{displaymath}
  where $[l]$ is the remainder of $l$ modulo $n-1$.
  \end{enumerate}
\end{theorem}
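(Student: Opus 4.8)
The plan is to show, in each case, that $\lambda^{(\_)}$ is a bijection of sets by establishing three facts: (i) every $\lambda^b$ is a well-defined natural transformation landing in the stated codomain, (ii) $b \mapsto \lambda^b$ is injective, and (iii) it is surjective. The engine throughout is \cref{lem:lambda_m_n}, which reduces the value $\lambda_X(f)(x)$ of an arbitrary natural transformation to values on the finite sets $\setn{k+2}$ evaluated at the distinguished functions $e_k \in F_A(\setn{k+2})$ given by $e_k(i) = a$ for $i \leq k$ and $e_k(i) = 0$ for $i \in \{k+1, k+2\}$; \cref{lem:monoid_lambda} supplies the side conditions we need, namely that $f(x)=0$ forces $\lambda_X(f)(x)=0$, that equal entries produce equal outputs (\cref{lem:monoid_lambda}.\ref{item:f_equal}), and that $n\cdot a = a$ forces $n$-fold idempotency of outputs (\cref{lem:monoid_lambda}.\ref{item:idempotent}). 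Since $A$ is generated by $a$, for any $f \in F_A(X)$ and $x\in X$ one has $f(x) = N(f(x))\cdot a$ and $\sum_{x'\neq x}f(x') = \big(\sum_{x'\neq x}N(f(x'))\big)\cdot a$ \emph{exactly}, so \cref{lem:lambda_m_n} always applies with $\triangleright_1=\triangleright_2={=}$ and $a_1=a_2=0$.

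Surjectivity is the heart. Given $\lambda$, fix $X, f, x$ and set $p := N(f(x))$, $m := \sum_{x'\neq x}N(f(x'))$. Then \cref{lem:lambda_m_n} yields $\lambda_X(f)(x) = p\cdot \lambda_{\setn{p+m+2}}(e_{p+m})(1)$, since the function $f'$ it produces is exactly $e_{p+m}$. In Case 1, $N$ is additive, so $p+m = \sum_{x'}N(f(x'))$ and I set $b(s) := \lambda_{\setn{s+2}}(e_s)(1)$; this gives $\lambda = \lambda^b$ directly, with $b(0)=0$ by \cref{lem:monoid_lambda}.\ref{item:idempotent} (case $n=0$). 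In Case 2 the extra work is to show that $k \mapsto \lambda_{\setn{k+2}}(e_k)(1)$ factors through the residue $[k] = k \bmod (n-1)$ for $k\geq 1$. For this I compare $e_k$ with $e_{k+(n-1)}$ via the map $g\colon \setn{k+n+1} \to \setn{k+2}$ that folds the first $n$ of the $a$-coordinates onto the single target $1$ (whose value becomes $n\cdot a = a$) and shifts the remaining coordinates; then $F_A(g)(e_{k+(n-1)}) = e_k$, and naturality combined with \cref{lem:monoid_lambda}.\ref{item:f_equal} (all folded entries are equal) and \cref{lem:monoid_lambda}.\ref{item:idempotent} (the resulting $n$-fold sum of a value at $a$ is absorbed because $n\cdot a = a$) gives $\lambda_{\setn{k+2}}(e_k)(1) = \lambda_{\setn{k+n+1}}(e_{k+(n-1)})(1)$. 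This legitimises the definition $b([k]) := \lambda_{\setn{k+2}}(e_k)(1)$ on $\{0,\dots,n-2\}$, and \cref{lem:monoid_lambda}.\ref{item:idempotent} places each value in $\{c\in B \mid n\cdot c = c\}$. The reduction then reads $\lambda_X(f)(x) = p\cdot b([p+m]) = N(f(x))\cdot b\big([\sum_{x'}N(f(x'))]\big)$, i.e.\ $\lambda = \lambda^b$.

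It remains to verify naturality and injectivity of $\lambda^{(\_)}$. Finite support of $\lambda^b_X(f)$ is immediate, as $N(f(x))=0$ whenever $f(x)=0$. For naturality I check the square for $g\colon X\to Y$: writing $K_y := \sum_{x\in g^{-1}(y)}N(f(x))$, the common scalar $b(\cdots)$ factors out of the sum over $g^{-1}(y)$, so it suffices to match (a) the argument of $b$ and (b) the coefficient $N(K_y\cdot a)$ against $K_y$. In Case 1 both follow from additivity of $N$. In Case 2, (a) holds because $N(K\cdot a) \equiv K \pmod{n-1}$ makes the two sums feeding $b$ agree modulo $n-1$; and (b) holds because any $c$ with $n\cdot c = c$ satisfies $(m+n-1)\cdot c = m\cdot c$ for all $m\geq 1$ (expand $(m-1)\cdot c + n\cdot c$), so $K_y\cdot c = N(K_y\cdot a)\cdot c$ whenever both are positive, and both vanish otherwise. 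Injectivity is direct: evaluating the $\setn{2}$-component of $\lambda^b$ at $f(1)=a$ (so $N(f(1))=1$) while letting $N(f(2))$ vary recovers $b(s)$ for every $s$ (Case 1) or $b(r)$ for every residue $r\in\{0,\dots,n-2\}$ (Case 2), consistently with \cref{prop:natural_trans_2}. The main obstacle is precisely Case 2: both the residue-invariance (folding) step in surjectivity and the coefficient identity in naturality rest on the identity $(m+n-1)\cdot c = m\cdot c$ for fixed points $c$ of multiplication-by-$n$, which is what simultaneously pins down the codomain $\{c \mid n\cdot c = c\}$ and the modulus $n-1$.
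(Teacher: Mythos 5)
Your proposal is correct and follows essentially the same route as the paper's proof: both reduce an arbitrary $\lambda$ to its values on the canonical test functions $f'_k\in F_A(\setn{k+2})$ via \cref{lem:lambda_m_n} (with $\triangleright_1=\triangleright_2={=}$, $a_1=a_2=0$), define $b$ from those values, and in Case 2 establish the period-$(n-1)$ invariance and the codomain constraint from \cref{lem:monoid_lambda}.\ref{item:idempotent}. The only cosmetic differences are that you phrase the bijection as injectivity plus surjectivity rather than exhibiting a two-sided inverse, derive the residue-invariance by an explicit folding map plus naturality instead of a second application of \cref{lem:lambda_m_n}, and reprove inline the modular-arithmetic facts ($N(K\cdot a)\equiv K \pmod{n-1}$ and $(m+n-1)\cdot c = m\cdot c$) that the paper isolates in an auxiliary lemma.
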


\begin{proofs}
  1) For each $n\in \mathbb{N}_{\geq 1}$, we use the function $f'_n \in F_A(\setn{n+2})$ given by $f'_n(i) = a$ for each $i \in \setn{n}$ and $f'_n(n+1) = f'_n(n+2) = 0$.
  We define the inverse $(\lambda^{(\_)})^{-1}\colon \Nat{F_A}{F_B}\rightarrow \{b \in B^\nat \mid b(0) = 0\}$ as follows: 
  Given a natural transformation $\lambda$, the inverse $b^{\lambda}\colon \mathbb{N} \to B$ is given by $b(0) = 0$ and $b(n) = \lambda_{\setn{n+2}}(f'_n)(1)$ for each $n \in \mathbb{N}_{\geq 1}$. 
  We can show that the mapping $b^{(\_)}$ is indeed the inverse by~\cref{lem:lambda_m_n}. 
  
  2) 
  The construction of the inverse is more involved. 
  We use the same function $f'_n$ for each $n\in \mathbb{N}_{\geq 1}$. 
  We define the inverse $(\lambda^{(\_)})^{-1}\colon \Nat{F_A}{F_B}\rightarrow  \{c \in B \mid n \cdot c = c\}$ as follows:
  Given a natural transformation $\lambda$, the inverse $b^{\lambda}\colon \{0, \cdots, n-2\} \to \{c \in B \mid n \cdot c = c\}$ is given by $b^\lambda(m) \defeq d^\lambda([m-1]+1)$, 
  where the function $d^\lambda\colon \mathbb{N}_{\geq 1} \to \{c \in B \mid n \cdot c = c\}$ is defined by $d^{\lambda}(m) \defeq \lambda_{\setn{m+2}}(f'_m)(1)$. 
  We can show that this mapping is the inverse by~\cref{lem:monoid_lambda}.\ref{item:idempotent} and~\cref{lem:lambda_m_n}. 
  
  \ifarxiv
  See~\cref{appendix:proof_monoid_lambda} for the full proof. 
  \else   
  See~\cite[Appendix A.4]{KWfull} for the full proof. 
  \fi
\end{proofs}

Let us instantiate this theorem with concrete examples.
\begin{example}[$\mfunc \Rightarrow F_B$] \label{eg:mfunc_natural_trans}
Consider the multiset functor $\mfunc = F_{(\mathbb{N}, +, 0)}$ (see \cref{eg:cmon_functor}).
The monoid $(\mathbb{N}, +, 0)$ is generated by $1$,
and
this falls under Case 1 of \cref{thm:monoid_lambda}.
Applying the theorem, we derive the following results:
  \begin{enumerate}
    \item
    There exists an isomorphism
    $\lambda\colon \{b \in\bools^{\mathbb{N}} \mid b(0) = \bot \} \to \Nat{\mfunc}{\powfunc}$ defined by 
      $\lambda_X^b(f)\coloneqq 
      \{x \in X \mid f(x) > 0 \text{ and } b(\sum_{x \in X} f(x))\}$.
    \item
    There exists an isomorphism
    $\lambda\colon \{b \in (\nnegreal)^{\mathbb{N}} \mid b(0) = 0\} \to \Nat{\mfunc}{\erealfunc}$ defined by 
      $\lambda_X^b(f)(x) \coloneqq 
      f(x) \cdot b(\sum_{x \in X} f(x))$.
    \item
    There exists an isomorphism
    $\lambda\colon \{b\in \nnegreal^{\mathbb{N}} \mid b(0) = 1\} \to \Nat{\mfunc}{\erealfunctimes}$ defined by 
      $\lambda_X^b(f)(x) \coloneqq 
      \big(b\big(\sum_{x \in X} f(x)\big)\big)^{f(x)}$.
  \end{enumerate}
\end{example}

\begin{example}[$\powfunc \Rightarrow F_B$] \label{eg:powfunc_natural_trans}
Consider the finite powerset functor $\powfunc = F_{(\bools, \lor, \bot)}$ (see \cref{eg:cmon_functor}).
This falls under Case 2 of \cref{thm:monoid_lambda}
since $(\bools, \lor, \bot)$ is generated by $\top$ that is idempotent.
By~\cref{thm:monoid_lambda}, natural transformations are constrained by the idempotent property of $\top$, and
the isomorphism takes the form:
$\lambda^{(\_)}\colon \{c \in B \mid c^2 = c\} \to \Nat{\powfunc}{F_B}$.
We then derive the following results:
  \begin{enumerate}
    \item There exists a unique natural transformation $\lambda\colon \powfunc \Rightarrow \erealfunc$
    (respectively, $\lambda\colon \powfunc \Rightarrow \mfunc$),
    which is
    given by $\lambda_X(S)(x) = 0$.
    \item There exists only two natural transformations $\lambda\colon \powfunc \Rightarrow \erealfunctimes$. 
    One is given by $\lambda_X(S)(x) \defeq 1$; and the other is by $\lambda_X(S)(x) \defeq 0$ if $x \in S$, and $1$ otherwise.
  \qed
  \end{enumerate}
\end{example}

The following lemma allows us to extend our results 
about natural transformations between $F_A$ and $F_B$ to their respective subfunctors.
\ifarxiv
See~\cref{ap:proof_subfunctor} for its proof.
\else   
See~\cite[Appendix A.5]{KWfull} for its proof.
\fi
\begin{lemma} \label{lem:subfunctor}
  Let $F$ and $G$ be subfunctors of $F_A$ and $F_B$, respectively,
  meaning that there exist $\iota_A\colon F \Rightarrow F_A$ and $\iota_B\colon G \Rightarrow F_B$ given by inclusions.
  Assume that for all $g\colon X \to Y$ in $\sets$ and $f \in F_A(X)$,
  $F_A(g)(f) \in F(Y)$ implies $f \in F(X)$.
  Then for any 
  natural transformation $\lambda\colon F \Rightarrow G$,
  there exists a natural transformation 
  $\lambda' \colon F_A \Rightarrow F_B$
  such that
  $\iota_B \circ \lambda = \lambda' \circ \iota_A$.
  \qed
\end{lemma}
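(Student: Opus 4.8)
The plan is to define the extension $\lambda'$ by the simplest possible recipe—letting it agree with $\lambda$ on $F$ and collapse to zero elsewhere—and to show that the hypothesis on $F$ is exactly what forces this assignment to be natural. Concretely, for each set $X$ and each $f \in F_A(X)$, I would set
\[
  \lambda'_X(f) \defeq
  \begin{cases}
    (\iota_B)_X\big(\lambda_X(f)\big) & \text{if } f \in F(X), \\
    \Delta_{0_B} & \text{if } f \notin F(X),
  \end{cases}
\]
where $\Delta_{0_B}$ is the zero of the monoid $F_B(X)$, i.e.\ the constant function with value $0_B$, which has empty (hence finite) support. The required identity $\iota_B \circ \lambda = \lambda' \circ \iota_A$ is then immediate, since on $f \in F(X)$ the first clause gives $\lambda'_X\big((\iota_A)_X(f)\big) = (\iota_B)_X(\lambda_X(f))$.

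The entire content of the lemma thus reduces to checking that $\lambda'$ is natural, i.e.\ that $F_B(g) \circ \lambda'_X = \lambda'_Y \circ F_A(g)$ for every $g\colon X \to Y$, and I would split this into two cases according to the clause defining $\lambda'_X(f)$. If $f \in F(X)$, then since $F$ is a subfunctor we have $F_A(g)(f) = F(g)(f) \in F(Y)$, so both sides invoke the first clause and equality is precisely the naturality square for $\lambda$ (using that $G$ is a subfunctor, so $F_B(g)$ restricted to $G(X)$ coincides with $G(g)$, which is the naturality of $\iota_B$). If instead $f \notin F(X)$, then $\lambda'_X(f) = \Delta_{0_B}$, and $F_B(g)(\Delta_{0_B}) = \Delta_{0_B}$ because $F_B(g)$ is defined by summation; on the other side, the reflection hypothesis—$F_A(g)(f') \in F(Y)$ implies $f' \in F(X)$—applied contrapositively yields $F_A(g)(f) \notin F(Y)$, whence $\lambda'_Y\big(F_A(g)(f)\big) = \Delta_{0_B}$ as well. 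Both sides coincide, so naturality holds.

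The step I expect to carry the weight is the case $f \notin F(X)$, and it is exactly here that the hypothesis on $F$ is indispensable: without it, $F_A(g)$ could send a non-$F$ element into $F(Y)$, forcing $\lambda'_Y\big(F_A(g)(f)\big) = (\iota_B)_Y(\lambda_Y(\dots))$, which need not equal $\Delta_{0_B}$, and naturality would break. It is worth recording a reformulation that makes this transparent: applying reflection to the unique map $!_X\colon X \to \setn{1}$, for which $F_A(!_X)$ computes the total sum, shows $F(X) = \{ f \in F_A(X) \mid \sum_{x \in X} f(x) \in F(\setn{1}) \}$, so that membership in $F$ depends only on the total sum and is genuinely preserved and reflected by the summation maps $F_A(g)$. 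I would include this remark to explain why the hypothesis is the natural one, even though the two-case argument above does not formally require it.
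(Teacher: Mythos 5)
Your proposal is correct and matches the paper's proof exactly: the paper also defines $\lambda'_X(f)$ as $\lambda_X(f)$ when $f \in F(X)$ and as $\Delta_0$ otherwise, with the reflection hypothesis doing precisely the work you identify in the $f \notin F(X)$ case. Your write-up simply spells out the naturality check that the paper leaves implicit.
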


\begin{proposition} \label{prop:subfunc_pm}
  \begin{enumerate}
    \item \label{item:pd} No natural transformation $\powfunc \Rightarrow \mathcal{D}$ exists.
    \item There exists a unique natural transformation $\lambda\colon \powfunc \Rightarrow \subdist$.
    \item No natural transformation $\mfunc \Rightarrow \mathcal{D}$ exists.
    \item
    There is an isomorphism $\lambda^{(\_)}\colon [0, 1]^{\mathbb{N}_{\geq 1}} \to \Nat{\mfunc}{\subdist}$ defined by
    \begin{displaymath}
      \lambda^b_X(f)(x) = 
      \begin{cases}
        \frac{f(x)}{\sum_{x \in X}f(x)} \cdot b\big(\sum_{x \in X}f(x)\big) &\text{if }\sum_{x \in X}f(x) > 0, \\
        0 &\text{if }\sum_{x \in X}f(x) = 0.
      \end{cases}
    \end{displaymath}
  \end{enumerate}
\end{proposition}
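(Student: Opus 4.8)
The plan is to treat all four parts uniformly by regarding $\mathcal{D}$ and $\subdist$ as subfunctors of $\erealfunc = F_{(\nnegreal,+,0)}$ and then invoking \cref{lem:subfunctor} to reduce each question to one about $\Nat{\powfunc}{\erealfunc}$ or $\Nat{\mfunc}{\erealfunc}$, which were already characterised in \cref{eg:powfunc_natural_trans,eg:mfunc_natural_trans}. First I would verify the hypothesis of \cref{lem:subfunctor}: both $\mathcal{D}$ and $\subdist$ are subfunctors of $\erealfunc$, because the pushforward $F_{(\nnegreal,+,0)}(g)$ preserves the total mass $\sum_{x}f(x)$ and hence sends distributions to distributions and subdistributions to subdistributions. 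Taking the \emph{domain} subfunctor to be the full functor itself (so that $\iota_A = \id$ and the hypothesis of \cref{lem:subfunctor} is trivially satisfied), the lemma yields, for $F_A \in \{\powfunc, \mfunc\}$ and $G \in \{\mathcal{D}, \subdist\}$, a bijection
\[
\Nat{F_A}{G} \;\cong\; \{\lambda \in \Nat{F_A}{\erealfunc} \mid \lambda_X(f) \in G(X) \text{ for all } X, f\},
\]
since a transformation into $\erealfunc$ descends to one into $G$ exactly when it factors through the inclusion. Everything then reduces to computing the total mass $\sum_{x}\lambda_X(f)(x)$ of the known transformations and checking when it equals $1$ (for $\mathcal{D}$) or stays $\leq 1$ (for $\subdist$).

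For the powerset domain (parts~\ref{item:pd} and the second item), \cref{eg:powfunc_natural_trans} tells us that the only natural transformation $\powfunc \Rightarrow \erealfunc$ is the zero transformation $\lambda_X(S)(x) = 0$, whose total mass is $0$ for every $S$. Since $0 \neq 1$, it does not factor through $\mathcal{D}$, so no natural transformation $\powfunc \Rightarrow \mathcal{D}$ exists; and since $0 \leq 1$, it does factor through $\subdist$, so being the unique candidate it is the unique natural transformation $\powfunc \Rightarrow \subdist$.

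For the multiset domain (parts three and four), \cref{eg:mfunc_natural_trans} tells us that $\Nat{\mfunc}{\erealfunc}$ consists exactly of the maps $\lambda^{b'}_X(f)(x) = f(x)\cdot b'(s)$ with $s \defeq \sum_{x\in X} f(x)$ and $b'\colon \mathbb{N} \to \nnegreal$, $b'(0)=0$; their total mass is $s\cdot b'(s)$. For $\mathcal{D}$ we would need $s\cdot b'(s)=1$ for all $s$, but the zero multiset gives $s=0$ and mass $0\neq 1$, so no such transformation exists. For $\subdist$ we need $s\cdot b'(s)\leq 1$ for all $s$, equivalently $b'(0)=0$ and $b'(s)\leq 1/s$ for $s\geq 1$ (the pointwise bound $\lambda^{b'}_X(f)(x)\leq 1$ is then automatic from nonnegativity and the mass bound). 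Finally I would match this with the stated formula by the substitution $b(s)\defeq s\cdot b'(s)$, i.e.\ $b'(s)=b(s)/s$ for $s\geq 1$: this is a bijection between $\{b'\mid b'(0)=0,\ s\cdot b'(s)\leq 1\}$ and $[0,1]^{\mathbb{N}_{\geq 1}}$, and substituting $f(x)\cdot b'(s)=\frac{f(x)}{s}\cdot b(s)$ recovers exactly the displayed $\lambda^b$ (with the $s=0$ case giving $0$). Injectivity of $b\mapsto \lambda^b$ follows from \cref{prop:natural_trans_2}, or directly from injectivity of $b'\mapsto \lambda^{b'}$.

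The only genuinely delicate point is the bookkeeping around \cref{lem:subfunctor}: I must pin down that the inclusions $\mathcal{D},\subdist \hookrightarrow \erealfunc$ are natural (the total-mass computation) and that passing to $\erealfunc$ neither adds nor drops transformations, so that ``factoring through $G$'' is precisely the mass condition. Once that correspondence is established, the remaining work is the elementary mass identity $\sum_{x}\lambda^{b'}_X(f)(x) = s\cdot b'(s)$ together with the reparametrisation $b(s)=s\cdot b'(s)$, both of which are routine.
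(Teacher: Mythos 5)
Your proposal is correct and follows essentially the same route as the paper: both reduce all four parts to the already-established characterisations of $\Nat{\powfunc}{\erealfunc}$ and $\Nat{\mfunc}{\erealfunc}$ via \cref{lem:subfunctor}, rule out $\mathcal{D}$ by the total-mass-zero argument at the empty (sub)set/multiset, and obtain part 4 by the same reparametrisation $b(s) = s \cdot b'(s)$. The only cosmetic difference is that the paper embeds $\mathcal{D}$ into $F_{([0,1],+,0)}$ for part 3 and cites \cref{lem:monoid_lambda}.\ref{item:idempotent} directly rather than the full characterisation, which changes nothing of substance.
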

We provide a proof of statement~\ref{item:pd} below.
\ifarxiv 
The detailed proofs of statements 2, 3, and 4 are given in~\cref{ap:proof_subfunc_pm}. 
\else  
The detailed proofs of statements 2, 3, and 4 are given in~\cite[Appendix A.6]{KWfull}. 
\fi 
\begin{proof}[Proof of~\ref{item:pd}]
  Suppose that a natural transformation $\lambda\colon \powfunc \Rightarrow \mathcal{D}$ exists.
  Since $\mathcal{D}$ is a subfunctor of $\erealfunc$, 
  \cref{lem:subfunctor} and \cref{thm:monoid_lambda}.\ref{item:case_cycle} imply that
  $\lambda$ should be 
  the constant zero transformation.
  It may have an image that is not contained in $\mathcal{D}$. 
  Consequently,
  no natural transformation $\powfunc \Rightarrow \mathcal{D}$ exists.
\end{proof}

\subsection{For Natural Transformations from $\erealfunc$}
We further investigate natural transformations $\lambda$ from $\erealfunc = F_{(\nnegreal, +, 0)}$ (see \cref{eg:cmon_functor}) to some $F_B$.
\cref{thm:monoid_lambda} is not applicable here, 
as $(\nnegreal, +, 0)$ is not generated by a single element.
However,
we can still determine the explicit form of $\lambda$
by utilizing the property that
the value $\lambda_X(f)(x)$ is constrained by
the sum $\sum_{x' \in X}\lambda_X(f)(x')$ and
the ratio of $f(x)$ to the total sum $\sum_{x \in X}f(x)$.
\begin{lemma} \label{lem:erealfunc_b}
  Let $B$ be a commutative monoid, 
  $\lambda\colon \erealfunc \Rightarrow F_B$
  be a natural transformation.
  For any $f \in \erealfunc(X)$, $x \in X$,
  and $m, n \in \mathbb{N}$ with $n < m$ such that  $\frac{n}{m} \cdot \sum_{x' \in X}f(x') \leq f(x) \leq \frac{n+1}{m}\cdot \sum_{x' \in X}f(x')$, 
  the following statements hold.
  \begin{enumerate}
    \item \label{item:b_lambda_1} There is $b \in B$ such that
$(m-n) \cdot b = \sum_{x' \in X \setminus \{x\}}\lambda_X(f)(x')$ and
$n \cdot b \leq \lambda_X(f)(x)$,
and
\item There is $b \in B$ such that
$(n+1) \cdot b = \lambda_X(f)(x)$
and
$(m-n-1) \cdot b \leq \sum_{x' \in X \setminus \{x\}}\lambda_X(f)(x')$.
  \end{enumerate}
  \end{lemma}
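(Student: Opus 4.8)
The plan is to derive both statements as direct applications of \cref{lem:lambda_m_n} to the monoid $A = (\nnegreal, +, 0)$, exploiting that $\nnegreal$ is divisible: for any non-negative real and any positive integer we may form the quotient inside $\nnegreal$, and this is exactly what lets us realise the ratio bounds of the hypothesis as \emph{exact} integer multiples of a single witness $a$. Recall also that the preorder $\leq$ on $\nnegreal$ used in \cref{lem:lambda_m_n} coincides with the usual order, so the stated inequalities are inequalities in the ordinary sense. Throughout, write $S \defeq \sum_{x' \in X} f(x')$, and note $\sum_{x' \in X \setminus \{x\}} f(x') = S - f(x)$.

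For the first statement, I would set $a \defeq \frac{1}{m-n}\sum_{x' \in X \setminus \{x\}} f(x')$, which is well defined since $n < m$. By construction $\sum_{x' \in X \setminus \{x\}} f(x') = (m-n)\cdot a$, so the second hypothesis of \cref{lem:lambda_m_n} holds with $\triangleright_2$ equal to $=$ and coefficient $m-n$. For the first hypothesis I claim $f(x) \geq n \cdot a$: this rearranges, via $\sum_{x' \in X \setminus \{x\}} f(x') = S - f(x)$, to $m\cdot f(x) \geq n\cdot S$, i.e.~$f(x) \geq \tfrac{n}{m}S$, which is precisely the left-hand bound. Applying \cref{lem:lambda_m_n} with $\triangleright_1$ equal to $\geq$ (coefficient $n$) and $\triangleright_2$ equal to $=$ (coefficient $m-n$), and taking $b \defeq \lambda_{\setn{m+2}}(f')(1)$ for the associated $f' \in \erealfunc(\setn{m+2})$, yields $n\cdot b \leq \lambda_X(f)(x)$ and $(m-n)\cdot b = \sum_{x' \in X \setminus \{x\}}\lambda_X(f)(x')$, as required.

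For the second statement I would instead set $a \defeq \frac{1}{n+1} f(x)$, so that $f(x) = (n+1)\cdot a$ holds exactly, giving the first hypothesis with $\triangleright_1$ equal to $=$ and coefficient $n+1$. The second hypothesis $\sum_{x' \in X \setminus \{x\}} f(x') \geq (m-n-1)\cdot a$ rearranges to $m \cdot f(x) \leq (n+1)\cdot S$, i.e.~$f(x) \leq \tfrac{n+1}{m}S$, which is the right-hand bound (here $n < m$ guarantees $m-n-1 \geq 0$). Applying \cref{lem:lambda_m_n} with $\triangleright_1$ equal to $=$ (coefficient $n+1$) and $\triangleright_2$ equal to $\geq$ (coefficient $m-n-1$), and again taking $b \defeq \lambda_{\setn{m+2}}(f')(1)$, yields $(n+1)\cdot b = \lambda_X(f)(x)$ and $(m-n-1)\cdot b \leq \sum_{x' \in X \setminus \{x\}}\lambda_X(f)(x')$.

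The only genuine choice is the witness $a$ in each case; once it is fixed, verifying the two hypotheses is a one-line rearrangement of the ratio bounds, and the conclusion is read off directly from \cref{lem:lambda_m_n}. I expect the main subtlety to be purely bookkeeping: matching the coefficients $n$ and $m-n$ (respectively $n+1$ and $m-n-1$) to the roles of ``$n$'' and ``$m$'' in \cref{lem:lambda_m_n}, and checking that its index set $\setn{n+m+2}$ specialises to $\setn{m+2}$ in both cases. Divisibility of $\nnegreal$ is the essential structural feature exploited here, and it is precisely because $\nnegreal$ is not singly generated that \cref{thm:monoid_lambda} does not apply.
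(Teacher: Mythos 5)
Your proof is correct and follows essentially the same route as the paper's: both choose the witnesses $a = \frac{1}{m-n}\sum_{x'\neq x}f(x')$ and $a = \frac{f(x)}{n+1}$ respectively, rearrange the ratio bounds into the hypotheses of \cref{lem:lambda_m_n} with the same parameter choices $(n, m-n, \geq, =)$ and $(n+1, m-n-1, =, \geq)$, and read off $b = \lambda_{\setn{m+2}}(f')(1)$. No gaps.
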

\begin{proof}
  Let $r \coloneqq \sum_{x' \in X}f(x')$.
  Note that 
$\frac{n}{m} \cdot r \leq f(x) \leq \frac{n+1}{m}\cdot r$
is equivalent to 
$\frac{n}{m-n} \cdot (r-f(x)) \leq f(x)$ and $\frac{m-n-1}{n+1}\cdot f(x) \leq r-f(x)$.
\cref{lem:lambda_m_n} for $(f, x, \frac{r-f(x)}{m-n}, n, m-n, \geq, =)$
implies that
$n \cdot \lambda_{\setn{m+2}}(f')(1) \leq \lambda_X(f)(x)$
and
$(m-n) \cdot \lambda_{\setn{m+2}}(f')(1) = \sum_{x' \in X \setminus \{x\}}\lambda_X(f)(x')$
where $f'$ is defined in \cref{lem:lambda_m_n}.
Similarly, 
\cref{lem:lambda_m_n} for $(f, x, \frac{f(x)}{n+1}, n+1, m-n-1, =, \geq)$
implies the second statement. 
\end{proof}

When $F_B = \powfunc$,
\cref{lem:erealfunc_b} allows us to conclude that
for any $f \in \erealfunc(X)$ and $x \in X$,
$\lambda_X(f)(x) = \bigvee_{x' \in X}\lambda_X(f)(x')$ if $f(x) > 0$.
This leads to the following result.
\ifarxiv
See \cref{ap:proof_ereal_powfunc} for the proof.
\else      
See~\cite[Appendix A.7]{KWfull} for the proof.
\fi 
\begin{proposition} \label{prop:ereal_powfunc}
  There is an isomorphism $\lambda^{(\_)}\colon \{b \in\bools^{\nnegreal} \mid b(0) = \bot \} \to \Nat{\erealfunc}{\powfunc}$ defined by
    $\lambda^b_X(f) = 
      \{x \in X \mid f(x) > 0 \text{ and }b\big(\sum_{x \in X}f(x)\big)\}$.\qed
\end{proposition}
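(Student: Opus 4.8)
The plan is to produce the two-sided inverse of $\lambda^{(\_)}$ explicitly and to check that both composites are identities, relying on the collapse property already extracted from \cref{lem:erealfunc_b} together with naturality along the projection onto a singleton.

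First I would confirm that each $\lambda^b$ is a well-defined natural transformation $\erealfunc \Rightarrow \powfunc$. Given $g\colon X \to Y$ and $f \in \erealfunc(X)$, set $r \defeq \sum_{x \in X}f(x)$; since $\erealfunc(g)$ preserves this total, evaluating both $\powfunc(g)(\lambda^b_X(f))$ and $\lambda^b_Y(\erealfunc(g)(f))$ at a point $y \in Y$ reduces each to the conjunction of $b(r)$ with the condition that $f(x) > 0$ for some $x \in g^{-1}(y)$. The only point needing care is the equivalence ``$\sum_{x \in g^{-1}(y)}f(x) > 0$ iff $f(x) > 0$ for some $x \in g^{-1}(y)$'', which holds precisely because the values of $f$ are nonnegative.

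Next I would build the inverse. Writing $\Delta_r \in \erealfunc(\setn{1})$ for the function sending $1$ to $r$, define $b^\lambda\colon \nnegreal \to \bools$ by $b^\lambda(r) \defeq \lambda_{\setn{1}}(\Delta_r)(1)$. \cref{lem:monoid_lambda}.\ref{item:idempotent} with $n = 0$ yields $b^\lambda(0) = \bot$, so $b^\lambda$ lands in the prescribed domain $\{b \in \bools^{\nnegreal} \mid b(0) = \bot\}$. The identity $b^{\lambda^b} = b$ is then immediate: evaluating $\lambda^b$ on $\Delta_r$ returns $b(r)$ when $r > 0$, and $\bot = b(0)$ when $r = 0$.

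The crux is $\lambda^{b^\lambda} = \lambda$, which I would prove pointwise. Fix $X$, $f \in \erealfunc(X)$, and $x \in X$, and put $r \defeq \sum_{x' \in X}f(x')$. If $f(x) = 0$ then $\lambda_X(f)(x) = \bot$ by \cref{lem:monoid_lambda}.\ref{item:idempotent} (again with $n = 0$), matching $\lambda^{b^\lambda}_X(f)(x) = \bot$. If $f(x) > 0$, the consequence of \cref{lem:erealfunc_b} noted above gives $\lambda_X(f)(x) = \bigvee_{x' \in X}\lambda_X(f)(x')$; applying naturality of $\lambda$ to the unique map $!\colon X \to \setn{1}$, together with the observation that $\erealfunc(!)(f) = \Delta_r$, identifies this join with $\lambda_{\setn{1}}(\Delta_r)(1) = b^\lambda(r)$, which is exactly $\lambda^{b^\lambda}_X(f)(x)$. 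The main obstacle is this final identification, but since the earlier discussion has already isolated the collapse property from \cref{lem:erealfunc_b}, what remains here is only the short naturality computation along the terminal map $X \to \setn{1}$.
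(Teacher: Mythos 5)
Your proposal is correct and follows essentially the same route as the paper's own proof: the same inverse $b^\lambda(r) = \lambda_{\setn{1}}(\Delta_r)(1)$, the same use of \cref{lem:monoid_lambda}.\ref{item:idempotent} with $n=0$ for the zero cases, the collapse property $\lambda_X(f)(x) = \bigvee_{x'\in X}\lambda_X(f)(x')$ for $f(x)>0$ derived from \cref{lem:erealfunc_b}, and the final identification via naturality along $!_X\colon X \to \setn{1}$. The only difference is that you spell out the (routine) naturality check for $\lambda^b$, which the paper leaves implicit.
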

  
Similarly, by \cref{lem:erealfunc_b} when $F_B = \erealfunc$,
we can prove
that 
the ratio of $\lambda_X(f)(x)$ to $\sum_{x' \in X}\lambda_X(f)(x')$ is equal to the ratio of $f(x)$ to $\sum_{x' \in X}f(x')$.
This allows us to establish the following result.
\begin{proposition} \label{prop:ereal_ereal}
    There is an isomorphism $\lambda^{(\_)}\colon \{b \in (\nnegreal)^{\nnegreal} \mid b(0) = 0\} \to \Nat{\erealfunc}{\erealfunc}$ defined by
    $\lambda^b_X(f)(x) = 
    f(x) \cdot b\big(\sum_{x \in X}f(x)\big)$.
\end{proposition}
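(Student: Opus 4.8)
The plan is to produce an explicit two-sided inverse to $\lambda^{(\_)}$, with essentially all the work in surjectivity. First I would verify that $\lambda^b$ is a well-defined natural transformation for each $b$ with $b(0)=0$: since $b\big(\sum_{x\in X}f(x)\big)$ is a scalar, $\supp(\lambda^b_X(f))\subseteq\supp(f)$ is finite, and naturality reduces to the observation that $\erealfunc(g)$ preserves total sums, i.e.\ $\sum_{y\in Y}\erealfunc(g)(f)(y)=\sum_{x\in X}f(x)$ for every $g\colon X\to Y$; substituting this into both legs of the naturality square and pulling out the common scalar makes them agree. For the candidate inverse I would set $b^\lambda(0)\defeq 0$ and $b^\lambda(r)\defeq \lambda_{\setn{1}}(\Delta_r)(1)/r$ for $r>0$, which is non-negative because $\lambda$ lands in $\erealfunc$. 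The identity $b^{\lambda^b}=b$ is immediate from $\lambda^b_{\setn{1}}(\Delta_r)(1)=r\cdot b(r)$, so it remains to prove $\lambda^{b^\lambda}=\lambda$.

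Abbreviating $S\defeq\sum_{x'\in X}f(x')$ and $T\defeq\sum_{x'\in X}\lambda_X(f)(x')$, the heart of the argument is the key identity
\[
  \lambda_X(f)(x)\cdot S \;=\; f(x)\cdot T \qquad\text{for every } x\in X .
\]
If $f(x)=0$ this holds because $\lambda_X(f)(x)=0$ by \cref{lem:monoid_lambda}.\ref{item:idempotent} with $n=0$. If $f(x)>0$ but $\sum_{x'\neq x}f(x')=0$, then every $x'\neq x$ has $f(x')=0$, so $\lambda_X(f)(x')=0$; hence $T=\lambda_X(f)(x)$ and $S=f(x)$, making the identity trivial.

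The remaining case $f(x)>0$ and $S'\defeq\sum_{x'\neq x}f(x')>0$ is where \cref{lem:erealfunc_b} enters. Writing $\alpha\defeq f(x)/S\in(0,1)$ and $R\defeq\sum_{x'\neq x}\lambda_X(f)(x')$, for each $m$ I would take $n\defeq\lfloor\alpha m\rfloor$, so that $\tfrac{n}{m}\leq\tfrac{f(x)}{S}\leq\tfrac{n+1}{m}$ with $0\le n<m$ and $m-n-1>0$ for large $m$. The two clauses of \cref{lem:erealfunc_b} then give $\tfrac{n}{m-n}\,R\le \lambda_X(f)(x)\le\tfrac{n+1}{m-n-1}\,R$, while the defining inequalities of $n$, rewritten using $S=f(x)+S'$, give $\tfrac{n}{m-n}\,S'\le f(x)\le\tfrac{n+1}{m-n-1}\,S'$. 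Multiplying the first chain by $S'\ge 0$ and the second by $R\ge 0$ places both $\lambda_X(f)(x)\cdot S'$ and $f(x)\cdot R$ inside $\big[\tfrac{n}{m-n}\,R S',\,\tfrac{n+1}{m-n-1}\,R S'\big]$. Since $\tfrac{n}{m-n}$ and $\tfrac{n+1}{m-n-1}$ both tend to $\tfrac{\alpha}{1-\alpha}$ as $m\to\infty$, the width of this interval tends to $0$, forcing $\lambda_X(f)(x)\cdot S'=f(x)\cdot R$; expanding $S'=S-f(x)$ and $R=T-\lambda_X(f)(x)$ and cancelling the cross terms yields the key identity.

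Finally I would identify the ratio $T/S$ with $b^\lambda$. Applying naturality to the unique $g\colon X\to\setn{1}$ gives $T=\erealfunc(g)(\lambda_X(f))(1)=\lambda_{\setn{1}}(\Delta_S)(1)=S\cdot b^\lambda(S)$ whenever $S>0$, so the key identity becomes $\lambda_X(f)(x)=f(x)\cdot b^\lambda(S)=\lambda^{b^\lambda}_X(f)(x)$; when $S=0$ we have $f=0$ and both sides vanish. I expect the squeezing step to be the main obstacle: the transformation is constrained by \cref{lem:erealfunc_b} only through integer multiples $n\cdot(\_)$ and $m\cdot(\_)$, so the real ratio $\tfrac{f(x)}{S}$ can be matched only in the limit $m\to\infty$ after approximating it by rationals $n/m$, and the equality emerges asymptotically rather than at any finite stage. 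One must therefore choose $n=\lfloor\alpha m\rfloor$ so that \cref{lem:erealfunc_b} stays applicable (in particular $m-n-1>0$) and dispatch the boundary case $S'=0$ separately, as above.
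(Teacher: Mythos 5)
Your proposal is correct and follows essentially the same route as the paper's proof: the same inverse $b^\lambda(r)=\tfrac{1}{r}\lambda_{\setn{1}}(\Delta_r)(1)$, the same reliance on \cref{lem:monoid_lambda}.\ref{item:idempotent} for $f(x)=0$ and on \cref{lem:erealfunc_b} plus a limit $m\to\infty$ over rational approximations $n/m$ of $f(x)/\sum_{x'}f(x')$ for the main case, and the same final appeal to naturality along $!_X$. The only (cosmetic) difference is that you squeeze the products $\lambda_X(f)(x)\cdot S'$ and $f(x)\cdot R$ into a shrinking interval to get the cross-multiplied identity, whereas the paper squeezes $\lambda_X(f)(x)$ directly between $\tfrac{n}{m}$ and $\tfrac{n+1}{m}$ times the total $\sum_{x'}\lambda_X(f)(x')$.
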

\begin{proof}
  The inverse $b^{(\_)}$ of $\lambda^{(\_)}$ is defined as follows: given  $\lambda\colon \erealfunc \Rightarrow \erealfunc$, the  function $b^{\lambda}\colon \nnegreal \to \nnegreal$ is defined by $b(0) \coloneqq 0$ and $b^{\lambda}(r) \coloneqq \frac{1}{r} \cdot \lambda_\setn{1}(\Delta_{r})(1)$. 
  We first prove that the mapping $\lambda^{(\_)}$ is the left inverse of $b^{(\_)}$. 
  Let $\lambda\colon \erealfunc \Rightarrow \erealfunc$, $f \in \erealfunc(X)$, and $x \in X$.
  If $f(x) = 0$, then clearly $\lambda_X(f)(x) = 0 = \lambda_X^{b^{\lambda}}(f)(x)$ by \cref{lem:monoid_lambda}.\ref{item:idempotent} with $n=0$.
  Otherwise (i.e.~$f(x) > 0$),  
  define $r \defeq \sum_{x' \in X}f(x')$.
  By \cref{lem:erealfunc_b}, for any natural numbers $m$ and $n$ such that $n < m$ and  $\frac{n}{m} \leq \frac{f(x)}{r} \leq \frac{n+1}{m}$, we obtain the inequalities:
  \begin{displaymath}
    \frac{n}{m} \cdot \bigg(\sum_{x' \in X}\lambda_X(f)(x')\bigg) \ \leq \ \lambda_X(f)(x)
    \ \leq \ \frac{n+1}{m} \cdot \bigg(\sum_{x' \in X}\lambda_X(f)(x')\bigg).
  \end{displaymath}
  Taking limits as $m \to \infty$,
these inequalities yield $\frac{f(x)}{r} \cdot \sum_{x' \in X}\lambda_X(f)(x') = \lambda_X(f)(x)$. 
Furthermore, by the naturality of $\lambda$ for $!_X$, we have
$\sum_{x' \in X}\lambda_X(f)(x') = \lambda_\setn{1}(\Delta_{r})(1)$, implying the equation $\lambda_X(f)(x) = \frac{f(x)}{r} \cdot \lambda_\setn{1}(\Delta_{r})(1) = \lambda_X^{b^{\lambda}}(f)(x)$. 
It remains to show that $\lambda^{(\_)}$ is the right inverse, which is easy to check.
\end{proof}

This result, together with \cref{lem:subfunctor}, allows us to analyze 
natural transformations for important cases involving the (sub)distribution functor and the multiset functor.
\ifarxiv
See~\cref{ap:proof_subdist_ereal} for the omitted proof.
\else   
See~\cite[Appendix A.8]{KWfull} for the omitted proof.   
\fi 
\begin{corollary} \label{cor:subdist_ereal}
  \begin{enumerate}
    \item \label{item:subdist_1} There is an isomorphism $\lambda^{(\_)}\colon \{b \in (\nnegreal)^{[0, 1]} \mid b(0) = 0 \} \to \Nat{\subdist}{\erealfunc}$ defined by
    $\lambda^b_X(f)(x) = 
    f(x) \cdot b(\sum_{x \in X}f(x))$.
  \item \label{item:subdist_2}
    There is an isomorphism $\lambda^{(\_)}\colon \nnegreal \to \Nat{\mathcal{D}}{\erealfunc}$ defined by
    $\lambda^b_X(f)(x) = 
    f(x) \cdot b$.
  \item \label{item:subdist_3}
    There is an isomorphism $\lambda^{(\_)}\colon [0, 1]^{(0, 1]} \to \Nat{\subdist}{\subdist}$ defined by
    \begin{displaymath}
      \pushQED{\qed}
      \lambda^b_X(f)(x) = 
      \begin{cases}
        \frac{f(x)}{\sum_{x \in X}f(x)} \cdot b\big(\sum_{x \in X}f(x)\big) &\text{if }\sum_{x \in X}f(x) > 0, \\
        0 &\text{if }\sum_{x \in X}f(x) = 0.
      \end{cases}
      \qedhere
      \popQED
    \end{displaymath}
  \end{enumerate}
\end{corollary}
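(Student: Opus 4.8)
The plan is to derive all three isomorphisms from the characterisation of $\Nat{\erealfunc}{\erealfunc}$ already obtained in \cref{prop:ereal_ereal}, transporting natural transformations across the subfunctor inclusions $\mathcal{D} \hookrightarrow \subdist \hookrightarrow \erealfunc$ by means of \cref{lem:subfunctor}. First I would verify that the hypothesis of \cref{lem:subfunctor} holds with domain subfunctor $\subdist$ or $\mathcal{D}$: since $\erealfunc(g)(f)(y) = \sum_{x \in g^{-1}(y)} f(x)$ preserves total mass, we have $\sum_{y} \erealfunc(g)(f)(y) = \sum_x f(x)$, so $\erealfunc(g)(f) \in \subdist(Y)$ (resp.\ $\mathcal{D}(Y)$) forces $\sum_x f(x) \le 1$ (resp.\ $=1$), i.e.\ $f \in \subdist(X)$ (resp.\ $f \in \mathcal{D}(X)$). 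Hence both $\subdist$ and $\mathcal{D}$ qualify as admissible domain subfunctors of $\erealfunc$.

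For statements~\ref{item:subdist_1} and~\ref{item:subdist_2} I would argue as follows. Given $\lambda\colon \subdist \Rightarrow \erealfunc$ (resp.\ $\mathcal{D} \Rightarrow \erealfunc$), \cref{lem:subfunctor} yields an extension $\lambda'\colon \erealfunc \Rightarrow \erealfunc$ restricting to $\lambda$, and \cref{prop:ereal_ereal} gives $\lambda'_X(f)(x) = f(x)\cdot b'(\sum_x f(x))$ for a unique $b'\colon \nnegreal \to \nnegreal$ with $b'(0)=0$. On $\subdist$ only the values $b'|_{[0,1]}$ are ever evaluated, since $\sum_x f(x) \in [0,1]$; this gives statement~\ref{item:subdist_1} with $b := b'|_{[0,1]}$. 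On $\mathcal{D}$ only the single value $b'(1)$ is ever evaluated, since $\sum_x f(x)=1$; this gives statement~\ref{item:subdist_2} with scalar $b := b'(1)$. Well-definedness of $\lambda^{(\_)}$ in the reverse direction is inherited from \cref{prop:ereal_ereal} by extending $b$ to $\nnegreal$ arbitrarily subject to $b(0)=0$, and injectivity follows by evaluating at the point masses $\Delta_r \in \subdist(\setn{1})$ (resp.\ $\Delta_1 \in \mathcal{D}(\setn{1})$), which recover $r\cdot b(r)$ (resp.\ $b$).

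For statement~\ref{item:subdist_3} the same extension step produces $\lambda'_X(f)(x)=f(x)\cdot b'(\sum_x f(x))$, but now the codomain condition is the crux: the image must lie in $\subdist$, so for every $f \in \subdist(X)$ we require $\sum_x \lambda_X(f)(x) = \big(\sum_x f(x)\big)\cdot b'\big(\sum_x f(x)\big) \le 1$, i.e.\ $s\cdot b'(s) \le 1$ for all $s \in [0,1]$. I would then reparametrise by setting $b(s) := s\cdot b'(s)$ for $s \in (0,1]$; this value lies exactly in $[0,1]$, and substituting $b'(s)=b(s)/s$ reproduces the stated case expression $\frac{f(x)}{\sum f}\, b(\sum f)$ for $\sum f>0$, the branch $\sum f=0$ being trivially $0$ since then $f\equiv 0$. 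Conversely each $b \in [0,1]^{(0,1]}$ defines an $\erealfunc \Rightarrow \erealfunc$ transformation via $b'(s)=b(s)/s$ whose restriction to $\subdist$ has total mass $b(\sum f)\le 1$, hence genuinely lands in $\subdist$; injectivity again follows by testing against $\Delta_r$, recovering $b(r)$ for each $r \in (0,1]$.

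The main obstacle I anticipate is the bookkeeping in statement~\ref{item:subdist_3}: one must confirm that restricting the codomain to $\subdist$ translates precisely into the constraint $s\cdot b'(s)\le 1$, and that the reparametrisation $b(s)=s\,b'(s)$ is a bijection onto exactly $[0,1]^{(0,1]}$ --- in particular that no residual constraint at $s=0$ survives, since $0 \notin (0,1]$. The remaining steps are routine given \cref{prop:ereal_ereal} and \cref{lem:subfunctor}, amounting to domain-restriction of the parameter $b'$ together with evaluation at point masses.
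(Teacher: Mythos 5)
Your proposal is correct and follows essentially the same route as the paper's proof: both extend a given natural transformation along the subfunctor inclusions via \cref{lem:subfunctor}, invoke \cref{prop:ereal_ereal} to obtain the form $f(x)\cdot b'\big(\sum_{x\in X}f(x)\big)$, and then restrict or reparametrise the parameter (in particular, the paper's inverse for statement~\ref{item:subdist_3} is exactly your $b(r)=r\cdot b'(r)$). Your explicit verification of the mass-preservation hypothesis of \cref{lem:subfunctor} is a small addition the paper leaves implicit, but it does not change the argument.
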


\begin{corollary}
  There exists a unique natural transformation $\lambda\colon \erealfunc \to \mfunc$ given by $\lambda_X(f)(x) = 0$.
\end{corollary}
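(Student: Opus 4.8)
The plan is to reduce the claim to \cref{prop:ereal_ereal} by post-composing with the canonical inclusion. Existence is immediate: the constant-zero assignment $\lambda_X(f)(x) = 0$ is exactly $F_i$ for the zero monoid morphism $i\colon (\nnegreal, +, 0) \to (\mathbb{N}, +, 0)$, $i(a) \defeq 0$, and is therefore a natural transformation by \cref{def:func_cmon}. So the real content of the statement is uniqueness.

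For uniqueness, I would let $\lambda\colon \erealfunc \Rightarrow \mfunc$ be arbitrary and let $\iota\colon \mfunc \Rightarrow \erealfunc$ denote the inclusion natural transformation, i.e.\ $\iota = F_j$ for the submonoid inclusion $j\colon (\mathbb{N}, +, 0) \hookrightarrow (\nnegreal, +, 0)$. Then $\iota \circ \lambda$ is a natural transformation $\erealfunc \Rightarrow \erealfunc$, so by \cref{prop:ereal_ereal} there is a unique $b\colon \nnegreal \to \nnegreal$ with $b(0) = 0$ such that $(\iota \circ \lambda)_X(f)(x) = f(x) \cdot b\big(\sum_{x' \in X} f(x')\big)$. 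Since $\iota$ is an inclusion, the crucial extra fact is that every such value $f(x) \cdot b(\sum_{x'} f(x'))$ must in fact lie in $\mathbb{N}$.

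I would then exploit this integrality together with the freedom to vary a single weight while holding the total sum fixed. Fix $r > 0$ and work on $\setn{2}$: for each $s \in [0, r]$ put $f_s(1) \defeq s$ and $f_s(2) \defeq r - s$, so $\sum f_s = r$ and the constraint reads $s \cdot b(r) \in \mathbb{N}$ for all $s \in [0, r]$. As $s$ ranges over $[0, r]$ the quantity $s \cdot b(r)$ sweeps out the whole interval $[0,\, r\, b(r)]$; since this set of values is contained in the discrete set $\mathbb{N}$, the interval must be degenerate, forcing $r \cdot b(r) = 0$ and hence $b(r) = 0$. As $r > 0$ was arbitrary and $b(0) = 0$, we conclude $b \equiv 0$, so $\iota \circ \lambda = 0$ and therefore $\lambda_X(f)(x) = 0$ for all $f$ and $x$.

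The only genuinely delicate point is this integrality-versus-continuity step: the target monoid $(\mathbb{N}, +, 0)$ is discrete, so a multiset cannot record a real ratio $f(x)/\sum f$, whereas \cref{prop:ereal_ereal} forces any $\erealfunc$-valued natural transformation to reproduce exactly that ratio, and it is precisely this clash that collapses $\lambda$ to zero. Everything else is routine bookkeeping, namely checking that $\iota$ is a natural transformation and that the constant-zero map qualifies as one. Note that \cref{lem:subfunctor} is of no help here, since $\mfunc$ sits inside $\erealfunc$ on the \emph{target} side, and its hypothesis---that $\erealfunc(g)(f) \in \mfunc(Y)$ should force $f \in \mfunc(X)$---fails, as collapsing two points of weight $\tfrac{1}{2}$ produces an integer sum from non-integer weights.
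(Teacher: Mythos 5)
Your proof is correct and follows the paper's strategy almost exactly: both arguments post-compose with the inclusion $\mfunc \Rightarrow \erealfunc$ to land in the setting of \cref{prop:ereal_ereal}, extract the unique $b$ with $\lambda_X(f)(x) = f(x)\cdot b(\sum_{x'} f(x'))$, and then use the integrality of the values of $\lambda_X(f)$ to force $b \equiv 0$. The only substantive difference is the final contradiction: the paper fixes $r$ with $b(r)>0$ and picks $f \in \erealfunc(\setn{2})$ with total mass $r$ and irrational ratio $f(1)/f(2)$, so that $\lambda_{\setn{2}}(f)(1)/\lambda_{\setn{2}}(f)(2) = f(1)/f(2)$ would be an irrational ratio of two positive naturals; you instead sweep $s\cdot b(r)$ over the uncountable interval $[0, r\,b(r)]$ and note it cannot sit inside the discrete set $\mathbb{N}$. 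Both are valid and of essentially equal weight. One small correction: your closing remark that \cref{lem:subfunctor} is of no help here misreads its hypothesis. The condition ``$F_A(g)(f)\in F(Y)$ implies $f\in F(X)$'' constrains the \emph{source} subfunctor $F \subseteq F_A$, not the target one; in this application the source is all of $\erealfunc$, so the hypothesis is vacuous and the lemma applies (trivially), producing exactly the extension $\iota_B\circ\lambda$ that you build by hand. This is indeed how the paper invokes it, so nothing in your argument breaks---you have simply re-derived the degenerate case of the lemma directly.
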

\begin{proof}
  Let $\lambda\colon \erealfunc \Rightarrow \mfunc$ be a natural transformation.
  By \cref{lem:subfunctor} and \cref{prop:ereal_ereal}, there exists a unique $b'\colon \nnegreal \to \nnegreal$ such that $b'(0) = 0$ and $\lambda_X(f)(x) = f(x) \cdot b'\big(\sum_{x \in X}f(x)\big)$ for each $f \in \erealfunc(X)$ and $x \in X$. 
  Now, assume for contradiction that
  there exists $r > 0$ such that $b'(r) > 0$.
  We can construct a function $f\in \erealfunc(\setn{2})$ such that
  $\sum_{x \in \setn{2}}f(x) = r$, $f(2) \neq 0$, and
  $\frac{f(1)}{f(2)}$ is an irrational number.
  Since $\lambda_{\setn{2}}(f)(1)$ and $\lambda_{\setn{2}}(f)(2)$ must be natural numbers, it follows that their ratio must be rational: $\frac{\lambda_{\setn{2}}(f)(1)}{\lambda_{\setn{2}}(f)(2)} = \frac{f(1) \cdot b'(r)}{f(2) \cdot b'(r)} = \frac{f(1)}{f(2)}$. This is a contradiction.
  Therefore, $b'(r)$ must be $0$ for each $r \in \mathbb{R}_{> 0}$.
\end{proof}

\section{Markov Chains and Non-deterministic Finite Automata} 
\label{sec:mcnfas}

To handle a specific model checking problem using the coalgebraic product construction,
it is necessary to specify the structural components described in \cref{prop:correctCriterion}:
a distributive law,
semantic structures for both components and their composite,
and an inference map.
  Assume that the semantic structures for the individual components and the composite system, together with the appropriate inference map, are already determined.
This assumption is reasonable: the composite system is expected to be well studied and readily solvable by existing methods, and the model checking problem typically determines the relevant inference map.
For the coalgebraic product construction to be valid, we must identify a distributive law that satisfies the correctness criterion with the above data.
In this section, we prove that 
no such distributive law exists
for coalgebraic product constructions involving MCs and NFAs.

\begin{definition}[semantic structure of NFAs]
    \label{def:nfas}
    An \emph{NFA} is a coalgebra $d\colon Y\rightarrow \nfafunc{Y}$, where the underlying set $Y$ is finite.
    The \emph{semantic structure $(\mathbf{\Omega}_{R}, \tau_R)$ of NFAs} is defined by (i) $\mathbf{\Omega}_{R} \defeq (\traceNFA, \subseteq)$; 
    and (ii) $\tau_R\colon \nfafunc{\traceNFA}\rightarrow \traceNFA$ is given by 
    \begin{align*}
        \tau_R(\delta) \defeq 
        \{a \in A \mid \checkmark \in \delta(a)\} \cup \{a \cdot w\mid S \in \delta(a), w \in S\}.
    \end{align*}
\end{definition}
The semantics of NFAs is their recognized languages.
We write $\exreal\defeq \{r \in \real\mid r \geq 0\}\cup \{+\infty\}$ with the 
 convention $\infty \cdot 0 = 0 \cdot \infty \coloneqq 0$.

\begin{definition}[semantic structure of the product] \label{def:sem_product}
    The \emph{semantic structure} $(\mathbf{\Omega}_{S\otimes R}, \tau_{S\otimes R})$ is defined by 
    (i) $\mathbf{\Omega}_{S\otimes R} \defeq (\exreal, \leq)$, where $\leq$ is the standard order; and (ii) $\tau_{S\otimes R}\colon \mcnfafunc{\exreal}\rightarrow \exreal$ is given by 
    \begin{align*}
        \tau_{S\otimes R}(\sigma) \defeq \sigma(\checkmark) + \sum_{r\in \exreal} r\cdot \sigma(r). 
    \end{align*}
\end{definition}

We now present our first main result, a no-go theorem for coalgebraic product MCs for MCs and NFAs.
The proof relies on the characterisation of natural transformations from $\powfunc$ to $\erealfunc$ presented in \cref{sec:nt_single}.
The characterisation imposes strict constraints on possible forms of distributive laws,
and we demonstrate that no such distributive law satisfies the correctness criterion.
\begin{theorem}[no-go theorem under correctness criterion]
    \label{thm:no-go-mc-nfa}
    Consider semantic structures defined in \cref{eg:mc}, \cref{def:nfas}, and \cref{def:sem_product}.
    There is no distributive law $\lambda$ from $\subdist(\_) \times A$ and $\nfafunc{(\_)}$ to $\mcnfafunc{(\_)}$
    such that for each MC $c$ and NFA $d$,
    the coalgebraic product $c\otimes_{\lambda} d$ satisfies the correctness criterion with the inference map $q\colon \traceMC\times \traceNFA \rightarrow \exreal$ defined by 
    \begin{align*}
    q(\sigma, L) \defeq \sum_{w\in L} \sigma(w), &&\text{for each $\sigma \in \traceMC$ and $L \in \traceNFA$.} 
    \end{align*}
\end{theorem}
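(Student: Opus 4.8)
The plan is to fix a single-letter alphabet $A = \{a\}$ (a counterexample here suffices, since the criterion is demanded for all coalgebras) and to extract from any candidate $\lambda$ two rigidity constraints, both flowing from the naturality of $\lambda$ in the NFA state space together with the characterisation of $\Nat{\powfunc}{\erealfunc}$ in \cref{eg:powfunc_natural_trans}. Writing $\Xi \defeq \lambda_{\traceMC, \traceNFA}\big((\sigma, a), \delta\big)$ and unfolding $\tau_{S\otimes R}$ and $\erealfunc(q + \id)$, the correctness criterion of \cref{prop:correctCriterion} reads, for all $(\sigma, a)$ and $\delta$,
\begin{displaymath}
\sum_{w \in \tau_R(\delta)}\tau_S(\sigma, a)(w) = \Xi(\checkmark) + \sum_{(\mu, L)\in \traceMC\times\traceNFA} q(\mu, L)\cdot \Xi(\mu, L).
\end{displaymath}
The goal is to show that the right-hand side is too rigid to track the languages occurring inside $\delta$, on which the left-hand side depends finely.

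\textbf{Vanishing on non-$\checkmark$ outputs.} Fixing $(\sigma, a)$ and a single $\mu_0 \in \traceMC$, I would consider inputs $\delta_S$ whose successor set $S \defeq \delta_S(a)$ is a subset of $\traceNFA$ (containing no $\checkmark$), and form, for varying $Y$, the map $\powfunc(Y) \to \erealfunc(Y)$ sending $S \mapsto \big(y \mapsto \lambda_{\traceMC, Y}((\sigma, a), \delta_S)(\mu_0, y)\big)$. Naturality of $\lambda$ in $Y$ makes this a natural transformation $\powfunc \Rightarrow \erealfunc$: pushing $S$ forward along $g\colon Y \to Y'$ pushes $\delta_S$ to $\delta_{g(S)}$, while the image functor acts by summation over the $Y$-fibres above the fixed point $\mu_0$. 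By \cref{eg:powfunc_natural_trans} the only such natural transformation is constantly zero, whence $\Xi(\mu, L) = 0$ for all $(\mu, L)$ whenever the successor set of $\delta$ avoids $\checkmark$. For such $\delta$ the criterion collapses to $\sum_{w\in\tau_R(\delta)}\tau_S(\sigma,a)(w) = \Xi(\checkmark)$.

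\textbf{Rigidity of the $\checkmark$-component.} Since $F_{S\otimes R}$ sends every map of the form $\id \times g$ to one fixing the $\checkmark$-coordinate, naturality of $\lambda$ in $Y$ forces $\Xi(\checkmark)$ to be invariant under pushing $\delta$ forward along $g\colon Y \to Y'$. Taking $g = {!}\colon \traceNFA \to \setn{1}$ shows that $\Xi(\checkmark)$ depends on $\delta$ only through its shape---whether its successor set meets $\traceNFA$ and whether it contains $\checkmark$---and in particular cannot distinguish two NFAs whose successor sets hold different languages. The two constraints now clash: taking $\sigma$ the Dirac on $\mu_0$ with $\mu_0(a) = 1$ and $\mu_0 = 0$ elsewhere, and comparing $\delta, \delta'$ with $\delta(a) = \{L_0\}$ and $\delta'(a) = \{L_1\}$ for $L_0 = \{a\}$, $L_1 = \{aa\}$, both successor sets avoid $\checkmark$ and share the same shape, so the right-hand sides are both equal to $\Xi(\checkmark)$. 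Yet $\tau_R(\delta) = \{aa\}$ and $\tau_R(\delta') = \{aaa\}$, so the left-hand sides evaluate to $\tau_S(\sigma,a)(aa) = \mu_0(a) = 1$ and $\tau_S(\sigma,a)(aaa) = \mu_0(aa) = 0$. The criterion would then force $\Xi(\checkmark)$ to equal both $1$ and $0$, a contradiction; hence no such $\lambda$ exists.

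The step I expect to be the main obstacle is the first one: honestly carving a natural transformation $\powfunc \Rightarrow \erealfunc$ out of $\lambda$, whose codomain is the larger functor $\erealfunc\big((\traceMC \times {-}) + \{\checkmark\}\big)$ rather than $\erealfunc$. This forces me to fix both the $F_S$-argument and a single base point $\mu_0 \in \traceMC$, to verify that the resulting commuting square is precisely the naturality square of $\lambda$ read off over the fibre of $\mu_0$, and to dispose of the $+\{\checkmark\}$ summand---which is why I restrict to successor sets avoiding $\checkmark$ and evaluate the output away from $\checkmark$.
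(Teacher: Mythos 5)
Your proof is correct and follows essentially the same route as the paper: split $\lambda$ into its $\checkmark$-component and its non-$\checkmark$ component, extract from the latter a natural transformation $\powfunc \Rightarrow \erealfunc$ (which must vanish by the characterisation in \cref{eg:powfunc_natural_trans}), and then derive a contradiction from the correctness criterion on NFAs whose successor sets avoid $\checkmark$. The only differences are cosmetic --- you fix a one-letter alphabet, evaluate at a base point $\mu_0$ instead of pushing forward along $\pi_2$, and compare two NFAs with equal right-hand sides rather than pinning down the $\checkmark$-component exactly as the paper does --- and all of these steps check out.
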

\begin{proof}
  Suppose that there exists a distributive law $\lambda$ from $\subdist(\_) \times A$ and $\nfafunc{(\_)}$ to $\mcnfafunc{(\_)}$
  satisfying the correctness criterion.
  Since there is a natural isomorphism $\mcnfafunc{(\_)} \Rightarrow \erealfunc(\_) \times \nonnegreal$,
  there is a bijective correspondence between
  $\lambda$ and a pair of natural transformations:
  \begin{itemize}
    \item $\{\lambda'_{X, Y}\colon \subdist(X) \times A \times \nfafunc{Y} \to \nnegreal \}_{X, Y}$, and
    \item $\{\lambda''_{X, Y}\colon \subdist(X) \times A \times \nfafunc{Y} \to \erealfunc(X \times Y)\}_{X, Y}$.
  \end{itemize}
  We now analyze the properties of $\lambda'$ and $\lambda''$.
  \begin{itemize}
    \item By the naturality of $\lambda'$, 
    for each $(\sigma, a) \in \subdist(X) \times A$ and $\delta \in \nfafunc{Y}$, we have:
    \begin{equation} \label{eq:lambda'_nogo}
      \lambda'_{X, Y}(\sigma, a, \delta) = \lambda'_{\setn{1}\, , \setn{1}}\left(\Delta_{r}, \, a, \, \powfunc(!_Y + \{\checkmark\})^A(\delta) \right),
    \end{equation}
    where $r \coloneqq \sum_{x \in X}\sigma(x)$.
    \item For each $(\sigma, a) \in \subdist(X) \times A$,
    define a natural transformation $\rho\colon \powfunc \Rightarrow \erealfunc$
    by $\{\rho_{Y} \coloneqq \erealfunc(\pi_2) \circ \lambda''_{X, Y} \circ \langle \sigma,\, a, \,\Delta_{\powfunc(\kappa_1)(\_)} \rangle\}_Y$.
    By~\cref{eg:powfunc_natural_trans}, it is the constant natural transformation to $0$.
    Thus
    it follows that 
    $\lambda''_{X, Y}(\sigma, a, \Delta_{L})(x, y) = 0$ 
    for each $L \in \powfunc(X)$,
    $x \in X$, and $y \in Y$
    since 
    \[\rho_{Y}(L)(y) = \sum_{x \in X}\lambda''_{X, Y}(\sigma, a, \Delta_{L})(x, y) = 0.\]
  \end{itemize}
  
  By~\cref{prop:correctCriterion}, for each $(\sigma, a) \in \subdist(\mathbf{\Omega}_S) \times A$ and $\delta \in \nfafunc{\mathbf{\Omega}_R}$,
  \begin{align}
    \begin{split} \label{eq:nfa_correct}
    &\Big(1-\sum_{\mu \in \mathbf{\Omega}_S}\sigma(\mu)\Big) \cdot \delta(a)(\checkmark) + \sum_{w \in A^+} \Big(\sum_{\mu \in \mathbf{\Omega}_S} \sigma(\mu) \cdot \mu(w)\Big) \cdot \Big(\bigvee_{L \in \delta(a)}L(w)\Big) \\
    &= \lambda'_{\mathbf{\Omega}_S, \mathbf{\Omega}_R}(\sigma, a, \delta) + \sum_{\mu \in \mathbf{\Omega}_S, L \in \delta(a)}  \sum_{w \in A^+} \mu(w) \cdot L(w) \cdot \lambda''_{\mathbf{\Omega}_S, \mathbf{\Omega}_R}(\sigma, a, \delta)(\mu, L).
    \end{split}
  \end{align}
For simplicity, we write $\delta(a)$ and $L$ for their characteristic functions in the equation above.
  
  For each $r \in [0, 1]$, $a \in A$, and $\delta \in \nfafunc{\setn{1}}$,
  define $\sigma_1 \in \subdist(\mathbf{\Omega}_S)$ and $\delta_1 \in \nfafunc{\mathbf{\Omega}_R}$ by 
  \begin{align*}
    \sigma_1(x) \coloneqq 
    \begin{cases}
      r &\text{if }x = \Delta_0 \\
      0 &\text{otherwise}
    \end{cases}, \quad
    \delta_1 \coloneqq \powfunc(\Delta_{\emptyset}+\{\checkmark\})^A(\delta).
  \end{align*}
  Then \eqref{eq:nfa_correct} for $(\sigma_1, a)$ and $\delta_1$ 
  gives that
  $(1-r) \cdot \delta(a)(\checkmark) = \lambda'_{\mathbf{\Omega}_S, \mathbf{\Omega}_R}(\sigma_1, a, \delta_1) = \lambda'_{\setn{1}, \setn{1}}(\Delta_r, a, \delta)$.
  Therefore, 
    for each $(\sigma, a) \in \subdist(X) \times A$ and $\delta \in \nfafunc{Y}$,  \eqref{eq:lambda'_nogo} induces
    $
      \lambda'_{X, Y}(\sigma, a, \delta) 
      = \big(1-\sum_{x \in X}\sigma(x)\big) \cdot \delta(a)(\checkmark).
    $
  
  Let $r \in (0, 1]$, $w \in A^+$, $a \in A$, and $\mu \in \mathbf{\Omega}_S$ defined by $\mu(w) = 1$ and $\mu(w') = 0$ for each $w' \in A^+ \setminus \{w\}$.
  Define $\sigma_2 \in \subdist(\mathbf{\Omega}_S)$ and $\delta_2 \in \nfafunc{\mathbf{\Omega}_R}$ by 
  \begin{align*}
    \sigma_2(x) \coloneqq 
    \begin{cases}
      r &\text{if }x = \mu \\
      0 &\text{otherwise}
    \end{cases}, \quad
    \delta_2 \coloneqq 
    \Delta_{\{w\}}.
  \end{align*}
  Then \eqref{eq:nfa_correct} for $(\sigma_2, a)$ and $\delta_2$ gives that
  \begin{align*}
    r &= \sum_{w \in A^+} \Big(\sum_{\mu \in \mathbf{\Omega}_S} \sigma_2(\mu) \cdot \mu(w)\Big) \cdot \Big(\bigvee_{L \in \delta_2(a)}L(w)\Big) \\
    &= \sum_{\mu \in \mathbf{\Omega}_S, L \in \delta_2(a)}  \sum_{w \in A^+} \mu(w) \cdot L(w) \cdot \lambda''_{\mathbf{\Omega}_S, \mathbf{\Omega}_R}(\sigma_2, a, \delta_2)(\mu, L) = 0.
  \end{align*}
   This leads a contradiction, proving that no such distributive law $\lambda$ can exist.
\end{proof}
Note that this immediately implies the no-go theorem for 
distributive laws 
from $\subdist(\_) \times A$ and $\nfafunc{(\_)}$ to $\mcfuncproduct{(\_)}$
as well.

\section{Markov Chains and Multiset Finite Automata}
\label{sec:mcmfas}

In this section, we present a coalgebraic product construction of MCs and \emph{multiset finite automata (MFAs)}, and show the coalgebraic product construction is correct w.r.t.~the inference map that takes the expectation of the number of accepting paths. 
This immediately shows the correctness of the existing product construction of MCs and unambiguous finite automata~\cite{BenediktLW14,BaierK00023} as a special case. 
We further show that the coalgebraic product of MCs and MFAs is solvable in polynomial time.

\begin{definition}[semantic structure of MFAs]
    \label{def:mfas}
    An \emph{MFA} is a coalgebra $d\colon Y\rightarrow \mfafunc{Y}$, where the underlying set $Y$ is finite.
    The \emph{semantic structure $(\mathbf{\Omega}_{R}, \tau_R)$ of MFAs} is defined by (i) $\mathbf{\Omega}_{R} \defeq (\traceMFA, \preceq)$, where $\preceq$ is the pointwise order; 
    and (ii) $\tau_R\colon \mfafunc{\traceMFA}\rightarrow \traceMFA$ is given by
    \begin{align*}
        \tau_R(\delta)(w) \defeq \begin{cases}
            \delta(a)(\checkmark) &\text{ if } w = a,\\
            \sum_{\mu \in \traceMFA}\delta(a)(\mu)\cdot \mu(w')&\text{ if } w = a\cdot w',\\
            0 &\text{ otherwise.}
        \end{cases}
    \end{align*}
\end{definition}
The semantics of MFAs is the number of accepting paths for each word. 
MFAs are indeed a weighted automaton with the standard commutative ring $\nat$. 

\begin{definition} \label{def:dist_mfa}
    The \emph{distributive law} $\lambda_{X, Y}\colon \mcfunc{X}\times \mfafunc{Y}\rightarrow\mcmfafunc{X\times Y}$ is given by 
    \begin{align*}
        \lambda_{X, Y}(\sigma, a, \delta)(x, y) 
           &\defeq \delta(a)(y) \cdot \sigma(x), && \lambda_{X, Y}(\sigma, a, \delta)(\checkmark) 
           \defeq \delta(a)(\checkmark) \cdot \sigma(\checkmark).
    \end{align*}
\end{definition}
Importantly, the coalgebraic product $c\otimes_{\lambda} d\colon X\times Y\rightarrow \mcmfafunc{X\times Y}$ is not substochastic, that is, the sum $\sum_{(x', y')} (c\otimes_{\lambda} d)(x, y)(x', y') + (c\otimes_{\lambda} d)(x, y)(\checkmark)$ may be strictly greater than $1$, for some $(x, y)\in X\times Y$.

We now establish the correctness of the coalgebraic product for MCs and MFAs w.r.t.~the model checking problem that computes the expectation of the number of accepting paths.
\begin{proposition}[correctness]
    \label{prop:correctmcmfas}
    Consider semantic structures defined in \cref{eg:mc}, \cref{def:sem_product}, and \cref{def:mfas}.
    Then
    the coalgebraic product $c\otimes_{\lambda} d$ is correct
    w.r.t.~$q\colon \traceMC\times \traceMFA \rightarrow \exreal$ defined by 
    \begin{align*}
    q(\sigma, \mu) \defeq \sum_{w\in A^{+}} \mu(w) \cdot \sigma(w),
    &&\text{ for each $\sigma \in \traceMC$ and $\mu \in \traceMFA$.}
    \qed
    \end{align*}
\end{proposition}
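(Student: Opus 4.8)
The plan is to apply the correctness criterion of \cref{prop:correctCriterion}, which reduces the claim to a single pointwise equation between two functions $(\mcfunc{\mathbf{\Omega}_S}) \times (\mfafunc{\mathbf{\Omega}_R}) \to \exreal$. Concretely, I must verify that
\begin{align*}
  q \circ (\tau_S \times \tau_R) = \tau_{S \otimes R} \circ \erealfunc(q) \circ \lambda_{\mathbf{\Omega}_S, \mathbf{\Omega}_R},
\end{align*}
where $\mathbf{\Omega}_S = \traceMC$, $\mathbf{\Omega}_R = \traceMFA$, $\tau_S$ is the MC modality from \cref{eg:mc}, $\tau_R$ is the MFA modality from \cref{def:mfas}, $\tau_{S \otimes R}$ is from \cref{def:sem_product}, and $\lambda$ is the distributive law from \cref{def:dist_mfa}. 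Since the criterion is a sufficient condition for correctness, once this equation is established the proposition follows immediately.

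First I would fix an arbitrary input $(\sigma, a, \delta) \in \mcfunc{\mathbf{\Omega}_S} \times \mfafunc{\mathbf{\Omega}_R}$ and compute both sides. For the left-hand side, $(\tau_S \times \tau_R)(\sigma, a, \delta)$ is the pair $\big(\tau_S(\sigma, a),\, \tau_R(\delta)\big) \in \traceMC \times \traceMFA$; applying $q$ yields $\sum_{w \in A^+} \tau_R(\delta)(w) \cdot \tau_S(\sigma, a)(w)$. I would then split this sum according to the case distinctions in the definitions of $\tau_S$ and $\tau_R$: the word $w = a$ contributes $\delta(a)(\checkmark) \cdot \sigma(\checkmark)$, and each word $w = a \cdot w'$ with $w' \in A^+$ contributes $\big(\sum_{\nu \in \traceMFA} \delta(a)(\nu) \cdot \nu(w')\big) \cdot \big(\sum_{\mu \in \traceMC} \sigma(\mu) \cdot \mu(w')\big)$, while all other words contribute zero. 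For the right-hand side, I would first apply $\lambda$ to obtain the element of $\erealfunc(\mathbf{\Omega}_S \times \mathbf{\Omega}_R + \{\checkmark\})$ given by $(\mu, \nu) \mapsto \delta(a)(\nu) \cdot \sigma(\mu)$ and $\checkmark \mapsto \delta(a)(\checkmark) \cdot \sigma(\checkmark)$, then push it through $\erealfunc(q)$ and finally through $\tau_{S \otimes R}$, yielding $\delta(a)(\checkmark)\cdot\sigma(\checkmark) + \sum_{\mu,\nu} \delta(a)(\nu)\cdot\sigma(\mu) \cdot q(\mu,\nu)$, and expanding $q(\mu,\nu) = \sum_{w' \in A^+} \nu(w')\cdot \mu(w')$.

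The core of the argument is then a bilinearity/rearrangement step: both sides should reduce to the common expression $\delta(a)(\checkmark)\cdot\sigma(\checkmark) + \sum_{w' \in A^+}\big(\sum_\nu \delta(a)(\nu)\cdot\nu(w')\big)\big(\sum_\mu \sigma(\mu)\cdot\mu(w')\big)$, so the equation follows by factoring the product of the two independent sums over $\mu$ and $\nu$ and interchanging the order of summation. I expect the main obstacle to be justifying these interchanges and factorisations rigorously in the extended non-negative reals $\exreal$, where one must handle the $+\infty$ values and invoke the convention $\infty \cdot 0 = 0$; since all summands are non-negative the rearrangements are valid by the monotone-convergence/Tonelli principle for sums, but care is needed to ensure no term is silently dropped and that the product of sums genuinely distributes over the double sum even when infinite values occur. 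Verifying that the countable-support and finite-support conditions on $\sigma$, $\delta$, and the intermediate multisets are preserved so that $\erealfunc(q)$ is well-defined is a minor bookkeeping point I would check along the way.
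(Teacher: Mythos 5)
Your proposal is correct and takes essentially the same route as the paper: the paper's proof likewise instantiates the correctness criterion of \cref{prop:correctCriterion}, expands both $q\circ(\tau_S\times\tau_R)$ and $\tau_{S\otimes R}\circ F_{S\otimes R}(q)\circ\lambda_{\Omega_S,\Omega_R}$ on an arbitrary $(\sigma,a,\delta)$, and shows both equal $\sigma(\checkmark)\cdot\delta(a)(\checkmark)+\sum_{w}\big(\sum_\mu\sigma(\mu)\mu(w)\big)\big(\sum_\nu\delta(a)(\nu)\nu(w)\big)$ by exchanging the order of summation. Your extra care about the $\infty\cdot 0=0$ convention and Tonelli-style rearrangement of non-negative sums is a justification the paper leaves implicit.
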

\ifarxiv 
See~\cref{subsec:proofcorrect} for the proof.
\else 
See~\cite[Appendix B.1]{KWfull} for the proof.
\fi
As a direct consequence, the product of an MC $c$ and an unambiguous finite automaton $d$~\cite{BenediktLW14,BaierK00023} is correct w.r.t.~the inference map that computes the probability of paths that are accepting. 
In fact, $(\mu \Phi_d)(w) = 1$ iff $l$ is an accepting word, and $(\mu \Phi_d)(w) = 0$ otherwise, because $d$ is unambiguous,
implying that the model checking $q \circ (\mu \Phi_c \times \mu \Phi_d)$ precisely gives the probability of paths that are accepting. 

\begin{figure}
\centering
    \begin{subfigure}[b]{0.3\textwidth}
      \begin{tikzpicture}
        \node[draw,circle, fill=white, initial] (s1) {\scalebox{0.8}{$x, a$}};
        \node[draw,circle, fill=white, accepting, right= 1cm of s1 ] (s2) {\scalebox{0.5}{$\checkmark$}};
        \draw[->] (s1) -> node [above] {\scalebox{0.8}{$1/3$}} (s2);
        \draw[->] (s1) edge[loop above] node[right] {\scalebox{0.8}{$2/3$}} (s3); 
      \end{tikzpicture}
      \caption{\label{fig:MC}}
    \end{subfigure}
    \begin{subfigure}[b]{0.3\textwidth}
      \begin{tikzpicture}
        \node[draw,circle, fill=white, initial] (s1) {\scalebox{0.8}{$y_1$}};
        \node[draw,circle, fill=white, yshift=-1.5cm] (y2) {\scalebox{0.8}{$y_2$}};
        \node[draw,circle, fill=white, accepting, right= 1cm of s1 ] (s2) {\scalebox{0.5}{\checkmark}};
        \draw[->] (s1) -> node [above] {\scalebox{0.8}{$a$}} (s2);
        \draw[->] (s1) edge[loop above] node[right] {\scalebox{0.8}{$a$}} (s3);
        \draw[->] (s1) -> node[right] {\scalebox{0.8}{$a$}}  (y2) ; 
        \draw[->] (y2) edge[loop left] node[left] {\scalebox{0.8}{$a$}} (ylabel);
        \draw[->] (y2) -> node [below] {\scalebox{0.8}{$a$}} (s2);
      \end{tikzpicture}
      \caption{\label{fig:MFA}}
    \end{subfigure}
    \begin{subfigure}[b]{0.3\textwidth}
       \begin{tikzpicture}
          \node[draw,circle, fill=white, initial] (s1) {\scalebox{0.8}{$x, y_1$}};
          \node[draw,circle, fill=white, yshift=-1.5cm] (y2) {\scalebox{0.8}{$x, y_2$}};
          \node[draw,circle, fill=white, accepting, right= 1cm of s1 ] (s2) {\scalebox{0.8}{\checkmark}};
          \draw[->] (s1) -> node [above] {\scalebox{0.8}{$1/3$}} (s2);
          \draw[->] (s1) edge[loop above] node[right] {\scalebox{0.8}{$2/3$}} (s3);
          \draw[->] (s1) -> node[right] {\scalebox{0.8}{$2/3$}}  (y2) ; 
          \draw[->] (y2) edge[loop left] node[left] {\scalebox{0.8}{$2/3$}} (ylabel);
          \draw[->] (y2) -> node [below] {\scalebox{0.8}{$1/3$}} (s2);
      \end{tikzpicture}
      \caption{\label{fig:Product}}
    \end{subfigure}
    \caption{Examples: \subref{fig:MC} an MC, \subref{fig:MFA} an MFA, and \subref{fig:Product} their product.}
    \label{fig:examples}
\end{figure}
    
\begin{example}
  To illustrate the model checking problem with MFAs, we present a very simple example that is nevertheless rich enough to convey the core idea.
  Consider the MC and MFA shown in~\cref{fig:examples}. The MC models a server that repeatedly requests a response from a user until it reaches the state $\checkmark$, while the user’s behavior is modeled by the MFA. In the MFA, the target state $\checkmark$ represents an undesirable (bad) status for the user, which we would like to prevent the user from reaching as much as possible.

  The model checking problem in this setting does not ask for the probability of such undesirable user behaviors. Instead, it concerns the expected number of these behaviors. In this way, the analysis provides a quantitative measure of the safety guarantees on the user’s behavior.
The model checking result is $\big(q \circ (\mu \Phi_c \times \mu \Phi_d)\big)(x, y_1) = \sum^{\infty}_{n=1} (2/3)^{n-1}\cdot (1/3)\cdot n = 3$. 
This indicates that we should expect around three undesirable user behaviours until the server terminates. 
  The product of the MC and MFA is described in~\cref{fig:examples}: note that this is not an MC because $2/3+2/3+1/3 = 5/3 > 1$.
\end{example}

We remark that the distributive law $\lambda$ is the unique one:
\newcommand{\ev}{\mathrm{ev}}
\begin{proposition}[uniqueness under correctness criterion] \label{prop:lambda_uniqueness}
  The distributive law $\lambda$
  defined in~\cref{def:dist_mfa}
    is the unique one 
    satisfying the following properties.
    \begin{enumerate}
      \item There is a distributive law $\rho$
    from $\subdist$ and $\mfunc((\_)+1)$ to $\mcmfafunc{(\_)}$
    s.t.~$\lambda = \rho \circ (\id_{\subdist(\_)} \times \ev_{A, \mfunc((\_)+1)})$,
      where $\ev_{A, (\_)}\colon A \times \id^A \Rightarrow \id$ is given by evaluation maps.
      \item
    Any $c\otimes_{\lambda} d$ satisfies the correctness criterion in the setting of \cref{prop:correctmcmfas}.
    \qed
    \end{enumerate}
\end{proposition}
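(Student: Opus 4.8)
The plan is to prove both existence (the $\lambda$ of \cref{def:dist_mfa} satisfies the two properties) and uniqueness (no other distributive law does). Existence is immediate: taking $\rho_{X,Y}(\sigma,m)(x,y)\defeq m(y)\cdot\sigma(x)$ and $\rho_{X,Y}(\sigma,m)(\checkmark)\defeq m(\checkmark)\cdot\sigma(\checkmark)$ for $\sigma\in\subdist(X)$ (identifying $\mathcal{D}((\_)+\{\checkmark\})\cong\subdist$) and $m\in\mfunc(Y+\{\checkmark\})$ yields a distributive law with $\lambda=\rho\circ(\id_{\subdist(\_)}\times\ev_{A,\mfunc((\_)+1)})$, so Property~1 holds, and Property~2 is exactly \cref{prop:correctmcmfas}. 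For uniqueness, suppose $\lambda'$ satisfies both properties and factor $\lambda'=\rho'\circ(\id\times\ev)$ through a distributive law $\rho'$ as in Property~1. Since $\lambda'_{X,Y}(\sigma,a,\delta)=\rho'_{X,Y}(\sigma,\delta(a))$ and $\delta(a)$ ranges over all of $\mfunc(Y+\{\checkmark\})$ as $\delta$ varies, determining $\rho'$ is equivalent to determining $\lambda'$; it therefore suffices to show $\rho'=\rho$.

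First I would pin down the form of $\rho'$ from naturality alone, reducing the bivariate transformation to the unary characterisations of \cref{sec:naturaltrans}. Fixing $Y$, $m$ and a point $y_0$, post-composition with the natural map $\erealfunc(((\_)\times Y)+\{\checkmark\})\Rightarrow\erealfunc((\_)+\{\checkmark\})$ that reads off the $y_0$-row turns $\sigma\mapsto\rho'_{(\_),Y}(\sigma,m)$ into a natural transformation $\subdist\Rightarrow\erealfunc$ on the state component; by \cref{cor:subdist_ereal}.\ref{item:subdist_1} it has the shape $\rho'_{X,Y}(\sigma,m)(x,y)=\sigma(x)\cdot b(\ldots)$. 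A symmetric argument in the $\mfunc$-variable, using the characterisation of $\Nat{\mfunc}{\erealfunc}$ from \cref{eg:mfunc_natural_trans}, forces linearity in $m(y)$, so that $\rho'_{X,Y}(\sigma,m)(x,y)=\sigma(x)\,m(y)\,c(\ldots)$ for a coefficient depending only on the total masses, while the $\checkmark$-component is a function of $\sigma(\checkmark)$, $m(\checkmark)$ and the masses.

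It then remains to fix these coefficients, which is where Property~2 enters. Writing out the correctness criterion of \cref{prop:correctCriterion} for the semantic structures of \cref{eg:mc,def:mfas,def:sem_product} at $(\Omega_S,\Omega_R)=(\traceMC,\traceMFA)$ yields, for $m=\delta(a)$, the identity
\[
  m(\checkmark)\,\sigma(\checkmark)+\sum_{\nu,\xi}m(\xi)\,\sigma(\nu)\,q(\nu,\xi)
  =\rho'(\sigma,m)(\checkmark)+\sum_{\nu,\xi}q(\nu,\xi)\,\rho'(\sigma,m)(\nu,\xi).
\]
To localise this weighted identity I would feed in trace point-masses, exactly in the spirit of the no-go proof (\cref{thm:no-go-mc-nfa}): writing $\nu_w\in\traceMC$ and $\xi_w\in\traceMFA$ for the unit mass on a word $w$, one has $q(\nu_w,\xi_{w'})=1$ if $w=w'$ and $0$ otherwise. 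Choosing $\sigma,m$ supported on point-masses over words with \emph{no} common word isolates $\rho'(\sigma,m)(\checkmark)=m(\checkmark)\,\sigma(\checkmark)$, while choosing supports sharing exactly one word $w$ isolates $\rho'(\sigma,m)(\nu_w,\xi_w)=m(\xi_w)\,\sigma(\nu_w)$; together these force $c\equiv 1$ and match \cref{def:dist_mfa}. Finally, naturality of $\rho'$ transfers these values at $(\traceMC,\traceMFA)$ to arbitrary $X,Y$: for fixed $x_0,y_0$, the maps collapsing $X,Y$ onto $\setn{2}$ (separating $x_0,y_0$) followed by injections $\setn{2}\hookrightarrow\traceMC$ and $\setn{2}\hookrightarrow\traceMFA$ into suitably distinct words express $\rho'_{X,Y}(\sigma,m)(x_0,y_0)$ through the point-mass values just computed, giving $\rho'=\rho$ and hence $\lambda'=\lambda$.

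The main obstacle is the middle step: the correctness criterion only constrains $\rho'$ through the single $q$-weighted identity at the fixed objects $(\traceMC,\traceMFA)$, so the real work is (i) reducing the bivariate $\rho'$ to the unary results of \cref{sec:naturaltrans} while correctly carrying the distinguished element $\checkmark$ and the product $X\times Y$ in the codomain, and (ii) separating the individual summands of that identity by a careful choice of trace point-masses --- which relies on $A^{+}$ containing enough distinct words --- and then propagating the resulting pointwise equalities to all state spaces by naturality.
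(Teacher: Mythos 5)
Your proposal is correct and follows essentially the same route as the paper's proof: decompose the law into its $\checkmark$- and state-components, apply the unary characterisations of $\Nat{\subdist}{\erealfunc}$ and $\Nat{\mfunc}{\erealfunc}$ from \cref{sec:naturaltrans} (fixing one argument at a time, with care for the $\checkmark$-summand) to pin down the bilinear form up to coefficients depending only on total masses, and then instantiate the correctness criterion at Dirac trace distributions to force those coefficients to be $1$, transferring back to arbitrary $X,Y$ by naturality. The choices of test instances (disjointly supported point-masses to isolate the $\checkmark$-term, a shared word $w$ to isolate the state-term) match the paper's $\sigma_1,\delta_1$ and $\sigma_2,\delta_2$.
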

\ifarxiv 
See~\cref{subsec:proofunique} for the proof. 
\else     
See~\cite[Appendix B.2]{KWfull} for the proof. 
\fi 

\subsection{Model Checking of the Product}
Lastly, we study how to solve the product of MCs and MFAs. 
The product can be naturally regarded as an MC with \emph{multiplicative rewards}, which is very recently proposed in~\cite{BaierCMP25}.
This is because the number of accepting runs defined by MFAs can be seen as multiplicative rewards. 
Since we are only interested in traces that eventually reaches $\checkmark$, we can assume that the product MC with multiplicative rewards only has trivial bottom strongly connected components without loss of generality, and 
this is solvable in polynomial time as shown in~\cite[Theorem 3.18 and Remark 3.20]{BaierCMP25}. 
We summarise our complexity result as follows; in the following we assume that all probabilistic transitions in MCs are rational values. 
\begin{proposition}
  \label{prop:complexity}
  Given an MC $c\colon X\rightarrow\mcfunc{X}$, an MFA $d\colon Y\rightarrow \mfafunc{Y}$ and an initial state $(x, y)\in X\times Y$ of their product $c\otimes_{\lambda} d$, the semantics $\mu \Phi_{c\otimes_{\lambda} d}(x, y)$ is computable in polynomial time, implying that the model checking $q \circ (\mu \Phi_c \times \mu \Phi_d)(x, y)$ is solvable in polynomial time. 
\end{proposition}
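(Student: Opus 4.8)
The plan is to reduce the computation of the product's semantics to the evaluation of expected multiplicative rewards on a finite-state Markov chain, invoking the recent result of Baier et al.~\cite{BaierCMP25}. First I would make precise the claim, stated informally in the text, that the coalgebraic product $c \otimes_\lambda d$ is an instance of a \emph{Markov chain with multiplicative rewards}. Recall from \cref{def:dist_mfa} that the product assigns to each state $(x,y)$ and successor $(x',y')$ the value $\delta(a)(y') \cdot \sigma(x')$, where $(\sigma, a) = c(x)$ and $\delta = d(y)$; here $\sigma(x')$ is the transition probability of the underlying MC and the multiplicity $\delta(a)(y')\in\nat$ is naturally read as a multiplicative reward attached to the transition $y \to y'$ of the MFA. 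Thus I would define, on the finite graph with vertices $X \times Y$, the probabilistic transition function $(x,y) \mapsto \big(\,(x',y') \mapsto \sigma(x')\cdot[\text{$y'$ reachable in one $a$-step}]\,\big)$ and the per-transition multiplicative weight $\delta(a)(y')$, so that the expected product of rewards accumulated along paths reaching $\checkmark$ is exactly $\mu\Phi_{c\otimes_\lambda d}(x,y)$. Because $X$ and $Y$ are finite (the latter by \cref{def:mfas}), this is a finite instance.

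Next I would verify the two hypotheses needed to apply \cite[Theorem 3.18 and Remark 3.20]{BaierCMP25}: that the transition probabilities are rational, and that the underlying MC has only \emph{trivial} bottom strongly connected components (BSCCs). Rationality is assumed in the statement. For the BSCC condition, I would argue that the semantics $\mu\Phi_{c\otimes_\lambda d}$ is defined as the least fixed point accumulating reward only along traces that eventually reach $\checkmark$; any state lying in a nontrivial BSCC of the product graph reaches $\checkmark$ with probability zero and therefore contributes nothing to the least fixed point. Consequently one may prune (or redirect to a fresh absorbing sink) all states from which $\checkmark$ is unreachable, or that lie in a nontrivial BSCC, without altering $\mu\Phi_{c\otimes_\lambda d}$ on the states of interest. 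This reachability preprocessing is itself a polynomial-time graph computation. After pruning, every BSCC is trivial, matching the hypothesis of the cited theorem.

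Having set this up, the algorithm is then immediate: apply the polynomial-time procedure of \cite{BaierCMP25} to the pruned finite MC with multiplicative rewards to obtain $\mu\Phi_{c\otimes_\lambda d}(x,y)$. Since $q \circ (\mu\Phi_c \times \mu\Phi_d) = \mu\Phi_{c\otimes_\lambda d}$ by the correctness result \cref{prop:correctmcmfas}, computing the product semantics at $(x,y)$ simultaneously solves the model checking problem, establishing both claims of the proposition.

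The main obstacle I anticipate is the bookkeeping around the multiplicative-reward encoding and the BSCC reduction rather than any deep mathematics: one must check carefully that the least-fixed-point semantics of \cref{eg:mc}/\cref{def:sem_product} coincides exactly with the expected-multiplicative-reward quantity used in \cite{BaierCMP25} (including the boundary conventions $\infty\cdot 0 = 0$ and the treatment of the terminal $\checkmark$), and that restricting to trivial BSCCs is genuinely without loss of generality for this least fixed point. Once that correspondence is nailed down, the complexity bound follows directly by citation, so the polynomial-time claim reduces to a faithful translation between the coalgebraic and the reward-based formulations.
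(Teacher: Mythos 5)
Your overall strategy is the paper's: regard the product as an MC with multiplicative rewards, prune states that cannot reach $\checkmark$, and invoke \cite[Theorem 3.18 and Remark 3.20]{BaierCMP25} (the paper additionally uses Remark 3.20 to detect divergence of the semantics via infeasibility of the linear system, a point you only touch on in passing). However, your concrete encoding has a genuine flaw, and the one nontrivial idea of the paper's reduction is exactly the fix you are missing. The map $(x,y)\mapsto\big((x',y')\mapsto \sigma(x')\cdot[\text{$y'$ reachable in one $a$-step}]\big)$ is \emph{not} a (sub)distribution: if the MFA has $k\geq 2$ successors $y'$ with $\delta(a)(y')>0$, the outgoing mass is roughly $k\cdot\sum_{x'}\sigma(x')$ and can exceed $1$. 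This is not incidental bookkeeping --- the whole point, emphasized right after \cref{def:dist_mfa}, is that $c\otimes_{\lambda}d$ is not substochastic, so there is no probability measure on its paths and ``the expected product of rewards'' is not yet a meaningful quantity; you cannot hand this object to \cite{BaierCMP25}.

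The paper's reduction (\cref{subsec:reduction}) resolves this by normalisation: set
$P\big((x,y),z\big) \defeq (c\otimes_{\lambda} d)(x, y)(z)\,/\,\sum_{z'} (c\otimes_{\lambda} d)(x, y)(z')$,
which is a genuine distribution, and absorb the discarded total mass into a \emph{state-based} multiplicative reward $R(x,y)\defeq\sum_{z}(c\otimes_{\lambda}d)(x,y)(z)$ (with a zero-reward sink when the total is $0$). Along any path the product of the $P$'s and the $R$'s recombines into the product of the original weights $\sigma(x')\cdot\delta(a)(y')$, so the least fixed point of \cref{def:sem_product} equals the expected multiplicative reward of this honest MC, as one checks by induction against the modality $\tau_e(\nu,q)=q\cdot(\nu(\checkmark)+\sum_r \nu(r)\cdot r)$. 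Your per-transition-weight variant could in principle be repaired the same way (normalise the probabilities per state and fold the normalising constant into the reward), but as written the proposal stops exactly where the actual work begins. The BSCC-pruning argument you give is fine and matches the paper's.
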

\begin{proofs}
  The semantics of products can be seen as the expected multiplicative reward on the corresponding MC; 
  \ifarxiv 
  see~\cref{subsec:reduction} for the reduction. 
  \else    
  see~\cite[Appendix B.3]{KWfull} for the reduction. 
  \fi 
  By~\cite[Theorem 3.18 and Remark 3.20]{BaierCMP25}, there is a linear equation system such that (i) the semantics is a part of the unique non-negative solution of the linear equation system iff the linear equation system has a non-negative solution, and (ii) the semantics diverges ($\mu \Phi_{c\otimes_{\lambda} d}(x, y) = \infty$) iff there is no non-negative solution in the linear equation system. 
  Constructing and solving such a linear equation system can be done in polynomial time, concluding the proof. 
\end{proofs}

\section{Related Work}
\label{sec:related}

Product constructions for probabilistic systems, including MCs, have been extensively studied over the decades. 
In particular, efficient product constructions with $\omega$-regular automata have been a key focus of research~\cite{CouvreurSS03,BaierK00023,Vardi85,CourcoubetisY95,SickertEJK16}.  
Handling with $\omega$-regular properties in a unified manner within the coalgebraic framework remains a challenge, which we leave for future work. 
An intriguing direction for future research is to build upon existing studies on coalgebraic $\omega$-regular automata~\cite{UrabeSH16,CianciaV19} to extend our framework further. 

C{\^{\i}}rstea and Kupke~\cite{CirsteaK23} study product constructions of quantitative systems, including probabilistic systems, and nondeterministic parity automata.
They show that model checking can be reduced to solving nested fixed-point equations, while leaving the complexity analysis as future work.
In this paper, we present the first no-go theorem for product constructions, and we introduce a new polynomial-time algorithm for model checking with MCs and MFAs, reducing the problem to computation of expected multiplicative rewards on MCs~\cite{BaierCMP25}.
On the other hand, our framework does not support $\omega$-regular properties, which are the main focus of~\cite{CirsteaK23}.

No-go theorems for structures that enable computation by combining different semantics have been studied in the context of computational effects, particularly in relation to the distributive laws of monads. 
Varacca and Winskel, following a proof attributed to Plotkin, demonstrated the non-existence of distributive laws between the powerset monad and distribution monad~\cite{VaraccaW06}. 
Subsequently, Zwart and Marsden~\cite{ZwartM22} developed a unified theory for no-go theorem on distributive laws, encompassing known results such as those in~\cite{KlinS18,VaraccaW06}.
We aim to further investigate our no-go theorem on product constructions, exploring potential connections with these no-go theorems for distributive laws of monads. 
Notably, distributive laws do exist between the multiset monad and the distribution monad~\cite{Jacobs2021,KeimelP16,DahlqvistP018,DashStatonACT20}, suggesting that multisets interact well with distributions, as observed in~\cref{prop:lambda_uniqueness}.     

Dahlqvist and Neves~\cite{DahlqvistNeves} show a series of characterisation of natural transformations $\alpha\colon T^n \Rightarrow T$ for a certain type of functor $T$, including the finite powerset functor and the multiset functor. 
 Our characterisation covers natural transformations $\beta\colon F_A\Rightarrow F_B$, where $F_A$ and $F_B$ are different functors (see~\cref{tab:table_of_isomorphisms}).

\section{Conclusion}
\label{sec:conclusion}
We present a no-go theorem, demonstrating the incompatibility  of coalgebraic product constructions with MCs and NFAs. 
To establish this result, we develop a novel characterisation of natural transformations for certain functors. 
We then introduce a new model checking problem for MCs and MFAs, based on our coalgebraic product construction. 

As a future direction,
one could further
extend
the characterisation of natural transformations form $F_A$ to $F_B$ 
given in~\cref{thm:monoid_lambda},
since there are cases not covered even  
when $A$ is singly generated,
as well as cases 
when $A$ is finitely generated.

Another interesting direction is to study model checking of MCs with $\omega$-regular multiset automata.
Investigating the computational complexity of this problem—--possibly by applying the results in~\cite{BaierCMP25}---would be an interesting future work. 

\bibliography{main}

@inproceedings{UrabeHH17,
  author    = {Natsuki Urabe and
               Masaki Hara and
               Ichiro Hasuo},
  title     = {Categorical liveness checking by corecursive algebras},
  booktitle = {{LICS}},
  pages     = {1--12},
  publisher = {{IEEE} Computer Society},
  year      = {2017}
}

@inproceedings{HinoKH016,
  author    = {Wataru Hino and
               Hiroki Kobayashi and
               Ichiro Hasuo and
               Bart Jacobs},
  title     = {Healthiness from Duality},
  booktitle = {{LICS}},
  pages     = {682--691},
  publisher = {{ACM}},
  year      = {2016}
}

@book{baierKatoen,
  author    = {Christel Baier and
               Joost{-}Pieter Katoen},
  title     = {Principles of model checking},
  publisher = {{MIT} Press},
  year      = {2008}
}

@book{MacLane,
  title={Categories for the working mathematician},
  author={Mac Lane, Saunders},
  volume={5},
  year={1998},
  publisher={Springer Science \& Business Media}
}

@article{WatanabeJRH25,
  author       = {Kazuki Watanabe and
                  Sebastian Junges and
                  Jurriaan Rot and
                  Ichiro Hasuo},
  title        = {A Unifying Approach to Product Constructions for Quantitative Temporal
                  Inference},
  journal      = {Proc. {ACM} Program. Lang.},
  volume       = {9},
  number       = {{OOPSLA1}},
  pages        = {1575--1603},
  year         = {2025}
}

@inproceedings{AndovaHK03,
  author    = {Suzana Andova and
               Holger Hermanns and
               Joost{-}Pieter Katoen},
  title     = {Discrete-Time Rewards Model-Checked},
  booktitle = {{FORMATS}},
  series    = {Lecture Notes in Computer Science},
  volume    = {2791},
  pages     = {88--104},
  publisher = {Springer},
  year      = {2003}
}

@inproceedings{BaierDDKK14,
  author    = {Christel Baier and
               Marcus Daum and
               Clemens Dubslaff and
               Joachim Klein and
               Sascha Kl{\"{u}}ppelholz},
  title     = {Energy-Utility Quantiles},
  booktitle = {{NASA} Formal Methods},
  series    = {Lecture Notes in Computer Science},
  volume    = {8430},
  pages     = {285--299},
  publisher = {Springer},
  year      = {2014}
}

@incollection{BaierAFK18,
  author    = {Christel Baier and
               Luca de Alfaro and
               Vojtech Forejt and
               Marta Kwiatkowska},
  title     = {Model Checking Probabilistic Systems},
  booktitle = {Handbook of Model Checking},
  pages     = {963--999},
  publisher = {Springer},
  year      = {2018}
}

@article{BaierK00023,
  author  = {Christel Baier and
             Stefan Kiefer and
             Joachim Klein and
             David M{\"{u}}ller and
             James Worrell},
  title   = {Markov chains and unambiguous automata},
  journal = {J. Comput. Syst. Sci.},
  volume  = {136},
  pages   = {113--134},
  year    = {2023}
}

@article{BenediktLW14,
  author  = {Michael Benedikt and
             Rastislav Lenhardt and
             James Worrell},
  title   = {Model Checking {M}arkov Chains Against Unambiguous {B}uchi Automata},
  journal = {CoRR},
  volume  = {abs/1405.4560},
  year    = {2014}
}

@article{ZwartM22,
  author  = {Maaike Zwart and
             Dan Marsden},
  title   = {No-Go Theorems for Distributive Laws},
  journal = {Log. Methods Comput. Sci.},
  volume  = {18},
  number  = {1},
  year    = {2022}
}

@article{VaraccaW06,
  author  = {Daniele Varacca and
             Glynn Winskel},
  title   = {Distributing probability over non-determinism},
  journal = {Math. Struct. Comput. Sci.},
  volume  = {16},
  number  = {1},
  pages   = {87--113},
  year    = {2006}
}

@inproceedings{KlinS18,
  author    = {Bartek Klin and
               Julian Salamanca},
  title     = {Iterated Covariant Powerset is not a Monad},
  booktitle = {{MFPS}},
  series    = {Electronic Notes in Theoretical Computer Science},
  volume    = {341},
  pages     = {261--276},
  publisher = {Elsevier},
  year      = {2018}
}

@inproceedings{CouvreurSS03,
  author    = {Jean{-}Michel Couvreur and
               Nasser Saheb and
               Gr{\'{e}}goire Sutre},
  title     = {An Optimal Automata Approach to {LTL} Model Checking of Probabilistic
               Systems},
  booktitle = {{LPAR}},
  series    = {Lecture Notes in Computer Science},
  volume    = {2850},
  pages     = {361--375},
  publisher = {Springer},
  year      = {2003}
}

@inproceedings{Vardi85,
  author    = {Moshe Y. Vardi},
  title     = {Automatic Verification of Probabilistic Concurrent Finite-State Programs},
  booktitle = {{FOCS}},
  pages     = {327--338},
  publisher = {{IEEE} Computer Society},
  year      = {1985}
}

@article{CourcoubetisY95,
  author  = {Costas Courcoubetis and
             Mihalis Yannakakis},
  title   = {The Complexity of Probabilistic Verification},
  journal = {J. {ACM}},
  volume  = {42},
  number  = {4},
  pages   = {857--907},
  year    = {1995}
}

@inproceedings{SickertEJK16,
  author    = {Salomon Sickert and
               Javier Esparza and
               Stefan Jaax and
               Jan Kret{\'{\i}}nsk{\'{y}}},
  title     = {Limit-Deterministic {B}{\"{u}}chi Automata for Linear Temporal
               Logic},
  booktitle = {{CAV} {(2)}},
  series    = {Lecture Notes in Computer Science},
  volume    = {9780},
  pages     = {312--332},
  publisher = {Springer},
  year      = {2016}
}

@inproceedings{BustanRV04,
  author    = {Doron Bustan and
               Sasha Rubin and
               Moshe Y. Vardi},
  title     = {Verifying omega-Regular Properties of {M}arkov Chains},
  booktitle = {{CAV}},
  series    = {Lecture Notes in Computer Science},
  volume    = {3114},
  pages     = {189--201},
  publisher = {Springer},
  year      = {2004}
}

@inproceedings{Pnueli77,
  author    = {Amir Pnueli},
  title     = {The Temporal Logic of Programs},
  booktitle = {{FOCS}},
  pages     = {46--57},
  publisher = {{IEEE} Computer Society},
  year      = {1977}
}

@article{VardiW86,
  author  = {Moshe Y. Vardi and
             Pierre Wolper},
  title   = {Automata-Theoretic Techniques for Modal Logics of Programs},
  journal = {J. Comput. Syst. Sci.},
  volume  = {32},
  number  = {2},
  pages   = {183--221},
  year    = {1986}
}

@article{Rutten00,
  author  = {Jan J. M. M. Rutten},
  title   = {Universal coalgebra: a theory of systems},
  journal = {Theor. Comput. Sci.},
  volume  = {249},
  number  = {1},
  pages   = {3--80},
  year    = {2000}
}

@book{J2016,
  author    = {Bart Jacobs},
  title     = {Introduction to Coalgebra: Towards Mathematics of States and Observation},
  series    = {Cambridge Tracts in Theoretical Computer Science},
  volume    = {59},
  publisher = {Cambridge University Press},
  year      = {2016}
}

@article{SteinmetzHB16,
  author  = {Marcel Steinmetz and
             J{\"{o}}rg Hoffmann and
             Olivier Buffet},
  title   = {Goal Probability Analysis in Probabilistic Planning: Exploring and
             Enhancing the State of the Art},
  journal = {J. Artif. Intell. Res.},
  volume  = {57},
  pages   = {229--271},
  year    = {2016}
}

@inproceedings{HahnH16,
  author    = {Ernst Moritz Hahn and
               Arnd Hartmanns},
  title     = {A Comparison of Time- and Reward-Bounded Probabilistic Model Checking
               Techniques},
  booktitle = {{SETTA}},
  series    = {Lecture Notes in Computer Science},
  volume    = {9984},
  pages     = {85--100},
  year      = {2016}
}

@inproceedings{UrabeSH16,
  author    = {Natsuki Urabe and
               Shunsuke Shimizu and
               Ichiro Hasuo},
  title     = {Coalgebraic Trace Semantics for {B}\"{u}chi and Parity Automata},
  booktitle = {{CONCUR}},
  series    = {LIPIcs},
  volume    = {59},
  pages     = {24:1--24:15},
  publisher = {Schloss Dagstuhl - Leibniz-Zentrum f{\"{u}}r Informatik},
  year      = {2016}
}

@article{Hasuo15,
  author       = {Ichiro Hasuo},
  title        = {Generic weakest precondition semantics from monads enriched with order},
  journal      = {Theor. Comput. Sci.},
  volume       = {604},
  pages        = {2--29},
  year         = {2015}
}

@inproceedings{CianciaV19,
  author    = {Vincenzo Ciancia and
               Yde Venema},
  title     = {Omega-Automata: {A} Coalgebraic Perspective on Regular omega-Languages},
  booktitle = {{CALCO}},
  series    = {LIPIcs},
  volume    = {139},
  pages     = {5:1--5:18},
  publisher = {Schloss Dagstuhl - Leibniz-Zentrum f{\"{u}}r Informatik},
  year      = {2019}
}

@inproceedings{Jacobs2021,
  author    = {Bart Jacobs},
  title     = {From Multisets over Distributions to Distributions over Multisets},
  booktitle = {{LICS}},
  pages     = {1--13},
  publisher = {{IEEE}},
  year      = {2021}
}

@article{KeimelP16,
  author  = {Klaus Keimel and
             Gordon D. Plotkin},
  title   = {Mixed powerdomains for probability and nondeterminism},
  journal = {Log. Methods Comput. Sci.},
  volume  = {13},
  number  = {1},
  year    = {2017}
}

@inproceedings{Kori0RK24,
  author    = {Mayuko Kori and
               Kazuki Watanabe and
               Jurriaan Rot and
               Shin{-}ya Katsumata},
  title     = {Composing Codensity Bisimulations},
  booktitle = {{LICS}},
  pages     = {52:1--52:13},
  publisher = {{ACM}},
  year      = {2024}
}

@article{DBLP:journals/tcs/Schroder08,
  author  = {Lutz Schr{\"{o}}der},
  title   = {Expressivity of coalgebraic modal logic: The limits and beyond},
  journal = {Theor. Comput. Sci.},
  volume  = {390},
  number  = {2-3},
  pages   = {230--247},
  year    = {2008}
}

@inproceedings{DBLP:journals/entcs/GummS01,
  author    = {H. Peter Gumm and
               Tobias Schr{\"{o}}der},
  title     = {Monoid-labeled transition systems},
  booktitle = {{CMCS}},
  series    = {Electronic Notes in Theoretical Computer Science},
  volume    = {44},
  number    = {1},
  pages     = {185--204},
  publisher = {Elsevier},
  year      = {2001}
}

@inproceedings{BaierCMP25,
  author       = {Christel Baier and
                  Krishnendu Chatterjee and
                  Tobias Meggendorfer and
                  Jakob Piribauer},
  title        = {Multiplicative Rewards in {M}arkovian Models},
  booktitle    = {{LICS}},
  pages        = {499--512},
  publisher    = {{IEEE}},
  year         = {2025}
}

@inproceedings{CirsteaK23,
  author       = {Corina C{\^{\i}}rstea and
                  Clemens Kupke},
  title        = {Measure-Theoretic Semantics for Quantitative Parity Automata},
  booktitle    = {{CSL}},
  series       = {LIPIcs},
  volume       = {252},
  pages        = {14:1--14:20},
  publisher    = {Schloss Dagstuhl - Leibniz-Zentrum f{\"{u}}r Informatik},
  year         = {2023}
}

@inproceedings{DahlqvistP018,
  author       = {Fredrik Dahlqvist and
                  Louis Parlant and
                  Alexandra Silva},
  title        = {Layer by Layer - Combining Monads},
  booktitle    = {{ICTAC}},
  series       = {Lecture Notes in Computer Science},
  volume       = {11187},
  pages        = {153--172},
  publisher    = {Springer},
  year         = {2018}
}

@inproceedings{DashStatonACT20,
  author       = {Swaraj Dash and
                  Sam Staton},
  title        = {A Monad for Probabilistic Point Processes},
  booktitle    = {{ACT}},
  series       = {{EPTCS}},
  volume       = {333},
  pages        = {19--32},
  year         = {2020}
}

@article{KWfull,
  author       = {Mayuko Kori and
                  Kazuki Watanabe},
  title        = {On Coalgebraic Product Constructions for Markov Chains and Automata},
  journal      = {CoRR},
  volume       = {abs/2504.06592},
  year         = {2025},
  note         = {Full version}
}

@article{DahlqvistNeves,
  author       = {Fredrik Dahlqvist and
                  Renato Neves},
  title        = {Compositional semantics for new paradigms: probabilistic, hybrid and
                  beyond},
  journal      = {CoRR},
  volume       = {abs/1804.04145},
  year         = {2018}
}

@article{AguirreKK22,
  author       = {Alejandro Aguirre and
                  Shin{-}ya Katsumata and
                  Satoshi Kura},
  title        = {Weakest preconditions in fibrations},
  journal      = {Math. Struct. Comput. Sci.},
  volume       = {32},
  number       = {4},
  pages        = {472--510},
  year         = {2022}
}

@inproceedings{KoriUKSH22,
  author       = {Mayuko Kori and
                  Natsuki Urabe and
                  Shin{-}ya Katsumata and
                  Kohei Suenaga and
                  Ichiro Hasuo},
  title        = {The Lattice-Theoretic Essence of Property Directed Reachability Analysis},
  booktitle    = {{CAV} {(1)}},
  series       = {Lecture Notes in Computer Science},
  volume       = {13371},
  pages        = {235--256},
  publisher    = {Springer},
  year         = {2022}
}

@article{HasuoJS07,
  author       = {Ichiro Hasuo and
                  Bart Jacobs and
                  Ana Sokolova},
  title        = {Generic Trace Semantics via Coinduction},
  journal      = {Log. Methods Comput. Sci.},
  volume       = {3},
  number       = {4},
  year         = {2007}
}
\ifarxiv
\newpage

\appendix

\section{Omitted Proofs in Section~\ref{sec:naturaltrans}}

\subsection{Proof of~\cref{prop:natural_trans_2}}
\label{subsec:proof_prop_natural_trans_2}
\begin{proof}
  Let $\lambda\colon F_A \Rightarrow F_B$ be a natural transformation.
  For each $f \in F_A(X)$ and $x \in X$,
  consider $g\colon X \to \setn{2}$ defined by $g(x) = 1$ and $g(x') = 2$ for each $x' \in X \setminus \{x\}$.
  By naturality of $\lambda$, we obtain
  $\lambda_X(f)(x) = \lambda_\setn{2}(F_A(g)(f))(1)$.
  Therefore $\lambda$ is determined by $\lambda_\setn{2}$.
\end{proof}

\subsection{Proof of faithfulness of $F_{(\_)}$} \label{ap:faithful}
\begin{proposition}
  The functor $F_{(\_)}$ is faithful.
\end{proposition}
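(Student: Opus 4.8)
The plan is to show that for any parallel pair of monoid morphisms $i, j\colon A \to B$ in $\CMon$, the equality $F_i = F_j$ of the induced natural transformations forces $i = j$ as underlying functions. Since faithfulness of $F_{(\_)}$ is exactly the requirement that it be injective on each hom-set, this suffices.

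The core idea is to recover the underlying map of a monoid morphism from its image under $F_{(\_)}$ by evaluating the associated natural transformation at the singleton set $\setn{1}$. First I would observe that $F_A(\setn{1})$ consists of all functions $\setn{1} \to A$, since the finite-support condition is vacuous on a one-element set; thus there is a canonical bijection $F_A(\setn{1}) \cong A$ sending $h \mapsto h(1)$, under which the constant function $\Delta_a$ corresponds to $a \in A$. By the defining formula $(F_i)_X(f) = i \circ f$, the component $(F_i)_{\setn{1}}$ sends $\Delta_a$ to $i \circ \Delta_a$, and therefore $(F_i)_{\setn{1}}(\Delta_a)(1) = i(a)$.

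With this in hand the conclusion is immediate. Assuming $F_i = F_j$, in particular their components at $\setn{1}$ coincide, so for every $a \in A$ we obtain
\[
  i(a) = (F_i)_{\setn{1}}(\Delta_a)(1) = (F_j)_{\setn{1}}(\Delta_a)(1) = j(a),
\]
whence $i = j$.

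There is essentially no genuine obstacle here; the argument is a routine evaluation at a probe object. The only point that deserves an explicit check is that $\Delta_a$ really lies in $F_A(\setn{1})$, which holds because any function out of a finite set automatically has finite support. It is worth noting that this short proof exploits only the faithful half of the situation: the same singleton evaluation cannot witness fullness, which is precisely why $F_{(\_)}$ fails to be full (as discussed in the Remark preceding this statement).
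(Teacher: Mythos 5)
Your proof is correct and follows essentially the same route as the paper's: both evaluate the natural transformation at the singleton $\setn{1}$ on the constant function $\Delta_a$ to recover $i(a) = (F_i)_{\setn{1}}(\Delta_a)(1)$, and conclude $i = j$ from $F_i = F_j$. The extra remarks on the bijection $F_A(\setn{1}) \cong A$ and on why this argument cannot witness fullness are accurate but not needed.
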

\begin{proof}
  Consider arbitrary $i, j\colon A \to B$ in $\CMon$ such that $F_i = F_j$.
  For each $a \in A$, $i(a) = (F_i)_\setn{1}(\Delta_a) = (F_j)_\setn{1}(\Delta_a) = j(a)$.
\end{proof}

\subsection{Proof of~\cref{lem:monoid_lambda}}
\label{subsec:proof_lem_monoid_lambda}

\begin{proof}
  \eqref{item:f_equal} 
   If $x = x'$, the claim is trivially true.
   Otherwise,
   consider the function $g\colon X \to X$ defined by $g(x) \defeq x'$, $g(x') \defeq x$, $g(z) \defeq z$ for each $z \in X \setminus \{x, x'\}$.
   By naturality of $\lambda$ and $f = F_A(g)(f)$,
   it follows that $\lambda_X(f)(x) = \lambda_X(f)(x')$.
   
   \eqref{item:idempotent}
   For $n=0$, 
   suppose that 
   $f(x) = 0$.
   Consider any set $Y$ containing $X$,
   and let
   $f'\coloneqq F_A(\iota)(f) \in F_A(Y)$ where $\iota\colon X \hookrightarrow Y$. 
   As with $f$, $f'$ also satisfies $f'(x) = 0$ and $\lambda_Y(f')(x)$ is equal to $\lambda_X(f)(x)$ by naturality.
   Hence, we can assume that $|X|$ is sufficiently large,
   ensuring 
   the existence of $x' \in X \setminus \{x\}$ such that $\lambda_X(f)(x') = 0$
   since $\supp(\lambda_X(f))$ is finite.
   Define $g_1\colon X \to X$ by $g_1(x) \defeq x'$, $g_1(x') \defeq x'$, $g_1(z) \defeq z$ for each $z \in X\setminus \{x, x'\}$.
   By naturality of $\lambda$,
   we obtain $\lambda_X(f)(x) = \big(\lambda_X\circ F_A(g_1)\big)(f)(x) = \big(F_B(g_1)\circ \lambda_X\big)(f)(x) = 0$.
 
   For $n=1$, the statement is trivial.
   For $n \geq 2$,
   define $f'\in F_A(X+(\setn{2n-2}))$ by 
   $f' \defeq [f, \Delta_{f(x)}]$,
   and
   define $g_2\colon X+(\setn{2n-2}) \to X+(\setn{2n-2})$ by 
   $g_2(\kappa_1(x)) = g_2(\kappa_2(i)) \defeq 1$ for each $i \in \setn{n-1}$ and $g_2(z) \defeq z$ for other $z \in X+(\setn{2n-2})$
   where $\kappa_1$ and $\kappa_2$ are respectively first and second coprojections.
   By naturality of $\lambda$, we obtain 
   \begin{align} 
     \lambda_{X+(\setn{2n-2})}(F_A(g_2)(f'))(\kappa_2(1)) &= \lambda_{X+(\setn{2n-2})}(f')(\kappa_1(x)) + \sum_{i=1}^{n-1}\lambda_{X+(\setn{2n-2})}(f')(\kappa_2(i)), \label{eq:1i}\\
     \lambda_{X+(\setn{2n-2})}(F_A(g_2)(f'))(\kappa_2(n)) &= \lambda_{X+(\setn{2n-2})}(f')(\kappa_2(n)). \label{eq:nn}
   \end{align}
   By~\cref{lem:monoid_lambda}.\ref{item:f_equal}, we obtain
   \begin{align} 
     \lambda_{X+(\setn{2n-2})}(F_A(g_2)(f'))(\kappa_2(1)) &= \lambda_{X+(\setn{2n-2})}(F_A(g_2)(f'))(\kappa_2(n)), \label{eq:1n}\\
     \lambda_{X+(\setn{2n-2})}(f')(\kappa_1(x)) &= \lambda_{X+(\setn{2n-2})}(f')(\kappa_2(i)) &&\text{ for each }i \in \setn{\setn{2n-2}}. \label{eq:xi}
   \end{align}
   Therefore, it follows that 
   \begin{align*}
   \lambda_{X+(\setn{2n-2})}(f')(\kappa_2(n)) &= \lambda_{X+(\setn{2n-2})}(f')(\kappa_1(x)) + \sum_{i=1}^{n-1}\lambda_{X+(\setn{2n-2})}(f')(\kappa_2(i)) &&\text{by }\eqref{eq:nn}, \eqref{eq:1n}, \eqref{eq:1i},  \\
   &= n \cdot \lambda_{X+(\setn{2n-2})}(f')(\kappa_2(n)) &&\text{by }\eqref{eq:xi}.
   \end{align*}
   Then by naturality of $\lambda$ for $[\id, \Delta_x]\colon X+(\setn{2n-2}) \to X$, 
   noting that $n \cdot a = a$ implies $(2n-1) \cdot a = n \cdot a + (n-1) \cdot a = n \cdot a = a$,
   \begin{align*} 
     \lambda_{X}(f)(x) &= \lambda_{X+(\setn{2n-2})}(f')(\kappa_1(x)) + \sum_{i=1}^{2n-2} \lambda_{X+(\setn{2n-2})}(f')(\kappa_2(i)) &&\text{since }f = F_A([\id, \Delta_x])(f'),  \\
     &= (2n-1) \cdot \lambda_{X+(\setn{2n-2})}(f')(\kappa_2(n)) &&\text{by }\eqref{eq:xi},\\
     &= \lambda_{X+(\setn{2n-2})}(f')(\kappa_2(n)).
   \end{align*}
   Hence, it follows that $n \cdot \lambda_{X}(f)(x) = \lambda_{X}(f)(x)$.
 \end{proof}

 \subsection{Proof of~\cref{thm:monoid_lambda}}
 \label{appendix:proof_monoid_lambda}

 \begin{lemma} \label{lem:mlc}
  Let $C$ be a commutative monoid, $n \in \mathbb{N}_{> 1}$, and $c \in C$ be an element such that 
  $n \cdot c = c$.
  We write $[l]$ for the remainder of $l$ modulo $n-1$.
  Then the following statements hold.
  \begin{enumerate}
    \item \label{item:mlc} For each 
    $m, l \in \mathbb{N}_{\geq 1}$
    $[m] = [l]$ 
    implies $m \cdot c = l \cdot c$.
    \item If $n$ is the least natural number such that $n > 1$ and $n \cdot c = c$, 
    for each $m, l \in \mathbb{N}_{\geq 1}$,
    $m \cdot c = l \cdot c$ implies $[m] = [l]$.
  \end{enumerate}
 \end{lemma}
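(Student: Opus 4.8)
The plan is to reduce everything to a single periodicity relation and then read off both statements. First I would establish the core identity: for every $k \in \mathbb{N}_{\geq 1}$,
$(k + (n-1)) \cdot c = k \cdot c$.
This is immediate from $(k + n - 1)\cdot c = (k-1)\cdot c + n\cdot c = (k-1)\cdot c + c = k\cdot c$, using the hypothesis $n\cdot c = c$ together with $k - 1 \geq 0$. Statement~\ref{item:mlc} is then a direct consequence: if $[m] = [l]$ with $m, l \geq 1$, taking $m \geq l$ without loss of generality gives $m = l + j(n-1)$ for some $j \geq 0$, and applying the periodicity identity $j$ times—the index never dropping below $l \geq 1$—yields $m \cdot c = l \cdot c$. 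Note that this direction does not use minimality of $n$.

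The converse (second statement) is where minimality enters. Given $m \cdot c = l \cdot c$ with, say, $m > l \geq 1$, I set $d \defeq m - l > 0$. A computation analogous to the one above, but shifting by $d$, shows $(k+d)\cdot c = (k-l)\cdot c + (l+d)\cdot c = (k-l)\cdot c + l\cdot c = k\cdot c$ for all $k \geq l$; thus the tail sequence $(k \cdot c)_{k \geq l}$ admits both $d$ and $n-1$ as periods. A B\'ezout argument then promotes these to the single period $g \defeq \gcd(d, n-1)$, giving $(k+g)\cdot c = k\cdot c$ for all $k \geq l$, where one is careful to apply the two periods only in the direction that increases the index, so as to remain in the valid range $k \geq l$.

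The delicate point—and the main obstacle—is that a commutative monoid has no cancellation, so one cannot simply ``subtract'' to transport the period-$g$ relation down to small indices. The trick is to evaluate only at indices $k \equiv 1 \pmod{n-1}$: for such $k \geq 1$ the period-$(n-1)$ identity gives $k \cdot c = c$, and choosing $k \geq l$ large we also have $(k+g)\cdot c = k\cdot c = c$. Since $k + g \equiv 1 + g \pmod{n-1}$, the already-proved statement~\ref{item:mlc} identifies $(k+g)\cdot c$ with $(1+g)\cdot c$, so that $(1+g)\cdot c = c$. As $d \geq 1$ and $n - 1 \geq 1$ we have $g \geq 1$, whence $1 + g > 1$; minimality of $n$ then forces $n \leq 1 + g$, i.e.\ $n - 1 \leq g$. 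Combined with $g \mid (n-1)$, which gives $g \leq n-1$, this yields $g = n-1$, and therefore $(n-1) \mid d$, that is $[m] = [l]$, completing the proof.
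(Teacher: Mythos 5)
Your proof is correct, but it takes a genuinely different route from the paper's. The paper argues directly on remainders: for statement 1 it inducts on the number of $(n-1)$-steps using $n\cdot c = c$ to collapse $l\cdot c$ down to $m\cdot c$; for statement 2 it splits into the cases $[m]\geq 1$ and $[m]=0$, reduces both sides to $[m]\cdot c = [l]\cdot c$ (resp.\ $(n-1)\cdot c = [l]\cdot c$) via statement 1, and then derives $(n+[m]-[l])\cdot c = c$ (resp.\ $([l]+1)\cdot c = c$) to invoke minimality of $n$. You instead treat $(k\cdot c)_{k\geq 1}$ as an eventually periodic sequence, show it admits both $n-1$ and $d = m-l$ as (one-directional) periods on the tail $k\geq l$, and use B\'ezout to promote these to the period $g=\gcd(d,n-1)$; evaluating along $k\equiv 1\pmod{n-1}$ then yields $(1+g)\cdot c=c$, and minimality forces $g=n-1$, hence $(n-1)\mid d$. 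Your handling of the two delicate points is sound: the B\'ezout combination is applied only with index-increasing steps (writing $g = xd - y(n-1)$ with $x,y\geq 0$ and comparing $k+g+y(n-1)=k+xd$), which sidesteps the lack of cancellation, and the transport down to index $1+g$ goes through statement 1 rather than subtraction. What your approach buys is conceptual clarity and generality: it is exactly the index--period analysis of the monogenic submonoid generated by $c$, it isolates precisely where minimality of $n$ enters, and it avoids the paper's separate $[m]=0$ computation. What it costs is the extra gcd/B\'ezout machinery, which the paper's more hands-on remainder manipulation does not need.
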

 \begin{proof}
 1) 
Assume $m \leq l$ without loss of generality.
 Since $[m] = [l]$, there exists $k \in \mathbb{N}$ such that $l = m+(n-1) \cdot k$.
 We show $m \cdot c = (m + (n-1) \cdot k) \cdot c$ by induction on $k$.
 If $k \leq m$, 
  $l \cdot c = (m-k+n\cdot k)\cdot c = (m-k) \cdot c + (n \cdot k) \cdot c = (m-k) \cdot c + k \cdot c = m \cdot c$.
  Otherwise, 
  $l \cdot c = (m+(n-1) k) \cdot c = (m+(n-1)(k-m) + (n-1) m)\cdot c = (m + (n-1) m)\cdot c = m \cdot c$ by induction hypothesis.

  2) 
  Assume
  $[m] \leq [l]$ without loss of generality.
  If $[m] \geq 1$, 
  by~\cref{lem:mlc}.\ref{item:mlc}, 
  it follows that $m \cdot c = [m] \cdot c$ and $l \cdot c = [l] \cdot c$, which implies $[m] \cdot c = [l] \cdot c$.
  Then it follows that $(n+[m]-[l]) \cdot c = (n+[l]-[l]) \cdot c = n \cdot c = c$.
  Since $n$ is the least natural number such that $n > 1$ and $n \cdot c = c$,
  we obtain $[m] = [l]$.

  It is clear that if $[m] = 0 = [l]$.
  If $[m] = 0$ and $[l] \geq 1$,
  we have $k \geq 1$ such that $m = (n-1)k \cdot c$ since $m \geq 1$.
  If $k > 1$, then $((n-1)\cdot (k-2) + (n-1) \cdot 2) \cdot c 
  = ((n-1)\cdot (k-2) + (n+n-2)) \cdot c 
  = ((n-1)\cdot (k-2) + (1+n-2)) \cdot c 
  = ((n-1)\cdot (k-1)) \cdot c$.
  Therefore, $m \cdot c = (n-1) \cdot c$ holds.
  By~\cref{lem:mlc}.\ref{item:mlc}, 
  $l \cdot c = [l] \cdot c$ holds, which implies $(n-1) \cdot c = [l] \cdot c$.
  Since $n$ is the least natural number such that $n > 1$ and $n \cdot c = c$,
  $[l] = 0$.
 \end{proof}

 \begin{proof}[Proof of~\cref{thm:monoid_lambda}]
  1) It is easy to see that $\lambda^b$ is natural,
  noting that
  $N(a' + a'') = N(a') + N(a'')$ for each $a', a'' \in A$.

  For each $n\in \mathbb{N}_{\geq 1}$, we use the function $f'_n \in F_A(\setn{n+2})$ given by $f'_n(i) \defeq a$ for each $i \in \setn{n}$ and $f'_n(n+1) = f'_n(n+2) \defeq 0$.
  Let us define the inverse of $\lambda^{(\_)}$.
  For $\lambda\colon F_A \Rightarrow F_B$,
  define $b\colon \mathbb{N} \to B$ by $b(0) = 0$ and $b(n) = \lambda_{\setn{n+2}}(f'_n)(1)$.

  We show that $\lambda^{(\_)}$ is a left inverse of $b^{(\_)}$, that is, $\lambda^{(\_)} \circ b^{(\_)} = \id$.
  For each $\lambda\colon F_A \Rightarrow F_B$,
  applying \cref{lem:lambda_m_n} for $\big(f, x, a, N(f(x)), \sum_{x' \in X \setminus \{x\}}N(f(x')), =, =\big)$, 
  we obtain 
  $\lambda_X(f)(x) 
  = N\big(f(x)\big) \cdot b\big(\sum_{x \in X} N(f(x))\big) 
  = \lambda^b_X(f)(x)$ for each $f \in F_A(X)$ and $x \in X$.

  We then show that $\lambda^{(\_)}$ is a right inverse of $b^{(\_)}$, that is, $b^{(\_)}\circ \lambda^{(\_)} = \id$.
  By definition of $f'_n$, for each $b \in B^{\mathbb{N}}$ with $b(0) = 0$
  and $n \in \mathbb{N}_{\geq 1}$,
  $\lambda^b_{\setn{n+2}}(f'_n)(1) = b(n)$.
  
  2)  
  We first show that $\lambda^b$ is natural.
  For each $g\colon X \to Y$, $f \in F_A(X)$, and $y \in Y$,
  \begin{align*}
    F_B(g)(\lambda^b_X(f))(y)
    &= \sum_{x \in g^{-1}(y)} N(f(x)) \cdot b\big([\sum_{x \in X}N(f(x))]\big)  \\
    &= \left(\sum_{x \in g^{-1}(y)} N(f(x))\right) \cdot b\Big(\big[\sum_{x \in X}N(f(x))\big]\Big), \\
    \lambda^b_Y(F_A(g)(f))(y)
    &= N(\sum_{x \in g^{-1}(y)} f(x)) \cdot b\Big(\big[\sum_{y \in Y}N(\sum_{x \in g^{-1}(y)} f(x))\big]\Big).
  \end{align*}
  By~\cref{lem:mlc},
  \begin{displaymath}
  [N(\sum_{x \in g^{-1}(y)} f(x))] = [\sum_{x \in g^{-1}(y)} N(f(x))] \text{ and } 
  \big[\sum_{x \in X}N(f(x))\big] = \big[\sum_{y \in Y}N(\sum_{x \in g^{-1}(y)} f(x))\big],
  \end{displaymath}
  and thus we also obtain 
  $F_B(g)(\lambda^b_X(f))(y) = \lambda^b_Y(F_A(g)(f))(y)$.

  Let us define the inverse of $\lambda^{(\_)}$.
 For $\lambda\colon F_A \Rightarrow F_B$,
  define $d^\lambda\colon \mathbb{N}_{\geq 1} \to \{c \in B \mid n \cdot c = c\}$ 
  and $b^\lambda\colon \{0, \cdots, n-2\} \to \{c \in B \mid n \cdot c = c\}$ 
  by 
  \begin{displaymath}
    d^\lambda(m) \coloneqq \lambda_{\setn{m+2}}(f'_m)(1),
    \quad
    b^\lambda(m) \coloneqq d^\lambda([m-1]+1),
  \end{displaymath}
  where $f'_m(i) = a$ for each $i \in \setn{m}$ and $f'_m(m+1) = f'(m+2) = 0$.
  By~\cref{lem:monoid_lambda}.\ref{item:idempotent},
  we have
  $n \cdot \lambda_{\setn{m+2}}(f'_m)(1) = \lambda_{\setn{m+2}}(f'_m)(1)$, which ensures $d^\lambda(m), b^\lambda(m) \in \{c \in B \mid n \cdot c = c\}$.

  We show that $\lambda^{(\_)}$ is a left inverse of $b^{(\_)}$, that is, $\lambda^{(\_)} \circ b^{(\_)} = \id$.
  Let $\lambda\colon F_A \Rightarrow F_B$ be any natural transformation.
  For each $m \in \mathbb{N}_{\geq 1}$,
   by \cref{lem:lambda_m_n} for $(f'_{m+n-1}, 1, a, 1, m-1, =, =)$, it follows that
  \begin{equation} \label{eq:bm}
    d^\lambda(m) = d^\lambda(m+n-1).
  \end{equation}
  Therefore, 
  \cref{lem:lambda_m_n} for $(f, x, a, N(f(x)), \sum_{x' \in X \setminus \{x\}}N(f(x')), =, =)$  implies that
  for each $f \in F_A(X)$ and $x \in X$,
  \begin{align*}
  \lambda_X(f)(x) 
  &= N\big(f(x)\big) \cdot d^\lambda\big(\sum_x N(f(x))\big)  \\
  &= N\big(f(x)\big) \cdot b^\lambda\big([\sum_x N(f(x))]\big) 
  = \lambda^{(b^\lambda)}_X(f)(x).
  \end{align*}
  The second equality holds because 
  $d^\lambda(\sum_x N(f(x))) = d^\lambda([\sum_x N(f(x))-1]+1) = b^\lambda\big([\sum_x N(f(x))]\big)$ by \eqref{eq:bm} if $\sum_{x \in X}N(f(x)) > 0$,
  and $N(f(x)) = 0$ if $\sum_{x \in X}N(f(x)) = 0$.
  
  We show that $\lambda^{(\_)}$ is a right inverse of $b^{(\_)}$, that is, $b^{(\_)}\circ \lambda^{(\_)} = \id$.
  Let $b\colon \{0, \cdots, n-2\} \to \{c \in B \mid n \cdot c = c\}$ be any function.
  Then for each $i \in \{0, \cdots, n-2\}$, 
  $b(i) = 1 \cdot b([i+n-1]) = \lambda^b_{\setn{i+n+1}}(f'_{i+n-1})(1) = d^{(\lambda^b)}(i+n-1) = b^{(\lambda^b)}(i)$.
  The last equality holds since 
  if $i=0$ then $d^{(\lambda^b)}(n-1) = b^{(\lambda^b)}(0)$,
  and if $i>0$ then
  $d^{(\lambda^b)}(i+n-1) = d^{(\lambda^b)}(i) = b^{(\lambda^b)}(i)$ by definition of $b^\lambda$ and \eqref{eq:bm}.
\end{proof}

\subsection{Proof of~\cref{lem:subfunctor}} \label{ap:proof_subfunctor}
\begin{proof}
  Define $\lambda' \colon F_A \Rightarrow F_B$ by $\lambda'_X(f) \defeq \lambda_X(f)$ if $f \in F(X)$
  and $\Delta_0$ otherwise.
  This $\lambda'$ satisfies the required commutativity.
\end{proof}

\subsection{Proof of~\cref{prop:subfunc_pm}} \label{ap:proof_subfunc_pm}
\begin{proof}[Proof of 2, 3, 4]
  2) The statement can be proved in the same way as~\cref{prop:subfunc_pm}.\ref{item:pd}.
  Note that the unique natural transformation is given by $\lambda_X(S)(x) = 0$.
  3)
  Suppose that a natural transformation $\lambda\colon \mfunc \Rightarrow \mathcal{D}$ exists.
  Since $\mathcal{D}$ is a subfunctor of $F_{([0, 1], +, 0)}$, 
  \cref{lem:subfunctor} and  \cref{lem:monoid_lambda}.\ref{item:idempotent} with $n=0$ imply that
  for any $X \in \sets$,
  $\lambda_X(\Delta_0)$ should be 
  the constant zero function, which is not contained in $\mathcal{D}(X)$. 
  Therefore,
  no natural transformation $\mfunc \Rightarrow \mathcal{D}$ exists.

  4) We define the inverse of $\lambda^{(\_)}$.
  Let $\lambda\colon \mfunc \Rightarrow \subdist$ be a natural transformation.
  By \cref{lem:subfunctor} and \cref{thm:monoid_lambda}.\ref{item:case_n}, there is a unique $b'\colon \mathbb{N} \to [0, 1]$ such that 
  $b'(0) = 0$ and
  for each $f \in \mfunc(X)$ and $x \in X$,
  $\lambda_X(f)(x) = f(x) \cdot b'\big(\sum_{x \in X}f(x)\big)$.
  Define $b\colon \mathbb{N} \to [0, 1]$ by $b(r) = r \cdot b'(r)$,
  noting that $\lambda_\setn{1}(\Delta_r)(1) \in \subdist(\setn{1})$ 
  ensures $r \cdot b'(r) \in [0, 1]$.
  It is easy to show that this mapping from $\lambda$ to $b$ is the inverse of $\lambda^{(\_)}$.
\end{proof}

\subsection{Proof of~\cref{prop:ereal_powfunc}} \label{ap:proof_ereal_powfunc}
\begin{proof}
  To show that $\lambda^{(\_)}$ is an isomorphism, we define its inverse
  $b^{(\_)}$ as follows: Given  $\lambda\colon \erealfunc \Rightarrow \powfunc$, define the function $b^{\lambda}\colon \nnegreal \to \bools$ by $b(0) \coloneqq 0$ and $b^{\lambda}(r) \coloneqq \lambda_\setn{1}(\Delta_{r})(1)$.

  We now prove that $\lambda^{(\_)}$ is the left inverse of $b^{(\_)}$. 
  Consider arbitrary $\lambda\colon \erealfunc \Rightarrow \powfunc$, $f \in \erealfunc(X)$, and $x \in X$.
  If $f(x) = 0$, the result 
  $\lambda_X(f)(x) = \lambda_X^{b^{\lambda}}(f)(x)$ follows trivially. Thus, we assume $f(x)  > 0$. 
  Then there exists $n, m \in \mathbb{N}$ such that $1 \leq n < m$ and $\frac{n}{m} \cdot \sum_{x' \in X}f(x') \leq f(x) \leq \frac{n+1}{m}\cdot \sum_{x' \in X}f(x')$.
  By \cref{lem:erealfunc_b}.\ref{item:b_lambda_1} when $B = (\bools, \lor, \bot)$,
  since all elements of $\bools$ are idempotent,
  there exists $b \in \mathbb{B}$ such that $\bigvee_{x' \in X \setminus \{x\}} \lambda_X(f)(x') = b \leq \lambda_X(f)(x)$.
  From this, we conclude that $\bigvee_{x' \in X} \lambda_X(f)(x') \leq \lambda_X(f)(x)$.
  Similarly,
  we also obtain $\lambda_X(f)(x) \leq \bigvee_{x' \in X}\lambda_X(f)(x')$.
  Since $\leq$ is a partial order on $\bools$, it follows that
  \begin{displaymath}
  \bigvee_{x' \in X}\lambda_X(f)(x') = \lambda_X(f)(x).
  \end{displaymath}
Moreover, the naturality of $\lambda$ for $!_X$ gives
$\bigvee_{x' \in X}\lambda_X(f)(x') = \lambda_\setn{1}(\Delta_{r})(1)$
where $r \defeq \sum_{x' \in X}f(x')$,
implying the equation $\lambda_X(f)(x) = \lambda_\setn{1}(\Delta_{r})(1) = \lambda_X^{b^{\lambda}}(f)(x)$. 

It is easy to prove that $\lambda^{(\_)}$ is the right inverse, completing the proof.
\end{proof}

\subsection{Proof of~\cref{cor:subdist_ereal}} \label{ap:proof_subdist_ereal}
\begin{proof}
  1, 2) Statement~\ref{item:subdist_1} and~\ref{item:subdist_2} can be proved in a similar manner.
  Here we only provide the proof for first one.
  It is clear that $\lambda^b$ is a natural transformation.

  We define the inverse of $\lambda^{(\_)}$ as follows.
  Let $\lambda$ be any natural transformation from $\subdist$ to $\erealfunc$.
  Since $\subdist$ is a subfunctor of $\erealfunc$,
  by \cref{lem:subfunctor} and \cref{prop:ereal_ereal}, there is a unique $b'\colon \nnegreal \to \nnegreal$ such that 
  $b'(0) = 0$,
  $\lambda_X(f)(x) = f(x) \cdot b'(\sum_{x \in X}f(x))$ 
  for each $f \in \subdist(X)$ and $x \in X$,
  and $b'(r) = 0$ for each $r > 1$.
  Define $b^\lambda\colon [0, 1] \to \nnegreal$ by restricting $b'$ to the domain $[0, 1]$.
  Then it is easy to show that this $b^{(\_)}$ is the inverse of $\lambda^{(\_)}$.

  3) We define the inverse of $\lambda^{(\_)}$.
  Let $\lambda\colon \subdist \Rightarrow \subdist$ be a natural transformation.
  By \cref{lem:subfunctor} and \cref{prop:ereal_ereal}, there is a unique $b'\colon \nnegreal \to \nnegreal$ such that 
  $b(0) = 0$,
  $\lambda_X(f)(x) = f(x) \cdot b'\big(\sum_{x \in X}f(x)\big)$
  and
  $\sum_{x \in X} f(x) \cdot b'\big(\sum_{x \in X}f(x)\big) \leq 1$
  for each $f \in \subdist(X)$ and $x \in X$,
  and $b'(r) = 0$ for each $r > 1$.
  Define $b^\lambda\colon [0, 1] \to [0, 1]$ by $b^\lambda(r) = r \cdot b'(r)$.
  It is easy to show that $b^{(\_)}$ is the inverse of $\lambda^{(\_)}$.
\end{proof}

\section{Omitted Proofs in Section~\ref{sec:mcmfas}}
\label{sec:proofsMcsMfas}

\subsection{Proof of~\cref{prop:correctmcmfas}}
\label{subsec:proofcorrect}
We prove that the product $c\otimes_{\lambda} d$ and the inference map $q$ satisfy the correctness criterion (\cref{prop:correctCriterion}). 
This is immediate by the following equations: 
\begin{align*}
  &(q\circ \tau_S\times \tau_R)(\nu, a, \delta) =  \sum_{w} \tau_S(\nu, a)(w)\cdot \tau_R(\delta)(w)\\
  &= \nu(\checkmark) \cdot \delta(a)(\checkmark) + \sum_{w} \big(\sum_{\sigma} \nu(\sigma)\cdot \sigma(w)\big)\cdot \big(\sum_{f} \delta(a)(f)\cdot f(w)\big),\\
  &(\tau_{S\otimes R}\circ F_{S\otimes R}(q)\circ \lambda_{\Omega_S, \Omega_R})(\nu, a, \delta) = \nu(\checkmark)\cdot \delta(a)(\checkmark) + \sum_{r} r\cdot \big((F_{S\otimes R}(q)\circ \lambda_{\Omega_S, \Omega_R})(\nu, a, \delta) \big)(r)\\
  &= \nu(\checkmark)\cdot \delta(a)(\checkmark) + \sum_{(\sigma, f)} q(\sigma, f)\cdot \big(\lambda_{\Omega_S, \Omega_R})(\nu, a, \delta) \big)(\sigma, f)\\
  &= \nu(\checkmark)\cdot \delta(a)(\checkmark) + \sum_{(\sigma, f)} q(\sigma, f)\cdot \nu(\sigma)\cdot \delta(a)(f)\\
  &= \nu(\checkmark)\cdot \delta(a)(\checkmark) + \sum_{(\sigma, f)} \big( \sum_{w}f(w)\cdot \sigma(w)\big) \cdot \nu(\sigma)\cdot \delta(a)(f)\\
  &= \nu(\checkmark)\cdot \delta(a)(\checkmark)+ \sum_{w} \big(\sum_{\sigma} \nu(\sigma)\cdot \sigma(w)\big)\cdot \big(\sum_{f} \delta(a)(f)\cdot f(w)\big). \qed
\end{align*}

\subsection{Proof of \cref{prop:lambda_uniqueness}}
\label{subsec:proofunique}
\begin{proof}
  Suppose that $\lambda\colon \times \circ \left((\subdist(\_) \times A)\times (\mfafunc{(\_)}) \right)\rightarrow\mcmfafunc{(\_)} \circ \times$
  is a distributive law satisfying the properties 1 and 2.
  Using property 2 and the natural isomorphism $\mcmfafunc{(\_)} \Rightarrow \erealfunc(\_) \times \nonnegreal$,
  there is a bijective correspondence between
  $\lambda$ and a pair of $\rho'\colon \times \circ \left(\subdist(\_)\times \mfunc{((\_)+\{\checkmark\})} \right)\rightarrow \exreal$
  and $\rho''\colon \times \circ \left(\subdist(\_)\times \mfunc{((\_)+\{\checkmark\})} \right)\rightarrow \erealfunc(\_) \circ \times$.
  We first describe key properties of $\rho'$ and $\rho''$.
  
  \begin{itemize}
    \item By naturality of $\rho'$, 
    for each $\sigma \in \subdist(X)$ and $\delta \in \mfunc(Y+\{\checkmark\})$,
    \begin{displaymath}
      \rho'_{X, Y}(\sigma, \delta) = \rho'_{\setn{1}, \setn{1}}\big(\Delta_{r}, \mfunc(!_Y+\{\checkmark\})(\delta)\big),
    \end{displaymath}
    where $r \coloneqq \sum_{x \in X}\sigma(x)$.
    \item For each $\sigma \in \subdist(X)$ and $n \in \mathbb{N}$,
    define a natural transformation $\alpha^{\sigma, n}\colon \mfunc \Rightarrow \erealfunc$
    by $\alpha^{\sigma, n}_{Y} = \erealfunc(\pi_2) \circ \rho''_{X, Y} \circ \langle \sigma, f_n \rangle$ where $f_n\colon \mfunc \Rightarrow \mfunc((\_)+\{\checkmark\})$ defined by $(f_n)_Y(d) = [d, \Delta_n]$ for each $d \in \mfunc(Y)$.
    By~\cref{eg:mfunc_natural_trans}, 
    there exists a funciton $b_{\sigma, n}\colon \mathbb{N}_{\geq 1} \to \nnegreal$ such that 
    \begin{equation} \label{eq:mfa_alpha}
      \alpha^{\sigma, n}_Y(d)(y) =
      d(y) \cdot b_{\sigma, n}\big(\sum_{y \in Y} d(y)\big),
    \end{equation}
    where $b_{\sigma, n}(0) = 0$.

    Next, for each $\delta \in \mfunc(Y+\{\checkmark\})$ and $y \in Y$,
    define a natural transformation $\beta^{\delta, y}\colon \subdist \Rightarrow \erealfunc$
    by
    $\beta^{\delta, y}_X = \gamma^y \circ \rho''_{X, Y} \circ \langle \id, \delta \rangle$
    where $\gamma^y\colon \erealfunc((\_) \times Y) \Rightarrow \erealfunc$ is given by $\gamma^y_X(g) = g \circ \langle \id, \Delta_y \rangle$ for each $g \in \erealfunc(X \times Y)$.
    Then~\cref{cor:subdist_ereal}.\ref{item:subdist_1} implies that 
    there is $b'_{\delta, y}\colon [0, 1] \to \nnegreal$ with $b'_{\delta, y}(0) = 0$ such that
    \begin{equation} \label{eq:beta_b}
    \beta^{\delta, y}_X(\sigma)(x) 
    = \sigma(x) \cdot b'_{\delta, y}\big(\sum_{x \in X}\sigma(x)\big).
    \end{equation}
    Then the following equalities hold:
    \begin{align*}
      \sum_{x \in X}\sigma(x) \cdot b'_{\delta, y}(\sum_{x \in X}\sigma(x))
      &= \sum_{x \in X}\beta^{\delta, y}_X(\sigma)(x) &&\text{by }\eqref{eq:beta_b} \\
      &= \sum_{x \in X}\rho''_{X, Y}(\sigma, \delta)(x, y) &&\text{by definition of }\beta \\
      &= \alpha^{\sigma, \delta(\checkmark)}_Y(\delta \circ \kappa_1)(y) &&\text{by definition of }\alpha\\
      &= \delta(y) \cdot b_{\sigma, \delta(\checkmark)}\big(\sum_{y \in Y} \delta(y)\big). &&\text{by }\eqref{eq:mfa_alpha}
    \end{align*}
    Thus, 
    for each $\sigma \in \subdist(X)$, $\delta \in \mfunc(Y+\{\checkmark\})$, $x \in X$, and $y \in Y$,
    one deduces that
    \begin{displaymath}
      \rho''_{X, Y}(\sigma, \delta)(x, y) 
      = \beta^{\delta, y}_X(\sigma)(x)
      =
      \begin{cases}
      \frac{\sigma(x)}{\sum_{x \in X}\sigma(x)} \cdot \delta(y) \cdot b_{\sigma, \delta(\checkmark)}(\sum_{y \in Y} \delta(y)) &\text{if }\sum_{x \in X}\sigma(x) > 0, \\
      0 &\text{otherwise}.
      \end{cases}
    \end{displaymath}

    Applying naturality of $\rho''$ for $!_X$ and $!_Y$, we find that
    for each $\sigma \in \subdist(X)$ and $\delta \in \mfunc(Y+\{\checkmark\})$ such that  $\sum_{x \in X}\sigma(x) > 0$,
    \begin{align*}
    \sum_{y \in Y} \delta(y) \cdot b_{\sigma, \delta(\checkmark)}(\sum_{y \in Y} \delta(y))
    &= \sum_{x \in X, y \in Y}\rho''_{X, Y}(\sigma, \delta)(x, y) \\
    &= \rho''_{\mathbf{1}, \mathbf{1}}(\Delta_{r}, (1 \mapsto \sum_{y \in Y}\delta(y), \checkmark \mapsto \delta(\checkmark)))(1, 1) \\
    &= \sum_{y \in Y}\delta(y) \cdot b_{\Delta_{r}, \delta(\checkmark)}\big(\sum_{y \in Y} \delta(y)\big),
    \end{align*}
    where $r \coloneqq \sum_{x \in X}\sigma(x)$.
    Therefore, noting that $b_{\sigma, \delta(\checkmark)}(0) = b_{\Delta_{r}, \delta(\checkmark)}(0)$,
    it follows that $b_{\sigma, \delta(\checkmark)}(\sum_{y \in Y} \delta(y)) = b_{\Delta_{r}, \delta(\checkmark)}(\sum_{y \in Y} \delta(y))$, so that we obtain 
    \begin{displaymath}
    \rho''_{X, Y}(\sigma, \delta)(x, y) = \frac{\sigma(x)}{\sum_{x \in X}\sigma(x)} \cdot \delta(y) \cdot b_{\Delta_{r}, \delta(\checkmark)}\big(\sum_{y \in Y} \delta(y)\big)
    \end{displaymath}
    if $\sum_{x \in X}\sigma(x) > 0$.
  \end{itemize}

  By the correctness, for each $(\sigma, a) \in \subdist(\mathbf{\Omega}_S) \times A$ and $\delta \in \mfafunc{\mathbf{\Omega}_R}$,
  \begin{align}
    \begin{split} \label{eq:mfa_correct}
    &\Big(1-\sum_{\mu \in \mathbf{\Omega}_S}\sigma(\mu)\Big) \cdot \delta(a)(\checkmark) + \sum_{w \in A^+} \Big(\sum_{\mu \in \mathbf{\Omega}_S} \sigma(\mu) \cdot \mu(w)\Big) \cdot \Big(\sum_{\delta' \in \mathbf{\Omega}_R}\delta(a)(\delta') \cdot \delta'(w)\Big) \\
    &= \rho'_{\mathbf{\Omega}_S, \mathbf{\Omega}_R}(\sigma, \delta(a)) + \sum_{\mu \in \mathbf{\Omega}_S, \delta' \in \mathbf{\Omega}_R}  \sum_{w \in A^+} \mu(w) \cdot \delta'(w) \cdot \rho''_{\mathbf{\Omega}_S, \mathbf{\Omega}_R}(\sigma, \delta(a))(\mu, \delta').
    \end{split}
  \end{align}
  
  For each $r \in [0, 1]$, $n_1, n_2 \in \mathbb{N}$,
  define $\sigma_1 \in \subdist(\mathbf{\Omega}_S)$ and $\delta_1 \in \mfafunc{\mathbf{\Omega}_R}$ by 
  \begin{align*}
    \sigma_1(x) \coloneqq 
    \begin{cases}
      r &\text{if }x = \Delta_0, \\
      0 &\text{otherwise,}
    \end{cases}&& 
    \delta_1(a')(i) \coloneqq 
    \begin{cases}
      n_1 &\text{if } i=\Delta_0, \\
      0 &\text{if } i \in \mathbf{\Omega}_R \setminus \{\Delta_0\}, \\
      n_2 &\text{if }i=\checkmark.
    \end{cases}
  \end{align*}
  Then \eqref{eq:mfa_correct} for $(\sigma_1, a)$ and $\delta_1$ 
  yields 
  $(1-r) \cdot n_2 = \rho'_{\mathbf{\Omega}_S, \mathbf{\Omega}_R}(\sigma_1, \delta_1(a)) = \rho'_{\setn{1}, \setn{1}}(\Delta_r, (1 \mapsto n_1, \checkmark \mapsto n_2))$.
  Therefore, $\rho'_{X, Y}(\sigma, \delta) = (1-\sum_{\mu \in \mathbf{\Omega}_S}\sigma(\mu)) \cdot \delta(\checkmark)$ for each $\sigma \in \subdist(X)$ and $\delta \in \mfunc(Y+1)$.
  
  For each $r \in (0, 1]$, $a \in A$, and $n_1, n_2 \in \mathbb{N}$ $(n_2 \geq 1)$,
  consider a word $w \in A^+$ and $\mu \in \mathbf{\Omega}_S$ defined by $\mu(w) = 1$ and $\mu(w') = 0$ for each $w' \in A^+ \setminus \{w\}$, and
  define 
  $\sigma_2 \in \subdist(\mathbf{\Omega}_S)$ and
  $\delta_2 \in \mfafunc{\mathbf{\Omega}_R}$ by 
  \begin{align*}
    \sigma_2(x) \coloneqq 
    \begin{cases}
      r &\text{if }x = \mu, \\
      0 &\text{otherwise,}
    \end{cases} 
    &&\delta_2(a')(i) \coloneqq 
    \begin{cases}
      n_1 &\text{if }a' = a \text{ and }i = \checkmark,  \\
      n_2 &\text{if }a'=a \text{ and }i=\delta', \\
      0 &\text{otherwise,}
    \end{cases}
  \end{align*}
  where $\delta' \in \mathbf{\Omega}_R$ is defined by $\delta'(w) = 1$ and $\delta'(w') = 0$ for each $w' \in A^+ \setminus \{w\}$.
  Then \eqref{eq:nfa_correct} for $(\sigma_2, a)$ and $\delta_2$ gives that
    $r \cdot n_2 
    = n_2 \cdot b_{\Delta_r, n_1}(n_2)$.
    Hence, $b_{\Delta_r, n_1}(n_2) = r$, and consequently,
      $\rho''_{X, Y}(\sigma, \delta)(x, y) = 
        \sigma(x) \cdot \delta(y)$  for each $\sigma \in \subdist(X)$ and $\delta \in \mfunc(Y+1)$.
        
  The distributive law $\lambda$ given by the pair $(\rho', \rho'')$ is the one defined in \cref{def:dist_mfa}. This establishes the existence and uniqueness of $\lambda$.
\end{proof}

\subsection{Reduction}
\label{subsec:reduction}

Let $c\colon X\rightarrow\mcfunc{X}$ be an MC with rational transition probabilities and $d\colon Y\rightarrow \mfafunc{Y}$ be an MFA. 
We construct an MC $e\colon X\times Y\rightarrow \mathcal{D}(X\times Y+\{\checkmark\})\times \nonnegrat$ from the product $c\otimes_{\lambda} d\colon X\times Y\rightarrow \mcmfafunc{X\times Y}$. 
Indeed, the construction of $e$ is very simple: The transition probabilities $P$ of $e$ are given by $P\big((x, y), z\big) \defeq \frac{(c\otimes_{\lambda} d)(x, y)(z)}{\sum_{z'} (c\otimes_{\lambda} d)(x, y)(z')}$.
Here we can assume $\sum_{z'} (c\otimes_{\lambda} d)(x, y)(z')> 0$ without loss of generality,  because if it is not the case then we create an additional sink state with reward $0$, and add an transition to the sink with the probability $1$. 
The reward function $R$ of $e$ is given by $R(x, y) \defeq \sum_{z} (c\otimes_{\lambda} d)(x, y)(z)$.

We show that the semantics of the product $c\otimes_{\lambda} d$ is precisely the expected multiplicative rewards over the traces that eventually reaches $\checkmark$ on $e$. 
Indeed, by the straightforward induction we can see that the semantics of $e$ defined by the following semantic structure $([0, \infty], \tau_e)$ is the expected multiplicative rewards: the modality $\tau_e \colon \mathcal{D}([0, \infty]+\{\checkmark\})\times \nonnegrat\rightarrow [0, \infty]$ is given by 
\[
\tau_e(\nu, q) \defeq q\cdot \big( \nu(\checkmark) + \sum_{r}\nu(r)\cdot r\big).
\] 
By the construction of $e$, its semantics is precisely the semantics of the product.

\else
\fi

\end{document}